\documentclass[11pt,a4paper]{article}

\usepackage{tikz}
\usepackage{authblk}
\usepackage[left=1in, right=1in, top=1in, bottom=1in]{geometry}
\usepackage{amsthm}
\usepackage{amsfonts}
\usepackage{mathrsfs}
\usepackage[pdftex]{hyperref}
\usepackage{amssymb}
\usepackage{amsmath}
\usepackage{cleveref}
\usepackage{xcolor}
\usepackage{graphicx}
\usepackage{xparse}
\usepackage{subcaption}
\PassOptionsToPackage{dvipdfmx}{graphicx}
\usepackage[utf8]{inputenc}
\usepackage[T1]{fontenc}
\usepackage[makeroom]{cancel}

\usepackage[toc,page]{appendix}

\usepackage{macros}

\begin{document}

\title{Localization Dichotomy for Hall Insulators on the Lattice}

\author{Alex Bols \thanks{email: \href{abols01@phys.ethz.ch}{abols01@phys.ethz.ch}}}
\affil{Institute for Theoretical Physics, ETH Z{\"u}rich}

\maketitle

\begin{abstract}
    We prove the localization dichotomy conjecture for non-periodic Hall insulators on the lattice. That is, we show that the Fermi projector $P$ of a gapped exponentially local Hamiltonian on $l^2(\Z^2) \otimes \C^n$ admits an exponentially localized Wannier basis if and only if $P$ has vanishing Chern number.
\end{abstract}

\setcounter{tocdepth}{2}
\tableofcontents

\section{Introduction} \label{sec:introduction}

It is well-known that the Fermi projection of a gapped and exponentially local single-particle Hamiltonian is again exponentially local, in the sense that its off-diagonal matrix elements decay exponentially fast in the distance. A stronger property for such a projector $P$ is the existence of an \emph{exponentially localized Wannier basis}, an orthonormal basis of $\Ran P$ comprised of states each of which is exponentially localized near some localization center. The existence of well-localized Wannier bases underpins a slew of efficient numerical methods for the computation of ground state properties \cite{marzari2012maximally}, as well as the modern theory of polarization \cite{resta1992theory, king1993theory, resta1993macroscopic}.

The notion of well-localized bases for the Fermi projection was introduced by Wannier in \cite{wannier1937structure}, where such bases were constructed in the periodic setting by integrating Bloch functions over the Brillouin zone. Since then a large number of works have appeared on the existence of well-localized Wannier functions in the periodic case \cite{kohn1959analytic, descloizeaux1964energy, marzari1997maximally, brouder2007exponential, kivelson1982wannier}. It emerged that the existence of well-localized Wannier bases is closely tied to topological triviality of the Bloch bundle \cite{monaco2018optimal}. In particular, a translation invariant gapped Fermi projector in two dimensions admits an exponentially localized Wannier basis if and only if its \emph{Chern number} vanishes. This result was dubbed the \emph{localization dichotomy for periodic insulators} in \cite{marcelli2019haldane}.

In the same paper \cite[~Sect. 5]{marcelli2019haldane} the authors introduced the \emph{localization dichotomy conjecture} for two-dimensional non-periodic insulators, which states that a gapped Fermi projection admits an exponentially localized Wannier basis if and only if its Chern number vanishes, also in the non-periodic case. In this paper we prove this localization dichotomy conjecture for non-periodic insulators on the lattice $\Z^2$. The main novelty of this work consists in proving the implication 
$$ 
\Ch(P) = 0 \quad \implies \quad P \, \text{admits an exponentially localized Wannier basis.}
$$
The converse was already shown for two-dimensional insulators in the continuum in \cite{marcelli2023localization}. We also provide a particularly simple proof of this implication in the lattice setting, see Theorem \ref{thm:rapidly local Wannier basis implies Ch(P) = 0}.

\subsection{Strategy of proof}

In order to prove that an exponentially local projector $P$ on $l^2(\Z^2) \otimes \C^n$ admits an exponentially localized Wannier basis if $\Ch(P) = 0$ we will decompose
$$
    P = \sum_{j \in \Z} P_j
$$
as a sum over an orthogonal family of rapidly local projections $P_j$, each of which is localized near the horizontal line $\Hyper_j$ at height $jL$, for some length scale $L$. Rapid locality means off-diagonal decay that is faster than any inverse power in the distance. The class of rapidly local operators is stable under the smooth functional calculus (Proposition \ref{prop:rapid locality is preserved by smooth functions}).

The projections $P_j$ describe quasi one-dimensional systems, which can be shown to admit rapidly localized Wannier bases by the methods of \cite{nenciu1993existence, nenciu1998existence}. We provide an alternative proof of this fact adapted to our setup (Theorem \ref{thm:1D rapidly local Wannier}).

The crucial difficulty therefore lies in obtaining the decomposition $P = \sum_j P_j$. To achieve this we construct half-space projectors $F_j$ whose difference with the compressed half-space projectors $P \Pi_j P$ is rapidly localized near the cut $\Hyper_j$, where $\Pi_j$ projects onto the upper half-plane above $\Hyper_j$ (Proposition \ref{prop:reducing half-space projectors}). The differences $D_j = F_{j} - F_{j+1}$ are then approximate projectors which are approximately orthogonal to each other, and sum to $P$. In a final step we use functional calculus to turn the $D_j$ into actual projections, and then rotate them slightly to form the required family $\{ P_j \}_{j \in \Z}$ (Proposition \ref{prop:exact strip decomposition}).

The key new idea is in the construction of the half-space projections $F_j$. The compressions $P \Pi_j P$ are generally not projections, in fact the bulk-boundary correspondence \cite{kellendonk2002edge,elbau2002equality,prodan2016bulk, bols2022absolutely} implies that if $\Ch(P) \neq 0$ then $\sigma(P \Pi_j P) = [0, 1]$ fills the whole interval. Note however that if the spectrum of $P \Pi_j P$ were to have a gap in $(0, 1)$, uniformly in $j \in \Z$, then we could use the smooth functional calculus to push all the spectrum above the gap to 1, and all the spectrum below the gap to 0, obtaining a projector $F_j$ which inherits the locality of $P \Pi_j P$. This sort of condition is very similar to the uniform gaps assumption in the compressed position operator made in \cite{lu2022existence}. By the bulk-boundary correspondence, such gaps cannot exist if $\Ch(P) \neq 0$. However, even if $\Ch(P) = 0$ such a gap is not guaranteed. Indeed, the bulk boundary correspondence for topological insulators protected by time-reversal symmetry implies that $\sigma(P \Pi_j P)$ also fills the whole interval $[0, 1]$ if the time-reversal symmetric $P$ has a non-trivial Fu-Kane-Mele invariant \cite{graf2013bulk, fonseca2020two, bols2023fredholm, bols2024absolutely}, while $\Ch(P) = 0$ for such time-reversal invariant insulators.

In order to overcome this difficulty, we must show that if $\Ch(P)=0$ then the $P \Pi_j P$ can be perturbed in a locality-respecting way to open a uniform gap. This is achieved as follows. We define the \emph{boundary unitaries}
$$
U_j := \ed^{2 \pi \iu P \Pi_j P}.
$$
These unitaries have $U_j - \I$ rapidly local and localized near the horizontal cut $\Hyper_j$. In other words, they are quasi-one dimensional unitaries, and we can associate an integer \emph{flow} to them (for example, the flow of the bilateral shift on $l^2(\Z)$ equals one). One version of the bulk-boundary correspondence \cite{bols2023fredholm} then states that
$$
\Flow(U_j) = \Ch(P).
$$
Under the hypothesis that $\Ch(P) = 0$, we therefore have that all boundary unitaries $U_j$ have zero flow.

Using Lin's theorem
\cite{lin1993almostcommuting,friis1996almost,hastings2009making,herrera2020hastings} on almost commuting matrices, we then prove that one-dimensional rapidly local unitaries with zero flow can be perturbed in a locality respecting way to open up a gap around -1 (Proposition \ref{prop:gapping 1D unitaries}). Applying this to the boundary unitaries produces rapidly local unitaries $V_j$ which are norm-close to $U_j$ and have a uniform gap around -1 (Lemma \ref{lem:gapping out boundary unitaries}). This construction is followed by a spectral gluing construction, producing rapidly local unitaries $W_j$ which are norm-close to $U_j$, retain the gap around -1, and agree with $U_j$ on a spectral interval around 1 (Lemma \ref{lem:spectral gluing}). The latter condition allows us to `invert the exponential' to obtain operators $G_j$ which satisfy $W_j = \ed^{2 \pi \iu G_j}$ and such that the difference $G_j - P \Pi_j P$ is rapidly localized near $\Hyper_j$, leading to a proof of Proposition \ref{prop:reducing half-space projectors}. (It is the latter property which is guaranteed by the condition that $W_j$ agree with $U_j$ on a spectral interval around 1). Crucially, $G_j$ now inherits a gap at 1/2 from $W_j$'s gap at -1. We can now use the smooth functional calculus to turn the $G_j$ into the projectors $F_j$. We thereby obtain a rapidly localized Wannier basis for $P$ (Theorem \ref{thm:2D rapidly local Wannier}).

In the final step, we use exponential locality of $P$ to upgrade this rapidly local Wannier basis to an exponentially localized one (Theorem \ref{thm:rapid to exponential}).

\subsection{Relation to prior work}

A position-space approach to the construction of localized Wannier bases was already proposed by Kivelson, who used
eigenstates of a projected position operator to define Wannier functions
in one-dimensional disordered systems \cite{kivelson1982wannier}.

This method was further developed by Nenciu and Nenciu
\cite{nenciu1993existence,nenciu1998existence} who proved existence of exponentially localized Wannier functions for gapped systems in one dimension in the non-periodic setting. They show that the compressed position operator $PXP$ has discrete spectrum, and conclude that the corresponding eigenstates form the required Wannier basis.

To get a result in the two-dimensional case Lu, Stubbs, and Watson followed a similar approach in \cite{lu2022existence}. Under the assumption that the compressed position operator $PXP$ has uniform spectral gaps, they show that the spectral projections associated to the spectral patches of $P X P$ are local projections localized near lines of constant $X$-coordinate. These projections can in turn be shown to admit localized Wannier bases using the methods of \cite{nenciu1993existence, nenciu1998existence}. This work is a clear precedent for the strategy followed in this paper, to the extent that they also first chop up the Fermi projection into strips. However, the method of obtaining this strip decomposition is significantly different, the present paper does not rely on any assumptions on the exponentially local projector $P$ other than the vanishing of its Chern number.

The improvement of an already localized basis to an exponentially
localized one also has a non-periodic precedent. Lu and Stubbs proved
that sufficiently strong algebraic localization implies exponential
localization in two dimensions \cite{lu2021algebraic}. Theorem~\ref{thm:rapid to exponential} provides
a direct lattice argument, valid in arbitrary dimension, starting from
rapid localization.

\paragraph{Organization of the paper.}
Section~\ref{sec:setup and main results} gives the essential definitions and the main theorem. Section~\ref{sec:locality and functional calculus} develops
the locality estimates and functional calculus techniques used throughout the proof.
Section~\ref{sec:boundary unitaries and the Chern number} introduces the boundary unitaries and their flow indices, and relates these invariants to the bulk Chern number.
Section~\ref{sec:gapping 1D unitaries} proves the gapping result for 1D unitaries with trivial flow index. Section~\ref{sec:rapidly localized Wannier bases in 2D} constructs
the strip decomposition and rapidly localized Wannier bases for Chern trivial rapidly local projectors. Section~\ref{sec:rapid to exponential}
upgrades rapid localization to exponential localization, and Section~\ref{sec:rapid Wannier implies Ch(P) = 0}
proves that a rapidly localized Wannier basis implies Chern triviality.

\paragraph{Usage of LLMs.} This work was produced with the help of ChatGPT 5.6. Most significantly, the proof strategy for Proposition \ref{prop:gapping 1D unitaries} was generated by the LLM. This strategy applies Lin's theorem on almost commuting matrices \cite{lin1993almostcommuting}, and otherwise uses techniques introduced in the related literature \cite{hastings2009making, herrera2020hastings}. In addition, through interacting with the LLM the author learned the useful trick of associating to an almost-orthogonal set $\{w_{i} \}_{i \in \caI} \subset \caH$ a map $\Phi : l^2(\caI) \to \caH : e_i \mapsto w_i$ which is norm-close to an isometry, and which can then be deformed to an isometry using the functional calculus. This technique goes back to \cite{lowdin1950non}, see also \cite{frank2002symmetric, grochenig2004localization, benzi2013decay}. Variations of this trick are used many times throughout the manuscript. 

\paragraph{Acknowledgments.} The author would like to thank Simone Warzel for helpful discussions, and for sharing her LLM generated proof of a Hamiltonian version of Proposition \ref{prop:gapping 1D unitaries}.

\section{Setup and main results} \label{sec:setup and main results}

Consider the Hilbert space $\H{d}{n} = l^2(\Z^d) \otimes \C^n$. Let $\{ \delta_x \}_{x \in \Z^d}$ denote the position basis of $l^2(\Z^d)$ and let $\{ e_{i} \}_{i = 1}^n$ be the canonical basis of $\C^n$. Then $\delta_{x, i} := \delta_x \otimes e_i$ are the elements of an orthonormal basis of $\H{d}{n}$. For any $x \in \Z^d$, let $\iota_x : \C^n \to \H{d}{n}$ be the injection defined by $\iota_x e_i = \delta_{x, i}$. For any $x, y \in \Z^d$ and any $A \in \caB(\H{d}{n})$ we write
$$ A(x, y) := \iota^*_x A \iota_y $$
for the matrix-valued kernel of $A$. We write $\abs{A(x, y)}$ for the spectral norm of $A(x, y)$.

We consider $\Z^d$ as a subset of $\R^d$, and equip both $\Z^d$ and $\R^d$ with the Euclidean metric which we denote by $\abs{x - y} = \sqrt{ (x_1 - y_1)^2 + \cdots + (x_d - y_d)^2 }$. An operator $A \in \caB(\H{d}{n})$ is called \emph{exponentially local} if there are constants $C < \infty$ and $\mu > 0$ such that
$$ \abs{ A(x, y) } \leq C \ed^{- \mu \abs{x-y}} \quad \text{for all} \quad  x, y \in \Z^d. $$

\subsection{Chern number} \label{subsec:Chern number}

For any projector $P$ and unitary $U$ acting on $\H{d}{n}$ for which $U P U^* - P$ is compact we define an index \cite{avron1994index}
\begin{equation} \label{eq:index of P and U}
    \ind(P, U) := \dim \ker ( U P U^* - P - \I ) - \dim \ker ( U P U^* - P + \I ) \in \Z.
\end{equation}
This index is stable under compact perturbation and norm-continuous deformations of both $P$ and $U$, as long as $UPU^* - P$ remains compact \cite{avron1994index}.

We now specialize to $d = 2$, so we consider $\H{2}{n} = l^2(\Z^2) \otimes \C^n$. Let $P = P^* = P^2$ be an exponentially local projector on $\H{2}{n}$ and let
$$F_a = \exp \left( \iu \arg( (X_1 - a_1) + \iu (X_2 - a_2) ) \right)$$
be the flux insertion operator at $a \in \R^2 \setminus \Z^2$. Then $F_a^* P F_a - P$ is compact (\cite[~Sect. 4]{aizenman1998localization}) and we can define the \emph{Chern number} of $P$ by
$$ \Ch(P) := \ind(P, F_a) $$
which is independent of $a$.

\subsection{Localized Wannier bases}

For a vector $\psi \in \H{d}{n}$ we write $\psi(x) = \iota^*_x \psi \in \C^n$ for its component at $x \in \Z^d$. We write $\abs{\psi(x)}$ for the norm of this component.

The following definition is an adaptation to the lattice setting of the notion of `generalized Wannier basis' introduced in the continuum setting by \cite{nenciu1993existence, nenciu1998existence}.

\begin{definition} \label{def:exponentially localized Wannier}
    An orthogonal projector $P \in \caB(\H{d}{n})$ admits an exponentially localized Wannier basis if there exist constants $C < \infty$ and $\mu > 0$, an orthonormal basis $\{ w_{\al} \}_{\al \in \caA}$ of $\Ran P$, and localization centres $x_{\al} \in \R^d, \al \in \caA$ such that
    $$ \abs{ w_{\al}( x ) } \leq C \ed^{-\mu \abs{ x - x_{\al} }} \quad \text{for all} \quad x \in \Z^d, \al \in \caA.$$ 
\end{definition}

\begin{remark}
    In many continuum versions of the notion of a generalized Wannier basis some notion of non-clustering or uniformity for the localization centers is required as part of the definition \cite{marcelli2019haldane, marcelli2023localization}. It was shown in \cite[Lemma 2.1.]{lu2024algebraic} that such non-clustering properties follow from the other assumptions. We provide the analogue in the lattice setting in Lemma \ref{lem:no clustering}.
\end{remark}

\subsection{Main Result}

The following theorem is an immediate consequence of Theorem \ref{thm:2D rapidly local Wannier}, Theorem \ref{thm:rapid to exponential} and Theorem \ref{thm:rapidly local Wannier basis implies Ch(P) = 0}.

\begin{theorem} \label{thm:Localization dichotomy}
    Let $P$ be an exponentially local projection on $\H{2}{n}$. Then $P$ admits an exponentially localized Wannier basis if and only if $\Ch(P) = 0$.
\end{theorem}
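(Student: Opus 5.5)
The statement is assembled from three results proved later in the paper, so the plan is to invoke them in sequence and check that their hypotheses match. For the implication ``$\Ch(P)=0 \implies$ exponentially localized Wannier basis'' we first use that an exponentially local projector is in particular rapidly local. Its off-diagonal decay $C\ed^{-\mu\abs{x-y}}$ dominates any inverse power, so Theorem \ref{thm:2D rapidly local Wannier} applies. That theorem is built on the strip decomposition of Proposition \ref{prop:exact strip decomposition} and the one-dimensional Theorem \ref{thm:1D rapidly local Wannier}, and it yields a rapidly localized Wannier basis for $P$ under $\Ch(P)=0$. We then feed this basis, together with the exponential locality of $P$, into Theorem \ref{thm:rapid to exponential}, which upgrades it to an exponentially localized orthonormal basis of $\Ran P$ with localization centres as in Definition \ref{def:exponentially localized Wannier}.

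For the converse, suppose $P$ admits an exponentially localized Wannier basis $\{w_\al\}$ with centres $x_\al$. Exponential decay $\abs{w_\al(x)}\le C\ed^{-\mu\abs{x-x_\al}}$ implies rapid localization with the same centres, so Theorem \ref{thm:rapidly local Wannier basis implies Ch(P) = 0} gives $\Ch(P)=0$. If that theorem requires non-clustering of the centres, we first invoke Lemma \ref{lem:no clustering} to obtain it.

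The only point needing care is that the notions of ``rapidly local'' and ``rapidly localized Wannier basis'' used in the three cited theorems agree with what exponential locality provides. In particular, Theorem \ref{thm:rapid to exponential} must take as input exactly the output of Theorem \ref{thm:2D rapidly local Wannier}, along with the exponential locality of $P$ itself. The substantive obstacle is not in this assembly. It lies entirely inside Theorem \ref{thm:2D rapidly local Wannier}: gapping the zero-flow boundary unitaries $U_j=\ed^{2\pi\iu P\Pi_jP}$ via Lin's theorem in Proposition \ref{prop:gapping 1D unitaries}, and then the spectral gluing of Lemma \ref{lem:spectral gluing}. Here that work is taken as given.
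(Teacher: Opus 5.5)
Your proposal is correct and follows the paper's own proof: the theorem is stated there as an immediate consequence of Theorem \ref{thm:2D rapidly local Wannier}, Theorem \ref{thm:rapid to exponential} and Theorem \ref{thm:rapidly local Wannier basis implies Ch(P) = 0}, linked by the observation that exponential decay (of $P$ and of the basis vectors) implies rapid decay. Your conditional appeal to Lemma \ref{lem:no clustering} is harmless but unnecessary, since Theorem \ref{thm:rapidly local Wannier basis implies Ch(P) = 0} already invokes that lemma internally.
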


\section{Locality and functional calculus} \label{sec:locality and functional calculus}

Even though our initial projector $P$ is exponentially local, our constructions below will often involve taking smooth functions of various local operators, and exponential locality is not stable under such operations. The good notion of locality that is stable under the smooth functional calculus is \emph{rapid locality}, meaning that off diagonal matrix elements decay faster than any polynomial.

Many basic results and techniques presented in this section related to algebraic and ideal properties of appropriately local and localized operators go back to \cite{jaffard1990proprietes}, see also \cite{grochenig2010noncommutative}.

\subsection{Rapid locality}

For any $v \in \R^d$ write $ \jb{v} := 1 + \abs{v}$. Note that
\begin{equation} \label{eq:triangle for the Japanese bracket}
    \jb{a + b} \leq \jb{a} \jb{b}
\end{equation}
for all $a, b \in \R^d$.

\begin{definition} \label{def:rapid locality}
    An operator $A \in \caB(\H{d}{n})$ is rapidly local if for each $N \in \N$ there is a constant $C_N < \infty$ so that
    $$ \abs{A(x, y)} \leq C_N \jb{x-y}^{-N} \quad \text{for all} \quad x, y \in \Z^d. $$
    We say that $A$ is rapidly local with constants $\{C_N \}_{N \in \N}$. We write $\LOC(\H{d}{n})$ for the set of rapidly local operators on $\H{d}{n}$.

    More generally, given a subset $\Sigma \subset \R^d$, we say $A$ is rapidly local and localized near $\Sigma$ with constants $\{ C_N \}_{N \in \N}$ if
    $$ \abs{A(x, y)} \leq C_N \jb{x-y}^{-N} \jb{ \dist( \Sigma, x ) + \dist( \Sigma, y) }^{-N} \quad \text{for all} \quad x, y \in \Z^d, N \in \N. $$
    We write $\LOC_{\Sigma}(\H{d}{n})$ for the set of operators which are rapidly local and localized near $\Sigma$. Note that $\LOC( \H{d}{n} ) = \LOC_{\R^d}( \H{d}{n} )$.
    
    A family $\{A_i\}_{i \in \caI}$ of operators on $\H{d}{n}$ is uniformly rapidly local and localized near subsets $\{\Sigma_{i} \subset \R^d\}_{i \in \caI}$ if each $A_i$ is rapidly local and localized near $\Sigma_i$ with constants $\{C_N\}_{N \in \N}$ independent of $i \in \caI$. 
    
    We say the family $\{A_i\}_{i \in \caI}$ is uniformly rapidly local and localized near $\Sigma \subset \R^d$ if it is uniformly rapidly local and localized near $\{ \Sigma_i \}_{i \in \caI}$ with $\Sigma_i = \Sigma$ for all $i \in \caI$. We say the family $\{A_i\}_{i \in \caI}$ is uniformly rapidly local if it is uniformly rapidly local and localized near $\R^d$.
\end{definition}

It will be convenient to characterize rapid locality in terms of weighted Schur norms. Define for any $N \in \N, \Sigma \subset \R^d$ the norm
\begin{align*} 
    \norm{A}_{N, \Sigma} := \max \bigg\lbrace \sup_{x} \sum_{y} &\abs{A(x, y)} \jb{x - y}^N \jb{ \d(\Sigma, x) + \d(\Sigma, y) }^N, \\ &\sup_y \sum_x \abs{A(x, y)} \jb{x - y}^N \jb{ \d(\Sigma, x) + \d(\Sigma, y) }^N  \bigg\rbrace
\end{align*}
where $\d(\Sigma, x) = \inf_{y \in \Sigma} \abs{x - y}$ denotes the Euclidean distance between $\Sigma$ and $x$. Write also $\norm{A}_N := \norm{A}_{N, \R^d}$.

\begin{lemma} \label{lem:norm characterization of rapid locality}
    If an operator $A \in \caB(\H{d}{n})$ has $\norm{A}_{N, \Sigma} < \infty$ for all $N \in \N$ then $A$ is rapidly local and localized near $\Sigma$ with constants $\{ C_N = \norm{A}_{N, \Sigma} \}_{N \in \N}$.
    
    If an operator $A \in \caB(\H{d}{n})$ is rapidly local and localized near $\Sigma \subset \R^d$ with constants $\{ C_N \}_{N \in \N}$, then $\norm{A}_{N, \Sigma} \leq C_{N + d + 1} c_d$ for all $N \in \N$, where $c_d = \sup_x \sum_y \jb{x-y}^{-(d+1)} < \infty$.
    
    In particular, a family $\{A_i \}_{i \in \caI}$ of bounded operators on $\H{d}{n}$ is uniformly rapidly local and localized near $\{ \Sigma_i \}_{i \in \caI}$ if and only if there are constants $\{ D_N \}_{N \in \N}$ such that $\norm{A_i}_{N, \Sigma_i} \leq D_N$ for all $N \in \N$.
\end{lemma}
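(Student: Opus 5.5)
The first claim follows by bounding a single term by the whole row sum. Fix $N$ and $x, y$. Since every summand in the row sum at $x$ is nonnegative,
$$ \abs{A(x,y)} \jb{x-y}^N \jb{\d(\Sigma,x)+\d(\Sigma,y)}^N \leq \sum_{y'} \abs{A(x,y')} \jb{x-y'}^N \jb{\d(\Sigma,x)+\d(\Sigma,y')}^N \leq \norm{A}_{N,\Sigma}. $$
Dividing by the weights gives $\abs{A(x,y)} \leq \norm{A}_{N,\Sigma}\jb{x-y}^{-N}\jb{\d(\Sigma,x)+\d(\Sigma,y)}^{-N}$. This is exactly rapid locality near $\Sigma$ with $C_N = \norm{A}_{N,\Sigma}$.

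For the second claim, assume the pointwise bound with constants $C_M$ and use it at order $M = N+d+1$. Because $\jb{\cdot} \geq 1$, we have
$$ \jb{\d(\Sigma,x)+\d(\Sigma,y)}^{-(N+d+1)} \leq \jb{\d(\Sigma,x)+\d(\Sigma,y)}^{-N}. $$
Therefore each weighted term satisfies
$$ \abs{A(x,y)}\jb{x-y}^N\jb{\d(\Sigma,x)+\d(\Sigma,y)}^N \leq C_{N+d+1}\jb{x-y}^{-(d+1)}. $$
Summing over $y$ for fixed $x$ gives at most $C_{N+d+1} c_d$. The same argument with the roles of $x$ and $y$ exchanged bounds the column sums, and since the bound is symmetric in $x,y$ it gives the same constant. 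Hence $\norm{A}_{N,\Sigma} \leq C_{N+d+1}c_d$.

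It remains to check that $c_d < \infty$. The number of $y \in \Z^d$ with $\abs{x-y} \in [k, k+1)$ is $O(k^{d-1})$, uniformly in $x$. So $\sum_y \jb{x-y}^{-(d+1)} \lesssim \sum_k k^{d-1}(1+k)^{-(d+1)}$, which converges. This is the only mildly nontrivial step.

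The statement about families follows by applying both directions uniformly in $i$. If $\norm{A_i}_{N,\Sigma_i} \leq D_N$ for all $i$, the first part gives uniform pointwise constants $D_N$. Conversely, uniform pointwise constants $C_N$ give $\norm{A_i}_{N,\Sigma_i} \leq D_N := C_{N+d+1}c_d$ for all $i$. No step is a real obstacle.
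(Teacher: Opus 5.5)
Your proof is correct and follows the same route as the paper. The first part bounds a single weighted term by the row sum defining $\|A\|_{N,\Sigma}$; the second applies the pointwise bound at order $N+d+1$, discards the leftover localization factor (at most $1$), and sums $(1+|x-y|)^{-(d+1)}$ over $y$ to get $C_{N+d+1}c_d$, with your lattice-point count for $c_d<\infty$ being the only step the paper takes for granted.
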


\begin{proof}
    If $\norm{A}_{N, \Sigma} < \infty$ for all $N \in \N$ then we immediately obtain
    $$ \abs{A(x, y)} \jb{x-y}^{N} \jb{\d(\Sigma, x) + \d(\Sigma, y)}^{N} \leq \norm{A}_{N, \Sigma} $$
    for all $N \in \N$, so $A$ is rapidly local and localized near $\Sigma$ with constants $C_N = \norm{A}_{N, \Sigma}$.

    Conversely, suppose $A$ is rapidly local and localized near $\Sigma$ with constants $\{C_N\}_{N \in \N}$. Then
    \begin{align*}
        \sum_y &\abs{A(x, y)} \jb{x-y}^N \jb{ \d(\Sigma, x) + \d(\Sigma, y) }^N \\
        &\leq  C_{N + d + 1} \sum_y \jb{x-y}^{-(d+1)} \jb{ d(\Sigma, x) + \d(\Sigma, y) }^{-(d+1)} \\
        &\leq C_{N + d + 1} c_d.
    \end{align*}
    This bound is independent of $x$, and exchanging the roles of $x$ and $y$ gives an identical bound. We conclude that $\norm{A}_{N, \Sigma} \leq C_{N + d + 1} c_d$ for each $N \in \N$. 
\end{proof}

\begin{lemma} \label{lem:*-algebras and ideals of rapidly local operators}
    We have for all $\Sigma \subset \R^d, N \in \N$ and all $A, B \in \caB(\H{d}{n})$ that
    \begin{enumerate}
        \item $ \norm{A} \leq \norm{A}_{0, \Sigma}$, \label{eq:operator norm bounded by Schur norm}
        \item $\norm{A^*}_{N, \Sigma} = \norm{A}_{N, \Sigma}$, \label{eq:Schur norm is invariant under adjoints}
        \item $\norm{AB}_{N, \Sigma} \leq \norm{A}_{N, \Sigma} \norm{B}_{N, \Sigma}$, \label{eq:sub-multiplicativity}
        \item $\norm{A}_{N} \leq \norm{A}_{N, \Sigma}$, \label{eq:local and localized is stricter than local}
        \item $\norm{A B}_{N, \Sigma} \leq \norm{A}_{2N} \norm{B}_{N, \Sigma}$ and $\norm{A B}_{N, \Sigma} \leq \norm{A}_{N, \Sigma} \norm{B}_{2N}$. \label{eq:Schur norm property for ideal}
    \end{enumerate}
    In particular, we have that $\LOC_{\Sigma}(\H{d}{n})$ is a *-subalgebra of $\caB(\H{d}{n})$ and that $\LOC_{\Sigma}(\H{d}{n})$ is a two-sided ideal in $\LOC(\H{d}{n})$.
\end{lemma}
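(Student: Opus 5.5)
We prove the five items one at a time, working directly from the definition of the weighted Schur norm. Throughout we write $w_\Sigma(x,y) := \jb{x-y}^N \jb{\d(\Sigma,x)+\d(\Sigma,y)}^N \geq 1$ for the weight.

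\textbf{Items 1, 2 and 4.} For item 1 we use the Schur test. For $\psi \in \H{d}{n}$, Cauchy--Schwarz gives
$$\abs{(A\psi)(x)}^2 \leq \Big(\sum_y \abs{A(x,y)}\Big)\Big(\sum_y \abs{A(x,y)}\abs{\psi(y)}^2\Big).$$
Summing over $x$ then yields $\norm{A\psi}^2 \leq \norm{A}_{0,\Sigma}^2\norm{\psi}^2$, because the weight satisfies $w \geq 1$ when $N = 0$. For item 2, note that $A^*(x,y) = A(y,x)^*$ has the same spectral norm as $A(y,x)$. The weight is symmetric in $x$ and $y$, so the row sup and the column sup simply exchange roles. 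Item 4 holds because $\d(\R^d,\cdot) = 0$, so the $\R^d$-weight is pointwise bounded by the $\Sigma$-weight.

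\textbf{Item 3.} We use $\abs{AB(x,z)} \leq \sum_y \abs{A(x,y)}\abs{B(y,z)}$ and aim for a pointwise weight inequality $w_\Sigma(x,z) \leq w_\Sigma(x,y)\,w_\Sigma(y,z)$. The first factor of the weight is handled by \eqref{eq:triangle for the Japanese bracket}. For the second factor we use the triangle inequality for $\d(\Sigma,\cdot)$:
$$\d(\Sigma,x)+\d(\Sigma,z) \leq \big(\d(\Sigma,x)+\d(\Sigma,y)\big) + \big(\d(\Sigma,y)+\d(\Sigma,z)\big),$$
together with $\jb{s+t} \leq \jb{s}\jb{t}$ for $s,t \geq 0$. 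Summing over $z$ first and then over $y$ bounds the row sup by $\norm{A}_{N,\Sigma}\norm{B}_{N,\Sigma}$, and the column sup is handled symmetrically.

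\textbf{Item 5.} This is the only step that needs care, since $A$ carries no $\Sigma$-localization. The key is the bound
$$\d(\Sigma,x) \leq \d(\Sigma,y) + \abs{x-y},$$
which gives
$$\d(\Sigma,x)+\d(\Sigma,z) \leq \abs{x-y} + \big(\d(\Sigma,y)+\d(\Sigma,z)\big).$$
Hence $\jb{\d(\Sigma,x)+\d(\Sigma,z)}^N \leq \jb{x-y}^N\jb{\d(\Sigma,y)+\d(\Sigma,z)}^N$. Combining this with $\jb{x-z}^N \leq \jb{x-y}^N\jb{y-z}^N$ yields
$$w_\Sigma(x,z) \leq \jb{x-y}^{2N}\, w_\Sigma(y,z).$$
So the Schur-sum argument of item 3 gives $\norm{AB}_{N,\Sigma} \leq \norm{A}_{2N}\norm{B}_{N,\Sigma}$. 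The second inequality follows either by applying the same argument with the roles swapped, moving the distance from $z$ to $y$, or by taking adjoints and using item 2.

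\textbf{Algebraic consequences.} By Lemma \ref{lem:norm characterization of rapid locality}, membership in $\LOC_\Sigma$ is equivalent to finiteness of all the norms $\norm{\cdot}_{N,\Sigma}$. These norms satisfy the triangle inequality, so $\LOC_\Sigma$ is a linear subspace. Items 2 and 3 show it is closed under adjoints and products, hence a *-subalgebra. Item 5 shows that products with elements of $\LOC$ on either side stay in $\LOC_\Sigma$, so $\LOC_\Sigma$ is a two-sided ideal in $\LOC$. Finally, item 4 shows $\LOC_\Sigma \subset \LOC$.
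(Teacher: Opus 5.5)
Your proposal is correct and follows the paper's proof essentially step for step. It uses Cauchy--Schwarz for item 1 and the pointwise weight bounds $w_\Sigma(x,z)\le w_\Sigma(x,y)\,w_\Sigma(y,z)$ and $w_\Sigma(x,z)\le \jb{x-y}^{2N} w_\Sigma(y,z)$ for items 3 and 5, the latter coming from the 1-Lipschitz property of the distance to $\Sigma$. It takes adjoints for the second half of item 5, just as the paper does. The differences are cosmetic. The inequality you call the triangle inequality in item 3 really only uses that the distance from $y$ to $\Sigma$ is nonnegative. You also make explicit the linear-subspace part of the *-algebra claim, which the paper leaves implicit.
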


\begin{proof}
    \eqref{eq:operator norm bounded by Schur norm}. For any finitely supported $\psi \in \H{d}{n}$ we have
    \begin{align*}
        \norm{ A \psi }^2 &\leq \sum_{x} \left( \sum_y \abs{A(x, y)} \abs{\psi(y)} \right)^2 \\
        &\leq \sum_x \left( \sum_y  \abs{A(x, y)}\right) \left( \sum_y \abs{A(x, y)} \abs{\psi(y)}^2 \right) \\
        &\leq \norm{A}_0 \, \sum_y \left( \sum_x \abs{A(x, y)} \right) \abs{\psi(y)}^2 \\
        &\leq \norm{A}_0^2 \norm{\psi}^2,
    \end{align*}
    where we used Cauchy-Schwarz to obtain the second line. This proves the first claim of the Lemma.

    \eqref{eq:Schur norm is invariant under adjoints}. Follows immediately from the definition.
    
    \eqref{eq:sub-multiplicativity}. Use Eq. \eqref{eq:triangle for the Japanese bracket} to bound
    \begin{align*}
         \sum_y \, &\abs{ (AB)(x, y) } \jb{x-y}^N \jb{\d(\Sigma, x) + \d(\Sigma, y)}^N  \\
        &\leq \sum_{y, z} \, \abs{ A(x, z) } \abs{B(z, y)} \, \jb{x-z}^N \jb{z-y}^N \jb{\d(\Sigma, x) + \d(\Sigma, z)}^N \jb{\d(\Sigma, z) + \d(\Sigma, y)}^N \\
        &= \sum_z \abs{A(x, z)} \jb{x-z}^N \jb{\d(\Sigma, x) + \d(\Sigma, z)}^N \, \left( \sum_y \abs{B(z, y)} \jb{z - y}^N \jb{\d(\Sigma, z) + \d(\Sigma, y)}^N \right) \\
        &\leq \norm{B}_{N, \Sigma} \, \sum_z \abs{A(x, z)} \jb{x-z}^N \jb{\d(\Sigma, x) + \d(\Sigma, z)}^N \\
        &\leq \norm{A}_{N, \Sigma} \norm{B}_{N, \Sigma}
    \end{align*}
    independently of $x$. There is a similar bound when we sum over $x$ at fixed $y$. This proves sub-multiplicativity of $\norm{\cdot}_{N, \Sigma}$.

    \eqref{eq:local and localized is stricter than local}. Follows immediately from the definition.

    \eqref{eq:Schur norm property for ideal}. First note that
    $$ \abs{ \d(\Sigma, x) - \d(\Sigma, y) } \leq \abs{x-y}. $$
    Indeed, suppose that $\d(\Sigma, x) \geq \d(\Sigma, y)$, and that $\d(\Sigma, y) = \abs{z - y}$ for $z \in \overline \Sigma$ in the closure of $\Sigma$. Then
    $$ \d(\Sigma, x) \leq \abs{z - x} \leq \abs{z - y} + \abs{x - y} = \d(\Sigma, y) + \abs{x - y}, $$
    establishing the inequality in this case. The case $\d(\Sigma, x) \leq \d(\Sigma, y)$ is dealt with in the same way. Using this inequality we obtain
    $$ \jb{\d(\Sigma, x) + \d(\Sigma, y)} \leq \jb{x - z} \jb{\d(\Sigma, z) + \d(\Sigma, y)}.$$
    We can now use this inequality in a similar calculation to the one we used to prove item \eqref{eq:sub-multiplicativity} to obtain
    \begin{align*}
        \sum_y \, &\abs{ (AB)(x, y) } \jb{x-y}^N \jb{\d(\Sigma, x) + \d(\Sigma, y)}^N  \\
        &\leq \sum_{y, z} \, \abs{ A(x, z) } \abs{B(z, y)} \, \jb{x-z}^{2N} \jb{z-y}^N \jb{\d(\Sigma, z) + \d(\Sigma, y)}^N. \\
        &= \sum_z \abs{A(x, z)} \jb{x-z}^{2N} \, \left( \sum_y \abs{B(z, y)} \jb{z - y}^N \jb{\d(\Sigma, z) + \d(\Sigma, y)}^N \right) \\
        &\leq \norm{A}_{2N} \norm{B}_{N, \Sigma}
    \end{align*}
    uniformly in $x$. A similar bound holds when we sum over $x$ at fixed $y$, establishing $\norm{AB}_{N, \Sigma} \leq \norm{A}_{2N} \norm{B}_{N, \Sigma}$. The bound $\norm{AB}_{N, \Sigma} \leq \norm{A}_{N, \Sigma} \norm{B}_{2N}$ can either be obtained in the same way, or from the inequality already established and the invariance of the weighted Schur norms under taking adjoints. 
    
    The characterization of $\LOC_{\Sigma}(\H{d}{n})$ provided by Lemma \ref{lem:norm characterization of rapid locality}, the sub-multiplicativity \eqref{eq:sub-multiplicativity}, and invariance under adjoints \eqref{eq:Schur norm is invariant under adjoints} imply that $\LOC_{\Sigma}(\H{d}{n})$ is a *-algebra. Finally, items \eqref{eq:local and localized is stricter than local} and \eqref{eq:Schur norm property for ideal} imply that $\LOC_{\Sigma}(\H{d}{n})$ is a two-sided ideal in $\LOC(\H{d}{n})$.
\end{proof}

\subsection{Rapidly local Combes-Thomas estimates}

Combes-Thomas estimates \cite{combes1973asymptotic} control the off-diagonal decay of the resolvents of a given operator $A$ in terms of the off-diagonal decay of $A$ itself. The versions available in the literature are not suitable for our needs. Here we prove Combes-Thomas estimates for rapidly local self-adjoint and unitary operators on $\H{d}{n}$.

\begin{proposition} \label{prop:rapidly local Combes-Thomas}
    Let $A$ be a normal operator on $\H{d}{n}$ which is rapidly local with constants $\{ C_N \}_{N \in \N}$. Then there are constants $\{ D_N \}_{N \in \N}$ depending only on $\{C_N\}_{N \in \N}$ and $d$, such that for any $z \in \C \setminus \sigma(A)$ and any $N \in \N$ we have
    $$ \abs{ (z - A)^{-1}(x, y) } \leq D_N \min \{ \eta, 1 \}^{-(N+1)} \jb{x-y}^{-N} \quad \text{for all } x , y \in \Z^d, $$
    where $\eta = \d\big( z, \sigma(A) \big)$ is the distance from $z$ to the spectrum of $A$. 
\end{proposition}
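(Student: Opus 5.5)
We prove the bound with a commutator expansion that uses the polynomial weights directly, rather than the exponential conjugation $\ed^{\mu \jb{X}}$ of the classical Combes--Thomas argument, which would require exponential locality. Write $R = (z - A)^{-1}$. Since $A$ is normal, $\norm{R} = \eta^{-1}$, and we may assume $\eta \le 1$ at the cost of constants: for $\eta > 1$ the bound for $\eta = 1$ suffices, after a resolvent-identity argument or by noting that $\norm{R}\le 1$. Fix a coordinate direction $k \in \{1, \dots, d\}$ and a reference point. The key object is the iterated commutator with the position operator $X_k$:
$$ \mathrm{ad}_{X_k}(B) := [X_k, B], \qquad \mathrm{ad}_{X_k}(B)(x, y) = (x_k - y_k) B(x, y). $$
Thus the matrix elements of $\mathrm{ad}_{X_k}^N(R)$ are $(x_k - y_k)^N R(x, y)$. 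Bounding $\norm{\mathrm{ad}^N_{X_k}(R)} \le D'_N \eta^{-(N+1)}$ for each $k$ gives $\abs{(x_k-y_k)^N R(x,y)} \le D'_N \eta^{-(N+1)}$. Summing over $k$ and using $\jb{x-y}^N \le c_{N,d}\,(1 + \sum_k \abs{x_k-y_k}^N)$, together with $\norm{R}\le \eta^{-1}\le \eta^{-(N+1)}$, yields the claim.

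To bound the iterated commutators, use $[X_k, R] = R\,[X_k, A]\,R$, which follows from $[X_k, (z-A)] = -[X_k, A]$. Iterating by Leibniz' rule, $\mathrm{ad}^N_{X_k}(R)$ is a finite sum, with combinatorial coefficients depending only on $N$, of words
$$ R\, \mathrm{ad}^{m_1}_{X_k}(A)\, R\, \mathrm{ad}^{m_2}_{X_k}(A)\, R \cdots \mathrm{ad}^{m_j}_{X_k}(A)\, R, \qquad m_i \ge 1,\ \textstyle\sum_i m_i = N,\ j \le N. $$
One checks this by induction: applying $\mathrm{ad}_{X_k}$ to a word either hits an $A$-factor, raising some $m_i$, or hits an $R$, inserting $R\,\mathrm{ad}_{X_k}(A)\,R$. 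Each such word contains $j+1 \le N+1$ resolvent factors, each of norm $\le \eta^{-1}$. The factors $\mathrm{ad}^{m}_{X_k}(A)$ have kernel $(x_k-y_k)^m A(x,y)$, so by item \eqref{eq:operator norm bounded by Schur norm} of Lemma \ref{lem:*-algebras and ideals of rapidly local operators} and Lemma \ref{lem:norm characterization of rapid locality},
$$\norm{\mathrm{ad}^m_{X_k}(A)} \le \norm{A}_{m} \le c_d\, C_{m+d+1}.$$
Hence $\norm{\mathrm{ad}^N_{X_k}(R)} \le D'_N\, \eta^{-(N+1)}$ when $\eta\le1$, with $D'_N$ depending only on $\{C_N\}$, $N$ and $d$.

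The main obstacle is rigor rather than ideas. The operator $X_k$ is unbounded, so the commutator identities must be justified. We plan to do this on finitely supported vectors, or equivalently at the level of matrix elements. The cleanest route is to work with the bounded kernel operators $\mathrm{ad}^N_{X_k}(R)$, defined a priori only as formal kernels $(x_k-y_k)^N R(x,y)$, and to avoid domain issues by first proving the estimate with $X_k$ replaced by a bounded truncation, for instance $X_k^{(L)} = \max(-L, \min(L, X_k))$. The truncated identity $[X^{(L)}_k, R] = R [X^{(L)}_k, A] R$ holds exactly. Since $\abs{x^{(L)}_k - y^{(L)}_k} \le \abs{x_k - y_k}$, the truncated iterated commutators of $A$ have Schur norms dominated by $\norm{A}_m$ uniformly in $L$. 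Letting $L \to \infty$ at fixed $x, y$ gives the bound on $(x_k-y_k)^N R(x,y)$. Normality is used only to obtain $\norm{R} = \eta^{-1}$, which is why the statement covers both self-adjoint and unitary $A$.
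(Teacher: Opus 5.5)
Your proof is correct, but it takes a different route from the paper's.

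\textbf{The paper's route.} The paper conjugates rather than differentiates. It introduces a bounded, boundedly invertible polynomial weight $M$ with $w(x)=(1+\varepsilon\min\{|x-a|,R\})^{-N}$. From the elementary bound $|w(x)/w(y)-1|\le\varepsilon\langle x-y\rangle^N$ it gets $\|MAM^{-1}-A\|\le\varepsilon\|A\|_N$. Choosing $\varepsilon=a_N\min\{\eta,1\}$, a single Neumann series gives $\|M(z-A)^{-1}M^{-1}\|\le 2/\eta$. It then reads off the $(x,y)$ entry with $a=x$ and $R>|x-y|$. This yields the slightly sharper form $|(z-A)^{-1}(x,y)|\le\frac{2}{\eta}(1+a_N\min\{\eta,1\}|x-y|)^{-N}$, which makes the decay length $\sim 1/\eta$ explicit. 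The constant $2a_N^{-N}$ depends only on $\|A\|_N$.

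\textbf{Your route.} You replace the weighted conjugation by the iterated derivations $\mathrm{ad}_{X_k}$. The identity $[X_k,R]=R[X_k,A]R$ and the word expansion put at most $N+1$ resolvent factors into $\mathrm{ad}^N_{X_k}(R)$. That is exactly where $\eta^{-(N+1)}$ comes from, and no smallness parameter or Neumann series is needed. The price is the combinatorial bookkeeping over words and coordinate directions.

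\textbf{Details.} Your clipping $X_k^{(L)}$ plays the same role as the paper's cutoff $\min\{|x-a|,R\}$. In your case no limit is needed: for $L>\max\{|x_k|,|y_k|\}$ the clipped kernel already equals $(x_k-y_k)^N R(x,y)$. The case $\eta>1$ needs no resolvent-identity argument. Every resolvent factor in every word then has norm at most $1$, so your bound holds directly with $\min\{\eta,1\}^{-(N+1)}=1$.

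\textbf{What each approach buys.} Your derivation argument is the standard proof that smooth subalgebras defined by derivations are inverse-closed. Like the paper's argument (see its Remark), it applies verbatim to any bounded invertible $z-A$ with $\eta=\|R\|^{-1}$. The paper's argument is shorter. It is the classical Combes--Thomas argument with polynomial instead of exponential weights, so it transfers directly to the exponentially local setting.
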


\begin{remark}
    The proof actually goes through for any bounded operator $A \in \caB(\H{d}{n})$ and any $z \in \C$ for which the resolvent $(z - A)^{-1}$ exists as a bounded operator, and taking $\eta = \norm{(z - A)^{-1}}^{-1}$.
\end{remark}

\begin{proof}
    Fix an integer $N \geq 1$, a lattice site $a \in \Z^d$, a cut-off $R > 0$, and a number $0 < \ep \leq 1$, and define a multiplication operator $M \in \caB(\H{d}{n})$ by
    $$ M \delta_{x, i}  = w(x) \delta_{x, i}, \quad w(x) = (1 + \ep \min \{ \abs{x - a}, R \})^{-N} $$
    for all $x \in \Z^d$ and all $i \in \{1, \cdots, n\}$. Since $r(x) := \min\{ \abs{x - a}, R \}$ satisfies $r(x) \leq r(y) + \abs{x - y}$ for any $x, y \in \Z^d$ we have
    $$ 1 + \ep r(x) \leq 1 + \ep r(y) + \ep \abs{x-y} \leq (1 + \ep r(y))(1 + \ep\abs{x-y}) $$
    and therefore
    $$ \frac{w(x)}{w(y)} \leq (1 + \ep \abs{x-y})^N. $$
    Since this also holds with $x$ and $y$ exchanged, we get
    $$ \abs{ \frac{w(x)}{w(y)} - 1 } \leq (1 + \ep \abs{x - y})^N - 1 \leq \ep \jb{x - y}^N.$$
    Indeed, if $w(x)/w(y) \geq 1$ the first inequality follows immediately from the above. Otherwise
    $$ \abs{ \frac{w(x)}{w(y)} - 1 } = 1 - \frac{w(x)}{w(y)} = \frac{w(x)}{w(y)} \left( \frac{w(y)}{w(x)} - 1 \right) $$
    where now the first factor is less than one, and the second is bounded as before.

    Using this bound we can now establish
    \begin{equation} \label{eq:bound on MAM-A}
        \norm{MAM^{-1} - A} \leq \ep \norm{A}_N.
    \end{equation}
    Indeed, for any compactly supported $\psi \in \H{d}{n}$ we have
    \begin{align*}
        \norm{( MAM^{-1} - A )\psi}^2 &\leq \sum_x \, \left\str  \sum_y \, (M A M^{-1} - A)(x, y) \, \psi(y) \right\str^2 \\
        &=\sum_x \, \left\str  \sum_y \,  \left( \frac{w(x)}{w(y)} - 1 \right) \, A(x, y) \, \psi(y) \right\str^2 \\
        &\leq \sum_x \left(  \sum_y \, \abs{A(x, y)} \, \ep \jb{x-y}^N \right) \left( \sum_y \, \abs{A(x, y)} \, \ep \jb{x-y}^N \abs{\psi(y)}^2 \right) \\
        &\leq \ep^2 \norm{A}_N^2 \norm{\psi}^2
    \end{align*}
    where we used Cauchy-Schwarz to obtain the third line.

    Set
    $$ a_N := \frac{1}{2(1 + \norm{A}_N)}, \quad \ep := a_N \min\{\eta, 1\}. $$
    Writing $B = M A M^{-1} - A$, the bound \eqref{eq:bound on MAM-A} and the fact that $\norm{(z-A)^{-1}} \leq \eta^{-1}$ then implies
    $$ \norm{ B (z - A)^{-1} } \leq \frac{1}{2}, $$
    and inverting
    $$ M (z - A) M^{-1} = z - M A M^{-1} = \big( \I - B (z - A)^{-1} \big) \big( z - A \big) $$
    gives us
    $$ M (z - A)^{-1} M^{-1} = (z - A)^{-1} \big( \I - B(z - A)^{-1} \big)^{-1}. $$
    and therefore
    \begin{equation} \label{eq:bound on MRM}
        \norm{ M (z - A)^{-1} M^{-1} } \leq \frac{2}{\eta}.
    \end{equation}

    By choosing $a = x$, so that $w(x) = 1$, and taking $R > \abs{x-y}$ we find
    $$ \abs{ ( M(z - A)^{-1}M^{-1} )(x, y) } = (1 + \ep \abs{x-y})^N \abs{ (z - A)^{-1}(x, y) } $$
    and therefore, using Eq. \eqref{eq:bound on MRM} and recalling that we set $\ep = a_N \min \{ \eta, 1 \}$,
    \begin{equation} \label{eq:CT estimate}
        \abs{ (z - A)^{-1}(x, y) } \leq \frac{2}{\eta} \, \big( 1 + a_N \min\{\eta, 1\} \abs{x-y} \big)^{-N}.
    \end{equation}

    If $\eta \leq 1$ then $1 + a_N \eta \abs{x-y} \geq a_N \eta (1 + \abs{x-y})$ so
    $$ \abs{ (z-A)^{-1}(x, y) } \leq 2 a_N^{-N} \eta^{-(N+1)} \jb{x-y}^{-N}. $$
   
    If $\eta > 1$ then similar reasoning gives
    $$ \abs{(z-A)^{-1}(x, y)} \leq 2 a_N^{-N} \jb{x-y}^{-N}. $$
    Lemma \ref{lem:norm characterization of rapid locality} gives upper bounds on $\norm{A}_N$, and therefore lower bounds on $a_N$, in terms of the constant $C_{N + d + 1}$ and the dimension $d$. We can therefore upper bound the constants $2 a_N^{-N} \leq D_N < \infty$ by constants $D_N$ that depend only on the $\{ C_N \}_{N \in \N}$ and on $d$. This proves the claim.
\end{proof}

\subsection{Smooth functions of rapidly local operators}

In this subsection we will prove that smooth functions of rapidly local self-adjoint and unitary operators are still rapidly local and localized. The crucial tools are Helffer-Sj\"{o}strand formulas for the smooth functional calculus, and the Combes-Thomas estimates proved in the previous subsection.

\subsubsection{Helffer-Sj\"{o}strand formulas}

The Helffer-Sj\"{o}strand formula for smooth functions of self-adjoint operators goes back to \cite{helffer2005equation}. See also \cite[Sec. 2.2.]{davies1995spectral} and \cite[Sec. 9]{hunziker2000quantum}, among others. We review this formula now.

Given a compactly supported smooth function $f : \R \to \C$ and any $N \geq 2$ we may consider the almost analytic extension $\tilde f : \C \to \C$ of $f$ given by
$$ \tilde f_N := \chi(z) \sum_{k = 0}^N f^{(k)}(x) \frac{(\iu y)^k}{k!},  $$
where we wrote $z = x + \iu y$ and where $\chi \in C^{\infty}(\C)$ is supported on the strip $\{ x + \iu y \in \C \, : \, \abs{y} < 1 \}$, and $\chi(z) = 1$ for all $z$ in a neighbourhood of $\supp(f)$ in the complex plane. It is an extension of $f$ in the sense that $\tilde f_N(x) = f(x)$ for all $x \in \R$, and it is almost analytic in the sense that $\abs{ \partial_{\bar z} \tilde f_N(z) } \leq c_N \abs{y}^{N}$ for some constants $c_N < \infty$. Moreover, $\partial_{\bar z} \tilde f_N$ is compactly supported on $\{x + \iu y \in \C \, : \, x \in \supp(f), \abs{y} < 1 \}$. Here, $\partial_{\bar z} = \frac{1}{2}(\partial_x + \iu \partial_y)$, following the conventions of \cite{davies1995spectral}. Moreover, $\tilde f_N$ and all its derivatives are supported in the strip $\{ x + \iu y \in \C \, : \, \abs{y} < 1 \}$. The Helffer-Sj\"{o}strand formula then says that for any bounded self-adjoint operator $A$ on a separable Hilbert space $\caH$ we have
\begin{equation} \label{eq:HS for self-adjoints}
    f(A) = -\frac{1}{\pi} \, \int_{\C} \dd x \dd y \, \partial_{\bar z} \tilde f_N(z) \, (z - A)^{-1}.
\end{equation}

We will also need a Helffer-Sj\"{o}strand formula for smooth functions of a unitary operator $U$ on $\caH$. The version given in \cite[~Lemma A.6.]{shapiro2019strongly} is especially well-suited for our purposes. Let us review it here.

For any smooth function $g \in C^{\infty}(\T)$ there exists for any $N \geq 2$ an almost analytic extension $\tilde g_N \in C^{\infty}(\C)$ of $g$ such that $\abs{ \partial_{\bar z} \tilde g_N(z) } \leq c_N \abs{\abs{z} - 1}^{N}$ for a constant $c_N < \infty$. Moreover, $\tilde g_N$ can be chosen so that it is supported on the annulus $\{ z \in \C \,: \, 1/2 < \abs{z} < 1+1/2 \}$. For any such $\tilde g_N$ we then have
\begin{equation} \label{eq:HS for unitaries}
    g(U) = -\frac{1}{\pi} \int_{\C} \dd x \dd y \, \partial_{\bar z} \tilde g_N(z)  \, (z - U)^{-1}.
\end{equation}
(The apparent difference with Eq. A.2. of \cite[~Lemma A.6.]{shapiro2019strongly} is because $\dd z \dd \bar z = - 2 \iu \dd x \dd y$.)

\subsubsection{Smooth functions preserve rapid locality and localization}

\begin{proposition} \label{prop:rapid locality is preserved by smooth functions}
    Let $\{ A_i \}_{i \in \caI}$ be a family of self-adjoint bounded operators on $\H{d}{n}$ which is uniformly rapidly local, and let $f \in C^{\infty}_c(\R)$ be a compactly supported smooth function. Then the family $\{ f(A_i) \}_{i \in \caI}$ is uniformly rapidly local.

    If $\{ B_i \}_{i \in \caI}$ is another family of self-adjoint bounded operators on $\H{d}{n}$ which is uniformly rapidly local, and such that $\{ A_i - B_i \}_{i \in \caI}$ is uniformly rapidly local and localized near some $\{ \Sigma_i \}_{i \in \caI}$, then the family $\{ f(A_i) - f(B_i) \}_{i \in \caI}$ is uniformly rapidly local and localized near $\{ \Sigma_i\}_{i \in \caI}$.

    Similarly, let $\{ U_i \}_{i \in \caI}$ be a family of unitary operators on $\H{d}{n}$ which is uniformly rapidly local, and let $g \in C^{\infty}(\T)$ be a smooth function on the unit circle. Then the family $\{ g(U_i) \}_{i \in \caI}$ is uniformly rapidly local.

    If $\{ V_i \}_{i \in \caI}$ is another family of unitary operators on $\H{d}{n}$ which is uniformly rapidly local, and such that $\{ U_i - V_i \}_{i \in \caI}$ is uniformly rapidly local and localized near $\{ \Sigma_i \}_{i \in \caI}$, then the family $\{ g(U_i) - g(V_i) \}_{i \in \caI}$ is uniformly rapidly local and localized near $\{ \Sigma_i \}_{i \in \caI}$.
\end{proposition}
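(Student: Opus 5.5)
The plan is to combine the Helffer-Sj\"{o}strand formulas \eqref{eq:HS for self-adjoints} and \eqref{eq:HS for unitaries} with the Combes-Thomas bound of Proposition \ref{prop:rapidly local Combes-Thomas}. The almost-analytic weight $\partial_{\bar z}\tilde f_M$ vanishes to order $M$ at the real axis, and this should absorb the blow-up $\eta^{-(N+1)}$ of the resolvent kernel. Throughout I work with kernel bounds $\abs{\,\cdot\,(x,y)}$ directly and pass to uniformity via Lemma \ref{lem:norm characterization of rapid locality}.

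\textbf{Self-adjoint case, locality.} Fix $N$ and choose $M = N+1$ in the almost-analytic extension. For $z = x'+\iu y'$ with $y' \neq 0$, self-adjointness gives $\eta = \d(z,\sigma(A_i)) \geq \abs{y'}$. Proposition \ref{prop:rapidly local Combes-Thomas} then gives
\[ \abs{(z-A_i)^{-1}(x,y)} \leq D_N \abs{y'}^{-(N+1)}\jb{x-y}^{-N} \]
on the strip $\abs{y'}<1$, with $D_N$ uniform in $i$. Since $\abs{\partial_{\bar z}\tilde f_M} \leq c_M\abs{y'}^{M}$ is supported in a compact set, integrating the kernel of \eqref{eq:HS for self-adjoints} yields
\[ \abs{f(A_i)(x,y)} \leq \pi^{-1} c_M D_N \,\mathrm{Area}(\supp) \,\jb{x-y}^{-N}. \]
Exchanging the integral and the matrix element is justified because the integrand is bounded in operator norm by $c_M\abs{y'}^{M-1}$.

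\textbf{Self-adjoint case, difference.} Apply the resolvent identity
\[ (z-A_i)^{-1}-(z-B_i)^{-1} = (z-A_i)^{-1}(A_i-B_i)(z-B_i)^{-1}. \]
By Lemma \ref{lem:*-algebras and ideals of rapidly local operators}, item \eqref{eq:Schur norm property for ideal}, the weighted Schur norm is bounded by
\[ \norm{(z-A_i)^{-1}}_{2N}\,\norm{A_i-B_i}_{N,\Sigma_i}\,\norm{(z-B_i)^{-1}}_{2N}. \]
Each resolvent factor has $\norm{\cdot}_{2N} \lesssim \abs{y'}^{-(2N+d+2)}$ by Combes-Thomas together with Lemma \ref{lem:norm characterization of rapid locality}. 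So the product blows up like $\abs{y'}^{-K}$ with $K = 2(2N+d+2)$. Taking $M \geq K+1$ in $\tilde f_M$ makes the integral converge. Since the weighted Schur norm bounds the matrix elements, this gives uniform kernel bounds localized near $\Sigma_i$.

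\textbf{Unitary case.} The argument is identical using \eqref{eq:HS for unitaries}. The only new ingredient is the spectral distance estimate: $\sigma(U_i)\subset\T$, so $\eta = \d(z,\sigma(U_i)) \geq \abs{\abs z - 1}$. Since $\abs{\partial_{\bar z}\tilde g_M} \leq c_M\abs{\abs z-1}^M$ is supported in the compact annulus, the same power counting applies.

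\textbf{Main obstacle.} The main obstacle is purely bookkeeping. One must make sure that the Combes-Thomas constants depend only on the uniform constants $\{C_N\}$, which Proposition \ref{prop:rapidly local Combes-Thomas} guarantees. One must also take the almost-analytic order $M$ large enough relative to $N$ to beat the product of two resolvent blow-ups in the difference case. Note that the extension $\tilde f_M$ depends on $N$, but the formula holds for every $M\ge2$, so this is harmless.
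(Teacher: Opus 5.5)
Your proposal is correct and follows the paper's proof essentially step for step. The locality bound uses Helffer--Sj\"{o}strand with the Combes--Thomas estimate $\min\{\eta,1\}^{-(N+1)}\le\abs{\Im z}^{-(N+1)}$, and the localized difference uses the resolvent identity with item (5) of Lemma~\ref{lem:*-algebras and ideals of rapidly local operators}, giving the same power count $4N+2d+4$. Your observation that $\d(z,\sigma(U_i))\ge\abs{\abs{z}-1}$ fills in the unitary case, which the paper leaves to the reader; taking $M\ge K+1$ rather than the paper's $M=K$ is immaterial.
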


\begin{proof}
    For the self-adjoint case, let $N \in \N$ and let $\tilde f_{N+1}$ be an almost-analytic extension of $f$ supported in $\Sigma_f := \{ x + \iu y \in \C \, : \, x \in \supp(f),  \abs{y} < 1 \}$. It satisfies $\abs{\partial_{\bar z} \tilde f_{N+1}(z)} \leq c_N \abs{y}^{N+1}$ for some constant $c_N < \infty$. It follows from Proposition \ref{prop:rapidly local Combes-Thomas} that the resolvents $\{ (z - A_i)^{-1} \}_{i \in \caI}$ are uniformly rapidly local with constants $\{ D_N \abs{\Im z}^{-(N + 1)} \}_{N \in \N}$ for all $z \in \C$ with $0 < \abs{\Im z} \leq 1$. Using the Helffer-Sj\"{o}strand formula \eqref{eq:HS for self-adjoints} for self-adjoint operators we therefore get
    \begin{align*}
        \abs{ f(A_i)(x, y) } &\leq \frac{1}{\pi} \int_{\C} \dd x \dd y \, \abs{ \partial_{\bar z} \tilde f_{N+1}(z) } \, \abs{ (z - A_i)^{-1}(x, y) } \\
        &\leq \frac{1}{\pi} \int_{\Sigma_f} \dd x \dd y \, c_N \abs{y}^{N+1} \, D_N \abs{y}^{-(N+1)} \, \jb{x-y}^{-N} \\
        &\leq C'_N \jb{x-y}^{-N}
    \end{align*}
    for some constant $C'_N$ that only depends on $f$ and on $D_N$. This proves the first claim for self-adjoint bounded operators.

    For the second claim, use the Helffer-Sj\"{o}strand formula and the resolvent identity to get
    $$ f(A_i) - f(B_i) = -\frac{1}{\pi} \, \int_{\Sigma_f} \, \dd x \dd y \, \partial_{\bar z} \tilde f_N(z)  \, (z-A_i)^{-1} (A_i - B_i) (z - B_i)^{-1}. $$
    It follows from Proposition \ref{prop:rapidly local Combes-Thomas} and Lemma \ref{lem:norm characterization of rapid locality} that there are constants $\{ D_N \}_{N \in \N}$ so that $\norm{(z - A_i)^{-1}}_N, \norm{(z - B_i)^{-1}}_N \leq D_N \abs{\Im z}^{-(N + d + 2)}$ for all $N \in \N$ and all $z \in \C$ with $0 < \abs{\Im z} \leq 1$. Also by Lemma \ref{lem:norm characterization of rapid locality}, there are constants $\{ C_N \}_{N \in \N}$ so that $\norm{ A_i - B_i }_{N, \Sigma_i} \leq C_N$ for all $N \in \N$. Using items \eqref{eq:sub-multiplicativity} and \eqref{eq:Schur norm property for ideal} from Lemma \ref{lem:*-algebras and ideals of rapidly local operators} in the Helffer-Sj\"{o}strand formula with $N' = 4N + 2d + 4$ we therefore find
    \begin{align*}
        \norm{ f(A_i) - f(B_i) }_{N, \Sigma_i} &\leq \frac{1}{\pi} \int_{\Sigma_f} \, \dd x \dd y \, \abs{ \partial_{\bar z} \tilde f_{N'}(z) } \, \norm{ (z - A_i)^{-1} }_{2N} \,\, \norm{A_i - B_i}_{N, \Sigma_i} \,\, \norm{ (z - B_i)^{-1} }_{2N}  \\
        &\leq\frac{1}{\pi} \int_{\Sigma_f} \, \dd x \dd y \, c_N \abs{y}^{4N + 2d + 4} \, D_{2N}^2 \, \abs{y}^{-(4N + 2d + 4)} \, C_N < \infty
    \end{align*}
    for all $N \in \N$. It now follows from Lemma \ref{lem:norm characterization of rapid locality} that $\{ f(A_i) - f(B_i) \}_{i \in \caI}$ is uniformly local and localized near $\{\Sigma_i \}_{i \in \caI}$, as required.

    The proofs in the unitary case are virtually identical to those of the self-adjoint case, and are left to the reader.
\end{proof}

\subsection{Rapid locality and polar decomposition}

We use the stability of rapid locality and localization under the functional calculus obtained in the previous section to show that this property is also stable under taking the polar part of an invertible operator.

\begin{lemma} \label{lem:A close to U implies pol(A) close to U}
    Let $\caH$ be a separable Hilbert space and $A, U \in \caB(\caH)$ with $U$ unitary such that $\norm{A - U} \leq \delta < 1$. Then $\pol(A)$, the polar part of $A$, is unitary and 
    $$ \norm{ \pol(A) - U } \leq \frac{2\delta}{2 - \delta}. $$
\end{lemma}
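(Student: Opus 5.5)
The plan is to reduce to the case $U = \I$ and then bound the distance from the polar part to the identity using the spectrum of $\abs{A}$.

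\emph{Reduction.} Put $B := U^* A$, so that $\norm{B - \I} \leq \delta < 1$. A Neumann series shows that $B$ is invertible, and hence so is $A = UB$. For invertible $A$ the polar decomposition $A = \pol(A)\abs{A}$ has $\abs{A}$ invertible and $\pol(A) = A\abs{A}^{-1}$ unitary. Since $\abs{A}^2 = A^*A = B^*B = \abs{B}^2$, we get $\abs{A} = \abs{B}$ and therefore $\pol(A) = U\pol(B)$. It follows that $\norm{\pol(A) - U} = \norm{\pol(B) - \I}$, so it suffices to treat $U = \I$.

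\emph{Spectral information on $\abs{B}$.} Write $W := \pol(B)$ and $P := \abs{B}$, so $B = WP$. The estimates $\norm{B} \leq 1 + \delta$ and $\norm{B^{-1}} \leq (1-\delta)^{-1}$ give $\sigma(P) \subset [1-\delta, 1+\delta]$. Hence $\norm{P - \I} \leq \delta$, and therefore also $\norm{W - B} = \norm{W(\I - P)} \leq \delta$.

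\emph{Estimating $\norm{W - \I}$.} Combining these naively,
$$W - \I = \big((B - \I) - (P - \I)\big)P^{-1},$$
gives only $2\delta/(1-\delta)$. To reach the sharper constant $2\delta/(2-\delta)$ I would symmetrize using the identity $W = \big(B + (B^*)^{-1}\big)\big(P + P^{-1}\big)^{-1}$. This holds because $(B^*)^{-1} = WP^{-1}$, and the operator $P + P^{-1}$ commutes with $P$ and is $\geq 2$. Subtracting $\I = (P + P^{-1})(P + P^{-1})^{-1}$ gives
$$W - \I = \Big( (B - P) + \big((B^*)^{-1} - P^{-1}\big) \Big)\big(P + P^{-1}\big)^{-1}.$$

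The remaining step, and the one I expect to be the main obstacle, is to bound this expression by $2\delta/(2-\delta)$. Here $\norm{B - P} \leq 2\delta$ and $\norm{(P+P^{-1})^{-1}} \leq \tfrac12$, but the term $(B^*)^{-1} - P^{-1} = (W - \I)P^{-1}$ is where the sharp constant must come from. It reintroduces $W - \I$, so I would treat the identity as a self-consistent inequality. Setting $t := \norm{W - \I}$, I would aim for an inequality of the form $t \leq \delta + \tfrac{\delta}{2}\, t$ (up to constants to be pinned down), which rearranges to exactly $t \leq 2\delta/(2-\delta)$. To get it I would look for the right grouping of terms in the identity above, or equivalently use the identity
$$(W - \I)\,\tfrac{1}{2}(\I + P) = \tfrac{1}{2}\big((B - \I) + (W - P)\big).$$
In that grouping the factor $\tfrac12(\I + P)$ has spectrum in $[1 - \tfrac{\delta}{2}, 1 + \tfrac{\delta}{2}]$, which is the natural source of the denominator $2 - \delta$.

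If the sharp constant turns out to be delicate, I would fall back to a scalar argument. Via the $2\times 2$ reduction on the spectral subspaces of $P$, the problem reduces to maximizing $\abs{w - 1}$ over $w \in \T$ and $p \in [1-\delta, 1+\delta]$ subject to $\abs{wp - 1} \leq \delta$. The extremal value of this elementary planar geometry problem is at most $2\delta/(2-\delta)$.
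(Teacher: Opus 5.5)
Your reduction to $U=\I$, the bound $\sigma(P)\subset[1-\delta,1+\delta]$, and the identity $(W-\I)(\I+P)=(B-\I)+(W-P)$ are all correct. The last one is in fact the paper's identity $(\I+\abs{A})(\I-V^*U)=V^*E-E^*U$, taken adjoint and specialized to $U=\I$.

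\textbf{The gap.} You never bound $\norm{W-P}$ without reintroducing $t=\norm{W-\I}$, and this is exactly the step that decides the constant. If you estimate $\norm{W-P}\le t+\delta$, your identity gives $t(1-\delta/2)\le \delta+t/2$. That yields only $t\le 2\delta/(1-\delta)$, which does not prove the lemma as stated.

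The missing observation is that $B^*=PW^*$ gives $P=\abs{B}=B^*W$. Hence
$$ W-P=(\I-B^*)W, \qquad \norm{W-P}\le \norm{\I-B}\le \delta . $$
With this, directly
$$ \norm{W-\I}\le \norm{(\I+P)^{-1}}\big(\norm{B-\I}+\norm{W-P}\big)\le \frac{2\delta}{2-\delta}, $$
using $P\ge (1-\delta)\I$. No self-consistency argument is needed, although this bound is equivalent to your target inequality $t\le \delta+\tfrac{\delta}{2}t$. In the paper, the same point appears as the observation that both terms $V^*E$ and $E^*U$ have norm at most $\norm{E}\le\delta$.

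\textbf{The fallback.} It does not close the gap either. Reducing to a scalar problem over $w\in\T$, $p\in[1-\delta,1+\delta]$ with $\abs{wp-1}\le\delta$ presupposes that $W$ and $P$ can be diagonalized simultaneously, i.e. that $B=U^*A$ is normal. This fails for a general perturbation $A$ of $U$, and you give no argument that the non-commuting case reduces to the scalar one. That non-commuting case is precisely where the work lies.
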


\begin{proof}
    This proof is adapted from arguments in \cite{li1995new}. Let $A = V \abs{A}$ be the polar decomposition of $A$. Since $\delta < 1$ we know that $A$ is invertible and therefore $V = \pol(A)$ is unitary.

    Put $E = A - U$ and $W = V^* U$. Then $V^* E = \abs{A} - W$ and $E^*U = \abs{A}W - \I$. It follows that
    $$ V^* E  - E^* U = (\I + \abs{A})(\I - W)$$
    and therefore
    $$ \I - W = (\I + \abs{A})^{-1} ( V^* E - E^* U ). $$
    It follows that
    $$ \norm{V - U} = \norm{ \I - W } \leq \norm{ (\I + \abs{A})^{-1} } \big( \norm{ V^* E } + \norm{E^* U} \big) \leq \frac{2 \delta}{1 + m(A)}$$
    where
    $$ m(A) = \inf_{\norm{\psi} = 1} \norm{  \abs{A} \psi} \geq 1 - \delta. $$
    The claim follows.
\end{proof}

\begin{lemma} \label{lem:rapid locality and polar parts}
    Take $0 \leq \delta < 1$ and let $\{A_i \}_{i \in \caI}$ be a uniformly rapidly local family of operators on $\H{d}{n}$ such that for each $i \in \caI$ there is a unitary $U_i$ such that $\norm{A_i - U_i} \leq \delta$. Then the polar parts $\{ \pol(A_i) \}_{i \in \caI}$ form a family of unitary operators which is uniformly rapidly local and satisfies
    $$ \norm{ \pol(A_i) - U_i} \leq \frac{2\delta}{2 - \delta} \quad \text{for all } \,\, i \in \caI. $$
    If $\{ A_i - \I \}_{i \in \caI}$ is uniformly rapidly local and localized near $\{\Sigma_i\}_{i \in \caI}$, then so is $\{ \pol(A_i) - \I \}_{i \in \caI}$.
\end{lemma}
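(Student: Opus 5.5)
The plan is to write $\pol(A_i) = A_i \abs{A_i}^{-1} = A_i (A_i^* A_i)^{-1/2}$ and reduce everything to the smooth functional calculus for self-adjoint operators, Proposition \ref{prop:rapid locality is preserved by smooth functions}, together with the algebra and ideal properties of Lemma \ref{lem:*-algebras and ideals of rapidly local operators}. The norm bound $\norm{\pol(A_i) - U_i} \leq \frac{2\delta}{2-\delta}$ and unitarity of $\pol(A_i)$ follow directly from Lemma \ref{lem:A close to U implies pol(A) close to U}, so only the locality claims need proof.

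First I would localize the spectrum of $B_i := A_i^* A_i$ uniformly. Since $\norm{A_i - U_i} \leq \delta < 1$, we have $A_i = U_i(\I + U_i^*(A_i - U_i))$, so $\norm{A_i} \leq 1 + \delta$ and $\norm{A_i^{-1}} \leq (1-\delta)^{-1}$. Hence $\sigma(B_i) \subset [(1-\delta)^2, (1+\delta)^2] =: J$ for all $i$. Choose $f \in C_c^\infty(\R)$ with $f(t) = t^{-1/2}$ on a neighbourhood of $J$; then $\abs{A_i}^{-1} = f(B_i)$. The family $\{B_i\}$ is self-adjoint and uniformly rapidly local by sub-multiplicativity and adjoint invariance of the $\norm{\cdot}_N$ norms together with Lemma \ref{lem:norm characterization of rapid locality}. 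Proposition \ref{prop:rapid locality is preserved by smooth functions} then gives that $\{f(B_i)\}$ is uniformly rapidly local, and sub-multiplicativity again shows that $\{\pol(A_i) = A_i f(B_i)\}$ is uniformly rapidly local, with constants depending only on $\delta$ and the constants of $\{A_i\}$.

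For the localized statement, write $A_i = \I + K_i$ with $\{K_i\}$ uniformly in $\LOC_{\Sigma_i}$. Then $B_i - \I = K_i + K_i^* + K_i^* K_i$, which is uniformly rapidly local and localized near $\Sigma_i$ by the ideal property (item \eqref{eq:Schur norm property for ideal}). Apply the second part of Proposition \ref{prop:rapid locality is preserved by smooth functions} to the pair $B_i$ and $\I$. Note that $\I$ is trivially uniformly rapidly local, with $\sigma(\I) = \{1\} \subset J$, so the same $f$ works and $f(\I) = \I$. This shows $\{f(B_i) - \I\}$ is uniformly in $\LOC_{\Sigma_i}$. Finally,
$$\pol(A_i) - \I = (\I + K_i)(\I + (f(B_i) - \I)) - \I = K_i + (f(B_i)-\I) + K_i (f(B_i) - \I),$$
and each term is uniformly rapidly local and localized near $\Sigma_i$, by the ideal property once more.

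The only point needing care is the choice of $f$: it must be a single compactly supported smooth function, independent of $i$, agreeing with $t^{-1/2}$ on all the spectra involved. This is exactly what the uniform spectral inclusion $\sigma(B_i) \subset J \subset (0,\infty)$ provides. It is also why we need the hypothesis $\delta < 1$ uniformly, rather than invertibility of each $A_i$ separately.
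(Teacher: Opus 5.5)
Your proposal is correct and follows the paper's proof essentially step for step. The norm bound and unitarity come from Lemma \ref{lem:A close to U implies pol(A) close to U}, and $\abs{A_i}^{-1}$ is obtained as $h(A_i^*A_i)$ for a single compactly supported $h$ equal to $x^{-1/2}$ on $[(1-\delta)^2,(1+\delta)^2]$. The localized claim then uses $h(\I)=\I$ with the second part of Proposition \ref{prop:rapid locality is preserved by smooth functions}, together with the same three-term expansions of $A_i^*A_i-\I$ and $\pol(A_i)-\I$ that the paper writes down.
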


\begin{proof}
    The claim about the $\norm{\pol(A_i) - U_i}$ is immediate from Lemma \ref{lem:A close to U implies pol(A) close to U}. What we have to show is that $\{ \pol(A_i) \}_{i \in \caI}$ is uniformly rapidly local.

    Since $\norm{A_i - U_i} \leq \delta < 1$ we have
    $$  (1-\delta)^2\I \leq A_i^* A_i \leq (1 + \delta)^2 \I $$
    for all $i \in \caI$. Moreover, it follows from Lemma \ref{lem:*-algebras and ideals of rapidly local operators} that $\{ A_i^* A_i \}_{i \in \caI}$ is uniformly rapidly local. There is a compactly supported smooth function $h : \R \to \R$ such that $h(x) = x^{-1/2}$ for all $x \in [(1-\delta)^2, (1+\delta)^2]$, so it follows from Proposition \ref{prop:rapid locality is preserved by smooth functions} that $\{ \abs{A_i}^{-1} = (A_i^* A_i)^{-1/2} \}_{i \in \caI}$ is uniformly rapidly local. Finally, Lemma \ref{lem:*-algebras and ideals of rapidly local operators} implies that $\{ \pol(A_i) = A_i \abs{A_i}^{-1} \}_{i \in \caI}$ is uniformly rapidly local.

    For the final claim, if $\{ A_i - \I \}$ is uniformly rapidly local and localized near $\{ \Sigma_i \}_{i \in \caI}$ then since $A_i^* A_i - \I = (A_i^* -\I)(A_i - \I) + (A_i^* - \I) + (A_i - \I)$, Lemma \ref{lem:*-algebras and ideals of rapidly local operators} implies that $\{ A_i^* A_i - \I \}_{i \in \caI}$ is also uniformly rapidly local and localized near $\{ \Sigma_i \}_{i \in \caI}$. Now noting that $\I = h(\I)$ it follows from Proposition \ref{prop:rapid locality is preserved by smooth functions} that $\{ \abs{A_i}^{-1} - \I \}_{i \in \caI}$ is uniformly rapidly local and localized near $\{ \Sigma_i \}_{i \in \caI}$. Since $\pol(A_i) - \I = A_i \abs{A_i}^{-1} - \I = (A_i - \I)(\abs{A_i}^{-1} - \I) + (A_i - \I) + (\abs{A_i}^{-1} - \I)$ the final claim follows by Lemma \ref{lem:*-algebras and ideals of rapidly local operators}.
\end{proof}

\subsection{Some applications of the Schur test}

Recall the Schur test
\begin{lemma} \label{lem:Schur test}
    Let $\caH = \bigoplus_{i \in \caI} \caH_i$ and $\caH' = \bigoplus_{i \in \caI} \caH'_i$ be separable Hilbert spaces decomposed as countable direct sums, and let $A \in \caB(\caH, \caH')$ be a bounded operator. Write $A(i, j)$ for the block of $A$ mapping from $\caH_j$ to $\caH'_i$ and suppose
    \begin{align*}
        \sum_j \norm{A(i, j)} &\leq \al \quad \text{for all} \,\, i \in \caI,\\
        \sum_i \norm{A(i, j)} &\leq \beta \quad \text{for all} \, \, j \in \caI.
    \end{align*}
    Then $\norm{A} \leq \sqrt{\al \beta}$.
\end{lemma}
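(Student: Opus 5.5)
We prove the block Schur test by reducing it to a scalar Schur test. The argument mirrors the proof of item \eqref{eq:operator norm bounded by Schur norm} of Lemma \ref{lem:*-algebras and ideals of rapidly local operators}, with the matrix entries $\abs{A(x,y)}$ replaced by the block norms $\norm{A(i,j)}$.

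First take $\psi = (\psi_j)_{j \in \caI} \in \caH$ with only finitely many nonzero components $\psi_j \in \caH_j$. Such vectors are dense, and $A$ is bounded, so it suffices to bound $\norm{A\psi}$ for them. The $i$-th component of $A\psi$ is $(A\psi)_i = \sum_j A(i,j)\psi_j$, which is a finite sum. Hence
$$ \norm{(A\psi)_i} \leq \sum_j \norm{A(i,j)} \, \norm{\psi_j}. $$
If $\alpha\beta = 0$, then every block vanishes, so $A = 0$ and there is nothing to prove. We may therefore assume $\alpha, \beta > 0$.

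Next, apply Cauchy--Schwarz by splitting $\norm{A(i,j)} = \norm{A(i,j)}^{1/2}\cdot\norm{A(i,j)}^{1/2}$. This gives
$$ \norm{(A\psi)_i}^2 \leq \Big(\sum_j \norm{A(i,j)}\Big)\Big(\sum_j \norm{A(i,j)}\,\norm{\psi_j}^2\Big) \leq \alpha \sum_j \norm{A(i,j)}\,\norm{\psi_j}^2. $$
Now sum over $i$. All terms are nonnegative, so Tonelli lets us exchange the order of summation:
$$ \norm{A\psi}^2 = \sum_i \norm{(A\psi)_i}^2 \leq \alpha \sum_j \Big(\sum_i \norm{A(i,j)}\Big)\norm{\psi_j}^2 \leq \alpha\beta\,\norm{\psi}^2. $$
Density then gives $\norm{A} \leq \sqrt{\alpha\beta}$.

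The one point that needs care is the first identity, $(A\psi)_i = \sum_j A(i,j)\psi_j$. For finitely supported $\psi$ it follows directly from linearity. We also need $\norm{A\psi}^2 = \sum_i \norm{(A\psi)_i}^2$, which holds because the decomposition $\caH' = \bigoplus_i \caH'_i$ is orthogonal. Neither step is a genuine obstacle.
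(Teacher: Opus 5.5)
Your proposal is correct and is essentially the paper's proof: Cauchy--Schwarz on each component $(A\psi)_i$ with the split $\norm{A(i,j)} = \norm{A(i,j)}^{1/2}\norm{A(i,j)}^{1/2}$, then summing over $i$ and applying the column bound. Your restriction to finitely supported $\psi$ followed by a density argument makes explicit a step the paper leaves implicit, and the separate $\alpha\beta = 0$ case is harmless but unnecessary.
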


\begin{proof}
    For any vector $\psi \in \caH$ we write $\psi = \bigoplus_{i \in \caI} \psi(i)$ with $\psi(i)$ the component of $\psi$ in $\caH_i$. By the Cauchy-Schwarz inequality we have
    \begin{align*}
        \abs{ (A \psi)(i) }^2 &= \norm{ \sum_j \, A(i, j) \psi(j) }^2 \\
        &\leq \left( \sum_j \norm{A(i, j)} \right)  \left( \sum_j \norm{A(i, j)} \norm{\psi(j)}^2 \right) \\
        &\leq \al \, \sum_{j} \norm{A(i, j)} \norm{\psi(j)}^2
    \end{align*}
    for all $i \in \caI$. Summing this over $i$ yields
    $$ \norm{A \psi}^2 \leq \al \sum_{j} \left( \sum_i \norm{A(i, j)} \right) \norm{\psi(j)}^2 \leq \al \beta \norm{\psi}^2. $$
\end{proof}

The Schur test is a useful tool to derive properties of rapidly local operators. We will use it throughout the manuscript.

Let us here already note the following properties.

\begin{lemma} \label{lem:cutting tails}
    Let $A \in \caB(\H{d}{n})$ be rapidly local with constants $\{ C_N \}_{N \in \N}$. For any $R > 0$ define $A_R$ by the kernel
    $$
        A_R(x, y) = \begin{cases} A(x, y) \quad & \text{if} \, \abs{x-y} \leq R \\
                0 & \text{if}  \, \abs{x-y} > R.
        \end{cases}
    $$
    There are constants $\{D_N\}_{N \in \N}$ which depend only on the $\{ C_N \}$ and on $d$, such that
    $$ \norm{ A - A_R } \leq D_N R^{-N} $$
    for all $N \in \N$.
\end{lemma}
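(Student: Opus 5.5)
The plan is to apply the Schur test, Lemma \ref{lem:Schur test}, to the operator $A - A_R$, using the scalar decomposition $\H{d}{n} = \bigoplus_{x \in \Z^d} \iota_x \C^n$ into the fibres at each lattice site. With respect to this decomposition the blocks are the matrix elements: $(A - A_R)(x,y) = A(x,y)$ when $\abs{x-y} > R$, and $(A - A_R)(x,y) = 0$ otherwise. Hence
$$ \sum_y \abs{(A - A_R)(x,y)} = \sum_{y \,:\, \abs{x-y} > R} \abs{A(x,y)}, $$
and the same holds with the roles of $x$ and $y$ exchanged. It therefore suffices to bound both the row sums and the column sums by $D_N R^{-N}$, uniformly in $x$ (respectively $y$); the Schur test then gives $\norm{A - A_R} \leq D_N R^{-N}$.

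For the row sums, fix $N$ and use the rapid locality bound with exponent $N + d + 1$:
$$ \abs{A(x,y)} \leq C_{N+d+1} \jb{x-y}^{-(N+d+1)}. $$
On the region $\abs{x-y} > R$ we have $\jb{x-y}^{-N} \leq (1+R)^{-N} \leq R^{-N}$. Pulling this factor out of the sum gives
$$ \sum_{\abs{x-y} > R} \abs{A(x,y)} \leq C_{N+d+1} R^{-N} \sum_y \jb{x-y}^{-(d+1)} \leq C_{N+d+1} c_d R^{-N}, $$
where $c_d$ is the constant from Lemma \ref{lem:norm characterization of rapid locality}. The column sums are bounded identically, since the kernel bound is symmetric in $x$ and $y$. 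So we can take $D_N = c_d C_{N+d+1}$, which depends only on the $\{C_N\}$ and on $d$, as required.

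None of these steps is a real obstacle. The only point needing care is that $A_R$ is indeed a bounded operator, so that the statement makes sense. This follows from the same Schur test: the row and column sums of $A_R$ are dominated by $\norm{A}_0 < \infty$, which is finite by Lemma \ref{lem:norm characterization of rapid locality}. Alternatively, once $A - A_R$ is known to be bounded, $A_R$ is bounded because $A$ is.
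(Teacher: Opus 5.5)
Your proposal is correct and is essentially the paper's proof: the Schur test on $A - A_R$, with row and column sums bounded by pulling out $(1+R)^{-N}$ on the region $\abs{x-y} > R$. The only difference is bookkeeping. The paper bounds those sums by $(1+R)^{-N}\norm{A}_N$ and then invokes Lemma \ref{lem:norm characterization of rapid locality} to get $\norm{A}_N \leq C_{N+d+1} c_d$, whereas you carry out that same computation inline, arriving at the identical constant $D_N = c_d C_{N+d+1}$.
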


\begin{proof}
    For any $N \in \N$ we have
    $$
        \sum_y \abs{(A -  A_R)(x, y)} = \sum_{y : \abs{x-y} > R} \abs{A(x, y)} \leq (1+R)^{-N} \, \sum_y \, \abs{A(x, y)} \jb{x-y}^N \leq (1+R)^{-N} \norm{A}_{N}.
    $$
    The same bound holds for the sum over $x$ and so the Schur test \ref{lem:Schur test} implies
    $$ \norm{ A - A_R } \leq (1 + R)^{-N} \norm{A}_N \leq C_{N + d + 1} c_d R^{-N} $$
    where we used Lemma \ref{lem:norm characterization of rapid locality}.
\end{proof}

\begin{lemma} \label{lem:distant regions}
    Let $A \in \caB(\H{d}{n})$ be rapidly local with constants $\{ C_N \}_{N \in \N}$. There are constants $\{D_N\}_{N \in \N}$ which depend only on the $\{ C_N \}$ and on $d$, such that
    $$ \norm{ \I_X A \I_Y } \leq D_N \, \big( 1 + \d(X, Y) \big)^{-N} $$
    for all $N \in \N$ and any regions $X, Y \subset \R^d$.
\end{lemma}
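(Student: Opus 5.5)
The plan is to apply the Schur test, Lemma \ref{lem:Schur test}, to the operator $\I_X A \I_Y$ with respect to the position basis. Its kernel is $A(x, y)$ when $x \in X \cap \Z^d$ and $y \in Y \cap \Z^d$, and zero otherwise. For any such pair we have $\abs{x - y} \geq \d(X, Y)$, and hence $\jb{x - y} \geq 1 + \d(X, Y)$.

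The key step is to bound the row sums. For fixed $x \in X$, we split off a factor of the weight and write
$$ \sum_{y \in Y} \abs{A(x, y)} \leq \big(1 + \d(X, Y)\big)^{-N} \sum_{y} \abs{A(x, y)} \jb{x-y}^{N} \leq \big(1 + \d(X, Y)\big)^{-N} \norm{A}_N. $$
The column sums, taken over $x \in X$ at fixed $y \in Y$, obey the identical bound. Note that this works for arbitrary regions $X, Y$; if $\d(X, Y) = 0$ the estimate reduces to the trivial bound $\norm{A} \leq \norm{A}_0 \leq \norm{A}_N$.

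With both sums bounded, the Schur test gives
$$ \norm{\I_X A \I_Y} \leq \big(1 + \d(X, Y)\big)^{-N} \norm{A}_N. $$
Lemma \ref{lem:norm characterization of rapid locality} then yields $\norm{A}_N \leq C_{N+d+1} c_d$. We therefore set $D_N := C_{N + d + 1} c_d$, which depends only on $\{C_N\}$ and $d$, as required.

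No real obstacle is expected. The only points needing care are the interpretation of $\I_X$ as the projection onto sites in $X \cap \Z^d$, and the fact that the weight $\jb{x-y}^N$ dominates $(1 + \d(X,Y))^N$ on the support of the kernel.
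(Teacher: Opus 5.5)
Your proof is correct and follows the paper's argument essentially verbatim. Both bound the row and column sums of $\I_X A \I_Y$ by $\big(1+\d(X,Y)\big)^{-N}\norm{A}_N$, using $\jb{x-y} \geq 1 + \d(X,Y)$ on the support of the kernel, then apply the Schur test and conclude via Lemma \ref{lem:norm characterization of rapid locality} with $D_N = C_{N+d+1} c_d$.
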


\begin{proof}
    Put $R = \d(X, Y)$. For any $N \in \N$ we have
    $$ \sum_y \, \abs{( \I_X A \I_Y)(x, y)} \leq (1 + R)^{-N} \sum_y \abs{A(x, y)} \jb{x-y}^N \leq (1+R)^{-N} \norm{A}_N.$$
    The same bound holds for the sum over $x$ so by the Schur test we get
    $$ \norm{\I_X A \I_Y} \leq \norm{A}_N (1 + R)^{-N} \leq C_{N + d + 1} c_d (1 + R)^{-N}, $$
    where we used Lemma \ref{lem:norm characterization of rapid locality}.
\end{proof}

\begin{lemma} \label{lem:restricting close to localization region}
    Suppose $A \in \caB(\H{d}{n})$ is rapidly local and localized near a region $\Sigma$ with constants $\{ C_N \}_{N \in \N}$. For any $R > 0$ define
    $$  A_R := \I_{\Sigma^{(R)}} A \I_{\Sigma^{(R)}}$$
    where $\Sigma^{(R)} = \{ x \in \R^d \, : \, \d(x, \Sigma) \leq R \}$ is the $R$-fattening of $\Sigma$. Then there are constants $\{ D_N \}_{N \in \N}$ depending only on $\{ C_N \}_{N \in \N}$ and on $d$ such that
    $$ \norm{ A - A_R } \leq D_N R^{-N}. $$
\end{lemma}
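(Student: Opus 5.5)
Write $A - A_R = \I_{\Sigma^{(R)}} A \I_{(\Sigma^{(R)})^c} + \I_{(\Sigma^{(R)})^c} A$. The second term collects all kernel entries $A(x,y)$ with $\d(\Sigma, x) > R$. The first collects those with $\d(\Sigma,x) \leq R$ but $\d(\Sigma, y) > R$. In both pieces every nonzero kernel entry satisfies $\d(\Sigma, x) + \d(\Sigma, y) > R$. It therefore suffices to bound in operator norm the operator $B$ whose kernel is $A(x,y)$ restricted to pairs with $\d(\Sigma,x)+\d(\Sigma,y) > R$. More simply, we bound each of the two pieces separately by the same argument, since each is $A$ with some kernel entries set to zero and all surviving entries lie in that region.

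For such a truncated kernel $B$ (entries either $A(x,y)$ or $0$, nonzero only where $\d(\Sigma,x)+\d(\Sigma,y)>R$), we have for any $N$
$$ \sum_y \abs{B(x,y)} \leq (1+R)^{-N} \sum_y \abs{A(x,y)} \jb{x-y}^N \jb{\d(\Sigma,x)+\d(\Sigma,y)}^N \leq (1+R)^{-N} \norm{A}_{N,\Sigma}. $$
The same bound holds with the roles of $x$ and $y$ exchanged. The Schur test (Lemma \ref{lem:Schur test}) then gives $\norm{B} \leq (1+R)^{-N}\norm{A}_{N,\Sigma}$. Lemma \ref{lem:norm characterization of rapid locality} bounds $\norm{A}_{N,\Sigma} \leq C_{N+d+1} c_d$.

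Summing the two pieces gives
$$ \norm{A - A_R} \leq 2 c_d C_{N+d+1} (1+R)^{-N} \leq D_N R^{-N}, \qquad D_N := 2 c_d C_{N+d+1}. $$
This constant depends only on $\{C_N\}_{N \in \N}$ and $d$.

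There is no real obstacle. The only care needed is the bookkeeping: checking that both pieces of the decomposition consist of kernel entries with $\d(\Sigma,x)+\d(\Sigma,y) > R$, so that the localization weight supplies the factor $(1+R)^{-N}$ in the same way the distance weight did in Lemma \ref{lem:distant regions}.
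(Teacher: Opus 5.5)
Your proof is correct and is essentially the paper's argument: a Schur test on a truncated kernel of $A$, using $\jb{\d(\Sigma,x)+\d(\Sigma,y)}^N \geq (1+R)^N$ on the surviving entries, followed by Lemma \ref{lem:norm characterization of rapid locality}. The only difference is that the paper applies the Schur test directly to $A - A_R$, whose kernel is already $A$ truncated to pairs with $x \notin \Sigma^{(R)}$ or $y \notin \Sigma^{(R)}$; your split into two pieces is unnecessary and only costs a factor $2$ in $D_N$.
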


\begin{proof}
    For any $N \in \N$ we bound
    $$ 
        \sum_y \abs{(A - A_R)(x, y)} \leq (1 + R)^{-N} \sum_{y} \abs{A(x, y)} \jb{x-y}^N \jb{ \d(\Sigma, x) + \d(\Sigma, y) }^N \leq \norm{A}_{N, \Sigma} (1 + R)^{-N}
    $$
    where we used $\jb{x-y} \geq 1$ and $\jb{ \d(\Sigma, x) + \d(\Sigma, y) } \geq 1 + R$ for all $y$ such that $(A - A_R)(x, y) \neq 0$. The same bound holds for the sum over $x$ so the Schur test \ref{lem:Schur test} gives
    $$ \norm{ A - A_R } \leq \norm{A}_{N, \Sigma} (1 + R)^{-N} \leq C_{N + d +1} c_d R^{-N}, $$
    where we used Lemma \ref{lem:norm characterization of rapid locality}.
\end{proof}

\section{Boundary unitaries and the Chern Number} \label{sec:boundary unitaries and the Chern number}

\subsection{Boundary unitaries in arbitrary dimensions} \label{subsec:boundary unitaries}

Define
\begin{align*}
    \Upper^{(d)}_a &:= \{ x \in \R^d \, : \, x_d \geq a \} \\
    \Lower^{(d)}_a &:= \{ x \in \R^d \, : \, x_d \leq a \} \\
    \Hyper^{(d)}_a &:= \Upper^{(d)}_a \cap \Lower^{(d)}_a = \{ x \in \R^d \, : \, x_d = a\} \\
    \Right_b^{(d)} &:= \{ x \in \R^d \, : \, x_1 \geq b \} \\
    \Left_b^{(d)} &:= \{ x \in \R^d \, : \, x_1 \leq b \} \\
    \sfV_b^{(d)} &:= \Right_b^{(d)} \cap \Left_b^{(d)} = \{ x \in \R^d \, : \, x_1 = b \}.
\end{align*}
When the dimension $d$ is clear from context, we simply write $\Upper_a, \Lower_a$ ,$\Hyper_a$, $\Right_b$, $\Left_b$ and $\sfV_b$.

Let $P$ be a rapidly local projector on $\H{d}{n}$. For any $a \in \R$ let $\Pi_a = \I_{\Upper_a}$ be the spatial projector onto the upper half-space $\Upper_a = \{ x \in \R^d \, : \, x_d \geq a \}$. Put
$$ E_a := P \Pi_a P $$
and define the \emph{boundary unitary} at $a \in \R$ by
$$ U_a(P) = \ed^{2\pi \iu E_a}. $$

In the remainder of this section we exploit the smooth functional calculus machinery introduced in Section \ref{sec:locality and functional calculus} to show that the boundary unitaries are localized near half-planes, and that in the case $d=2$ this allows us to associate an index to the boundary unitaries. Similar results have been obtained with similar techniques in \cite{elbau2002equality}. See also \cite{fonseca2020two, bols2023fredholm}.

\begin{lemma} \label{lem:commutator with half-plane localized near the cut}
    If $\{ A_i \}_{i \in \caI}$ is a uniformly rapidly local family of operators on $\H{d}{n}$ then the family of commutators $\{ [A_i, \Pi_a] \}_{i \in \caI, a \in \R}$ is uniformly rapidly local and localized near $\{ \Hyper_a \}_{a \in \R}$.
\end{lemma}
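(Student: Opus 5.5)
The plan is to compute the kernel of the commutator explicitly and bound it directly. Since $\Pi_a$ is the multiplication operator by the indicator of $\Upper_a$, we have
$$ [A_i, \Pi_a](x, y) = A_i(x, y) \big( \I_{\Upper_a}(y) - \I_{\Upper_a}(x) \big), $$
which vanishes unless exactly one of $x, y$ lies in $\Upper_a$, i.e.\ unless $x$ and $y$ lie on opposite sides of the hyperplane $\Hyper_a$ (one with $x_d \geq a$, the other with $x_d < a$). In every case the commutator kernel is bounded in norm by $\abs{A_i(x, y)}$.

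The key geometric step is then the following observation. If $x$ and $y$ are on opposite sides of $\Hyper_a$, then the segment from $x$ to $y$ crosses $\Hyper_a$, so
$$ \d(\Hyper_a, x) + \d(\Hyper_a, y) = \abs{x_d - a} + \abs{y_d - a} = \abs{x_d - y_d} \leq \abs{x - y}. $$
Hence $\jb{\d(\Hyper_a, x) + \d(\Hyper_a, y)} \leq \jb{x - y}$ for all pairs contributing to the kernel.

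Combining these, I will use the uniform rapid locality of $\{A_i\}$ with constants $\{C_N\}_{N\in\N}$, applied with $2N$ in place of $N$:
$$ \abs{[A_i, \Pi_a](x, y)} \leq \abs{A_i(x,y)} \leq C_{2N} \jb{x-y}^{-2N} \leq C_{2N} \jb{x-y}^{-N} \jb{\d(\Hyper_a, x) + \d(\Hyper_a, y)}^{-N} $$
for all $x, y \in \Z^d$, $i \in \caI$, $a \in \R$, and $N \in \N$. The constants $\{C_{2N}\}_{N \in \N}$ are independent of $i$ and $a$, which is exactly the uniform rapid locality and localization near $\{\Hyper_a\}_{a \in \R}$ required by Definition \ref{def:rapid locality}. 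There is no real obstacle here; the only point needing care is the geometric inequality above, in particular the boundary convention that $\Upper_a$ contains $\Hyper_a$, which does not affect the estimate.
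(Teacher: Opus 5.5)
Your proposal is correct and follows the paper's proof: the commutator kernel is supported on pairs $x, y$ lying on opposite sides of the cut, it is bounded there by $|A_i(x,y)|$, and the decay $C_{2N}(1+|x-y|)^{-2N}$ is split into a locality factor and a localization factor. You also state explicitly the geometric inequality that the paper uses without comment, namely that for such pairs the distances of $x$ and $y$ to the hyperplane sum to $|x_d - y_d| \leq |x-y|$.
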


\begin{proof}
    There are constants $\{ C_N \}_{N \in \N}$ such that $\abs{A_i(x, y)} \leq C_N \jb{x-y}^{-N}$ for all $N$ and all $i \in \caI$. Noting that $( [A_i, \Pi_a] )(x, y) = 0$ unless either $x \in \Upper_a$ and $y \notin \Upper_a$ or vice versa, we get in either case
    $$ \abs{ ([A_i, \Pi_a])(x, y) } = \abs{ A_i(x, y) } \leq C_{2N} \jb{x-y}^{-2N} \leq C_{2N} \, \jb{x-y}^{-N} \jb{ \d(x, \Hyper_a) + \d(y, \Hyper_a) }^{-N}. $$
    We conclude that $\{[A_i, \Pi_a]\}_{i \in \caI, a \in \R}$ is uniformly rapidly local and localized near $\{ \Hyper_a \}_{a \in \R}$.
\end{proof}

\begin{lemma} \label{lem:E is a projector away from the cut}
    Let $P$ be a rapidly local projector on $\H{d}{n}$ and $E_a = P \Pi_a P$ for $a \in \R$. The family $\{ E_a^2 - E_a \}_{a \in \R}$ is uniformly rapidly local and localized near $\{ \Hyper_a \}_{a \in \R}$.
\end{lemma}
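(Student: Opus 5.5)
The plan is to write $E_a^2 - E_a$ algebraically as a product containing a commutator $[P, \Pi_a]$, and then use the ideal property. Since $P^2 = P$ and $\Pi_a^2 = \Pi_a$, we compute
$$ E_a^2 - E_a = P \Pi_a P \Pi_a P - P \Pi_a P = P \Pi_a (P \Pi_a - \Pi_a P) P = P \Pi_a [P, \Pi_a] P. $$
Here we used $P\Pi_a P = P \Pi_a \Pi_a P = P\Pi_a \Pi_a P$ to rewrite the second term as $P\Pi_a \Pi_a P P$.

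By Lemma \ref{lem:commutator with half-plane localized near the cut}, applied to the single operator $P$ (a uniformly rapidly local family trivially), the family $\{[P, \Pi_a]\}_{a \in \R}$ is uniformly rapidly local and localized near $\{\Hyper_a\}_{a\in\R}$. By Lemma \ref{lem:norm characterization of rapid locality} this means that $\norm{[P,\Pi_a]}_{N, \Hyper_a} \leq D_N$ uniformly in $a$.

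It then remains to multiply by the outer factors. The operator $P$ has $\norm{P}_{2N} < \infty$ independently of $a$. The operator $\Pi_a$ is diagonal with entries of norm at most one, so $\norm{\Pi_a}_{2N} \leq 1$ for every $a$; hence $\norm{P\Pi_a}_{2N} \leq \norm{P}_{2N}$ by sub-multiplicativity (item \eqref{eq:sub-multiplicativity} with $\Sigma = \R^d$). Item \eqref{eq:Schur norm property for ideal} of Lemma \ref{lem:*-algebras and ideals of rapidly local operators} then gives
$$ \norm{E_a^2 - E_a}_{N, \Hyper_a} \leq \norm{P}_{2N}^2 \, \norm{[P,\Pi_a]}_{N,\Hyper_a} \leq \norm{P}_{2N}^2 D_N, $$
uniformly in $a$. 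Lemma \ref{lem:norm characterization of rapid locality} converts this back into the desired uniform rapid locality and localization near $\{\Hyper_a\}_{a\in\R}$.

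There is no real obstacle. The only points to check are the algebraic identity and the uniformity of the Schur norm of $\Pi_a$ in $a$, both of which are immediate.
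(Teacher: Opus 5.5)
Your proof is correct and follows essentially the paper's argument: the identity $E_a^2 - E_a = P \Pi_a [P, \Pi_a] P$, then Lemma \ref{lem:commutator with half-plane localized near the cut} for the commutator, then the ideal property of Lemma \ref{lem:*-algebras and ideals of rapidly local operators}. Your explicit Schur-norm bookkeeping, in particular the bound $\norm{\Pi_a}_{2N} \leq 1$ uniformly in $a$, spells out the uniformity in $a$ that the paper leaves implicit.
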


\begin{proof}
    Note that $E_a^2 - E_a = P \Pi_a [P, \Pi_a] P$ and that $\{[P, \Pi_a]\}_{a \in \R}$ is uniformly rapidly local and localized near $\{ \Hyper_a \}$ by Lemma \ref{lem:commutator with half-plane localized near the cut}. The claim now follows from Lemma \ref{lem:*-algebras and ideals of rapidly local operators}.
\end{proof}

\begin{proposition} \label{prop:boundary unitaries act near the boundary}
    Let $P$ be a rapidly local projector on $\H{d}{n}$ and $\{ U_a(P) \}_{a \in \R}$ its associated boundary unitaries. Then the family $\{ U_a(P) - \I \}_{a \in \R}$ is uniformly rapidly local and localized near the hyperplanes $\{ \Hyper_a \}_{a \in \R}$.
\end{proposition}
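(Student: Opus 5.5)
The plan is to write $U_a(P) - \I = g(E_a)$ with $g(x) = \ed^{2\pi \iu x} - 1$, and to exploit the fact that $g$ vanishes on $\{0, 1\}$. Since $g(0) = g(1) = 0$, we can factor $g(x) = (x^2 - x)\, h(x)$, where $h$ is smooth near $[0,1]$. We then replace $h$ by a compactly supported smooth function that agrees with it on a neighbourhood of $[-1, 2] \supset \sigma(E_a)$; note that $0 \leq E_a \leq \I$. The spectral theorem then gives
$$ U_a(P) - \I = (E_a^2 - E_a)\, h(E_a). $$

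The key steps, in order, are as follows.

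First, the family $\{E_a\}_{a \in \R}$ is uniformly rapidly local. Indeed, $\Pi_a$ is diagonal, so $\norm{\Pi_a}_N = 1$, and Lemma \ref{lem:*-algebras and ideals of rapidly local operators} gives $\norm{E_a}_N \leq \norm{P}_N^2$ uniformly in $a$. Each $E_a$ is also self-adjoint.

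Second, Proposition \ref{prop:rapid locality is preserved by smooth functions} applied with the compactly supported $h$ shows that $\{h(E_a)\}_{a \in \R}$ is uniformly rapidly local. By Lemma \ref{lem:norm characterization of rapid locality} this means $\norm{h(E_a)}_{2N} \leq D_{2N}$ uniformly in $a$.

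Third, Lemma \ref{lem:E is a projector away from the cut} tells us that $\{E_a^2 - E_a\}_{a \in \R}$ is uniformly rapidly local and localized near $\{\Hyper_a\}$, so $\norm{E_a^2 - E_a}_{N, \Hyper_a} \leq C_N$.

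Finally, item \eqref{eq:Schur norm property for ideal} of Lemma \ref{lem:*-algebras and ideals of rapidly local operators} combines these bounds:
$$ \norm{U_a(P) - \I}_{N, \Hyper_a} \leq C_N D_{2N}, $$
uniformly in $a$. Lemma \ref{lem:norm characterization of rapid locality} then converts this into the claim.

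The only point needing care is the factorization step, which I expect to be the main (mild) obstacle. The function $h(x) = g(x)/(x^2 - x)$ has removable singularities at $0$ and $1$ because $g$ has simple zeros there, and $x^2 - x$ vanishes only to first order at each point. So $h$ extends to a real-analytic function on $\R$, and cutting it off smoothly outside $[-1, 2]$ does not affect $h(E_a)$.

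If one prefers to avoid the division, an alternative is to apply the second part of Proposition \ref{prop:rapid locality is preserved by smooth functions}. There one would compare $E_a$ with a genuine projection, but no such projection is readily available here. The factorization route is therefore the one I would use.
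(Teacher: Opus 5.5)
Your proposal is correct and is essentially the paper's proof. The paper likewise factors $\ed^{2\pi \iu x} - 1 = (x^2 - x)h(x)$ with $h$ smooth (taking the values $-2\pi\iu$ and $2\pi\iu$ at $0$ and $1$), extends $h$ to a compactly supported smooth function, and concludes from Lemma~\ref{lem:E is a projector away from the cut}, Proposition~\ref{prop:rapid locality is preserved by smooth functions} and Lemma~\ref{lem:*-algebras and ideals of rapidly local operators}; your explicit bound $\norm{U_a(P) - \I}_{N, \Hyper_a} \leq C_N D_{2N}$ via item~\eqref{eq:Schur norm property for ideal} simply spells out the uniformity the paper leaves implicit.
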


\begin{proof}
    The function $h : [0, 1] \to \C$ defined by
    \begin{align*}
            h(x) = \frac{\ed^{2\pi \iu x} - 1}{x^2 - x}& \quad \text{for} \,\, x \in (0, 1) \\
            h(0) = -2\pi \iu, \quad  & \quad h(1) = 2\pi \iu
    \end{align*}
    is smooth. Let $k : \R \to \C$ be a compactly supported smooth extension of $h$. Then
    $$ U_a(P) - \I = (E_a^2 - E_a) k(E_a)$$
    and since $\{ E_a^2 - E_a \}_{a \in \R}$ is uniformly rapidly local and localized near $\{ \Hyper_a\}_{a \in \R}$ by Lemma \ref{lem:E is a projector away from the cut} we conclude by Proposition \ref{prop:rapid locality is preserved by smooth functions} and Lemma \ref{lem:*-algebras and ideals of rapidly local operators} that the same is true for $\{ U_a(P) - \I \}_{a \in \R}$.
\end{proof}

\subsection{Boundary unitaries in 2D : Flow and bulk-boundary correspondence} \label{subsec:flow and chern number}

Recall the Chern number
$$ \Ch(P) = \ind(P, F_a) \in \Z $$
where $F_a$ is the flux insertion unitary introduced in Section \ref{subsec:Chern number}. It is well-defined whenever $F_a P F_a^* - P$ is compact, which holds for any rapidly local projector $P$ by the argument below \cite[~Lemma 1]{aizenman1998localization}.

To the boundary unitaries $U_a = U_a(P)$ associated to $P$ we can also associate an index called Flow:
\begin{equation} \label{eq:Flow of edge unitaries}
    \Flow(U_a) := \ind( \I_{\Right_b}, U_a ) \in \Z,
\end{equation}
which is independent of $b \in \R$. The required compactness of $U_a \I_{\Right_b} U_a^* - \I_{\Right_b}$ is shown in \cite[Sect. 4.4]{bols2023fredholm}, and Theorem 4.4 of that paper gives the following version of the Bulk-boundary correspondence:

\begin{proposition} [\cite{bols2023fredholm}]\label{prop:bulk-boundary correspondence}
    Let $P$ be a rapidly local projector on $\H{2}{n}$. Then 
    $$ \Ch(P) = \Flow(U_a(P)) $$ for any $a \in \R$.
\end{proposition}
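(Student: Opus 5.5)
The plan is to reduce the statement to the bulk-boundary correspondence of \cite[Theorem 4.4]{bols2023fredholm} by checking its hypotheses in the present setting. After that, the identity $\Ch(P) = \Flow(U_a(P))$ is a homotopy argument. Concretely, I would use three facts. First, both indices are well defined and stable. By Proposition \ref{prop:boundary unitaries act near the boundary}, $U_a - \I$ is rapidly local and localized near $\Hyper_a$. Hence $U_a \I_{\Right_b} U_a^* - \I_{\Right_b}$ is localized near $\Hyper_a \cap \sfV_b$, a single point, and is therefore compact. A Schur-test argument with Lemma \ref{lem:restricting close to localization region} shows it is a norm limit of finite-rank operators. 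Second, $\Flow(U_a)$ is independent of $b$: for $b<b'$ the difference $\I_{\Right_b} - \I_{\Right_{b'}}$ restricted near $\Hyper_a$ is finite rank, and the index is additive. Third, $\Flow(U_a)$ is independent of $a$, since $a \mapsto U_a$ is norm continuous up to compacts, or alternatively one compares $U_a$ and $U_{a'}$ through the commuting products.

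For the main identity I would write the flux insertion index as a Kitaev-type trace. Note that $\ind(\I_{\Right_b}, U)$ equals $\Tr(U^* \I_{\Right_b} U - \I_{\Right_b})$ whenever this difference is trace class, which holds here by rapid locality near a point. Likewise $\Ch(P)$ equals the trace formula $\Tr\big(P[[P,\Pi_a],[P,\I_{\Right_b}]]\big)$, up to normalization and sign conventions. This follows from the Avron--Seiler--Simon index theorem together with the standard equivalence between flux-insertion and switch-function formulas (\cite{aizenman1998localization}, \cite{elbau2002equality}). Then I would compute $\Tr(U_a^*[\I_{\Right_b}, U_a])$ using $U_a = \ed^{2\pi\iu E_a}$. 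Differentiating along the path $t \mapsto \ed^{2\pi\iu t E_a}$, the derivative of the trace should be expressible through $[E_a,\I_{\Right_b}]$. Integrating over $t\in[0,1]$ and using $E_a^2-E_a$ localized near $\Hyper_a$ (Lemma \ref{lem:E is a projector away from the cut}) should reduce the flow to $\Tr\big(P\Pi_a P[P\Pi_aP, \I_{\Right_b}]\cdots\big)$-type expressions, whose value is the double-commutator trace. This is the Elbau--Graf computation transplanted to the lattice.

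The main obstacle is the trace-class bookkeeping. Each individual term in the derivative manipulation, such as $P\I_{\Right_b}P$, is not trace class. One must group terms so that every operator appearing is localized near the point $\Hyper_a\cap\sfV_b$, i.e.\ lies in a product of ideals $\LOC_{\Hyper_a}\cdot\LOC_{\sfV_b}$, which is trace class by a Schur-norm estimate. Cyclicity of the trace is then only used on such products. Since the statement is cited verbatim from \cite{bols2023fredholm}, in the paper I would verify only that rapid locality (rather than exponential locality) suffices for the hypotheses there. That holds because all estimates in that argument pass through the Schur norms of Lemma \ref{lem:*-algebras and ideals of rapidly local operators} and the functional calculus of Proposition \ref{prop:rapid locality is preserved by smooth functions}.
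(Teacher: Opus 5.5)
The paper gives no proof of this proposition. It imports it from \cite[Theorem 4.4]{bols2023fredholm}, and takes the compactness of $U_a\I_{\Right_b}U_a^*-\I_{\Right_b}$ from Sect.~4.4 there. Your closing plan of citing that theorem is therefore exactly the paper's route. Your remark that rapid locality suffices for that theorem is, however, an assertion about the cited proof, not something you have checked.

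The self-contained trace computation you sketch has a genuine gap at its central step. For $t\in(0,1)$, the operator $\ed^{2\pi\iu tE_a}-\I$ is not localized near $\Hyper_a$: far above $\Hyper_a$ it acts like $(\ed^{2\pi\iu t}-1)P$. Consequently:
\begin{itemize}
\item $[\ed^{2\pi\iu tE_a},\I_{\Right_b}]$ is spread along the half-line $\sfV_b\cap\Upper_a$ and is in general not even compact, so the trace you propose to differentiate does not exist.
\item The Duhamel integrand built from $[\I_{\Right_b},E_a]$ lies only in $\LOC_{\sfV_b}$, so it cannot be regrouped into products $\LOC_{\Hyper_a}\cdot\LOC_{\sfV_b}$.
\item The localization of $E_a^2-E_a$ does not repair this.
\end{itemize}
The missing idea is to first remove the part of $\I_{\Right_b}$ that is off-diagonal with respect to $P$. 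Since $[U_a,P]=0$ and $U_aP^\perp=P^\perp$, writing $A=P\I_{\Right_b}P$ gives
\[
U_a\I_{\Right_b}U_a^*-\I_{\Right_b}=(U_aAU_a^*-A)+(U_a-\I)P[P,\I_{\Right_b}]P^\perp+P^\perp[\I_{\Right_b},P]P(U_a^*-\I).
\]
The last two terms are trace class, being localized near the point $\Hyper_a\cap\sfV_b$. They have zero trace by cyclicity, because $P^\perp(U_a-\I)P=0$. For the first term, Duhamel gives
\[
U_aAU_a^*-A=2\pi\iu\int_0^1\ed^{2\pi\iu tE_a}[E_a,A]\ed^{-2\pi\iu tE_a}\,\dd t .
\]
The algebraic identity $[P\Pi_aP,P\I_{\Right_b}P]=P[[P,\Pi_a],[P,\I_{\Right_b}]]P$ shows that $[E_a,A]$ is trace class for every $t$. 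Hence the integral converges in trace norm and
\[
\Flow(U_a)=2\pi\iu\operatorname{Tr}\big(P[[P,\Pi_a],[P,\I_{\Right_b}]]\big)
\]
holds exactly, with no need for $E_a^2-E_a$.

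Even with this repaired, the difficulty has only moved. The statement becomes equivalent to the bulk identity $\ind(P,F_a)=2\pi\iu\operatorname{Tr}\big(P[[P,\Pi_a],[P,\I_{\Right_b}]]\big)$, with an orientation sign to be tracked, for rapidly local $P$ and sharp half-plane switch functions. You only cite this identity and would have to verify it in that generality, so your route trades one citation for another.

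Two smaller points:
\begin{itemize}
\item With the convention \eqref{eq:index of P and U}, $\ind(\I_{\Right_b},U)=\operatorname{Tr}(U\I_{\Right_b}U^*-\I_{\Right_b})$. The trace $\operatorname{Tr}(U^*\I_{\Right_b}U-\I_{\Right_b})$ you wrote is its negative.
\item The map $a\mapsto U_a$ is not continuous modulo compacts, since $\Pi_a$ changes by a whole lattice row when $a$ crosses an integer. Independence of $a$ is not needed, however, since it follows from the identity itself.
\end{itemize}
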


\section{Gapping out rapidly local unitaries in 1D} \label{sec:gapping 1D unitaries}

In this section we work in one dimension, so we consider the Hilbert space $\H{1}{n}$ and for any $a \in \R$ write $\Pi_a = \I_{x \geq a}$ for the projector onto the half-line $\{ x \in \Z \, : \, x \geq a \}$.

To any rapidly local unitary $U \in \caB(\H{1}{n})$ we can associate an integer \emph{flow} given by
$$ \Flow(U) = \ind( \Pi_a, U ) \in \Z $$
which is independent of $a \in \R$. The required compactness of $U \Pi_a U^* - \Pi_a = [U, \Pi_a] U^*$ follows easily from rapid locality of $U$:

\begin{lemma} \label{lem:1D flow well-defined}
    If $U$ is a rapidly local unitary on $\H{1}{n}$ then $U \Pi_a U^* - \Pi_a$ is compact for all $a \in \R$.
\end{lemma}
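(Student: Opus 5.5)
Since $U \Pi_a U^* - \Pi_a = [U, \Pi_a] U^*$ and compact operators form a two-sided ideal, it suffices to show that $[U, \Pi_a]$ is compact. The plan is to show that this commutator is a norm limit of finite-rank operators.

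By Lemma \ref{lem:commutator with half-plane localized near the cut} (applied with $d=1$ to the single operator $U$), $[U, \Pi_a]$ is rapidly local and localized near the hyperplane $\Hyper_a = \{a\} \subset \R$. For $R > 0$, the $R$-fattening $\Sigma^{(R)} = [a - R, a + R]$ contains only finitely many lattice points of $\Z$. Hence the truncation
$$ A_R := \I_{\Sigma^{(R)}} [U, \Pi_a] \I_{\Sigma^{(R)}} $$
acts nontrivially only on the finite-dimensional space $l^2(\Z \cap [a-R, a+R]) \otimes \C^n$, so it is finite rank.

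Lemma \ref{lem:restricting close to localization region} then gives $\norm{[U, \Pi_a] - A_R} \leq D_N R^{-N}$, which tends to $0$ as $R \to \infty$. Therefore $[U, \Pi_a]$ is a norm limit of finite-rank operators and is compact. Multiplying on the right by the bounded operator $U^*$ preserves compactness, which proves the claim.

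There is no genuine obstacle here. The only point needing care is the locality bookkeeping, and the two cited lemmas handle it directly.
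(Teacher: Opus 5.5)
Your proof is correct and follows the paper's argument essentially verbatim. You localize $[U,\Pi_a]$ near $\{a\}$ via Lemma~\ref{lem:commutator with half-plane localized near the cut}, approximate it in norm by finite-rank truncations via Lemma~\ref{lem:restricting close to localization region}, and conclude with the ideal property of compact operators applied to $[U,\Pi_a]U^*$.
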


\begin{proof}
    By Lemma \ref{lem:commutator with half-plane localized near the cut} the commutator $[U, \Pi_a]$ is rapidly local and localized near $\{ a \}$. It therefore follows from Lemma \ref{lem:restricting close to localization region} that $[U, \Pi_a]$ can be approximated in norm by operators with finite support, so $[U, \Pi_a]$ is compact. Since the compact operators form an ideal in $\caB( \H{1}{n} )$, the difference $U \Pi_a U^* - \Pi_a = [U, \Pi_a] U^*$ is also compact.
\end{proof}

In this section we will prove
\begin{proposition} \label{prop:gapping 1D unitaries}
    For every $0 < \ep \leq 1$ and uniformly rapidly local family $\{ U_i \}_{i \in \caI}$ of unitaries on $\H{1}{n}$ with $\Flow(U_i) = 0$ for all $i \in \caI$, there is $R < \infty$ and a family $\{ V_i \}_{i \in \caI}$ of unitaries of propagation $R$ such that
    $$\norm{ U_i - V_i } \leq \ep, \quad \text{and} \quad \dist( -1, \sigma(V_i) ) \geq \ep/2  $$
    for all $i \in \caI$.
\end{proposition}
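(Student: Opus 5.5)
The plan is to reduce the problem to finite-dimensional almost-commuting pairs, apply a Lin-type theorem there, and reassemble. The order is: (i) truncate and make the pair finite-dimensional with vanishing obstruction; (ii) apply Lin's theorem to get commuting normals; (iii) push the spectrum off $-1$ at small norm cost; (iv) restore finite propagation uniformly in $i$.

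\emph{Step (i): truncate and close up.} By Lemma \ref{lem:cutting tails}, replace each $U_i$ by the truncated kernel $(U_i)_r$, and then by its polar part (Lemma \ref{lem:rapid locality and polar parts}). This gives unitaries $U_i'$ of essentially finite propagation $r$, with $\norm{U_i - U_i'} \leq \ep/10$ uniformly in $i$, and with $\Flow(U_i') = \Flow(U_i) = 0$ by norm stability of the index. Next fix a large scale $L \gg r$ and the smooth bounded position-type operator $Y = \phi(X/L)$, with $\phi$ of bounded derivative. Then $\norm{[U_i', Y]} \leq C r/L$, so $U_i'$ and $Y$ almost commute uniformly in $i$. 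To land in finite dimensions, cut $\Z$ into segments $\Lambda_k$ of length $M \gg L$. On each segment, compress $U_i'$ and close it into a unitary on a ring, correcting only near the two endpoints.
\begin{itemize}
\item Vanishing flow is exactly what makes this possible: the defect dimensions of the compressed operator near the left and right cuts match, so a finite-rank partial isometry near the endpoints closes it into a unitary.
\item On the ring, the relevant almost-commuting pair is $(U, \ed^{2\pi \iu X/M})$. Its Bott-type obstruction coincides with the flow, hence vanishes.
\end{itemize}

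\emph{Step (ii): Lin's theorem.} Apply Lin's theorem on almost commuting matrices \cite{lin1993almostcommuting}, in its unitary--unitary form with vanishing Bott index, as made quantitative in \cite{hastings2009making, herrera2020hastings}. The key point is that the bound depends only on the commutator norm and not on the dimension $M$. This yields a unitary $U''$ and a position-like operator that commute exactly, with both within $\delta(r/L)$ of the originals and $\delta(r/L) \to 0$ as $L/r \to \infty$.

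\emph{Step (iii): push the spectrum off $-1$.} Commuting normals are jointly diagonalizable. Since the position-like partner has spectral subspaces localized in windows of size $\sim L$, the joint eigenvectors are localized on scale $L$. Choose a continuous $g:\T \to \T$ that moves the arc $\{\abs{z+1} < \ep/2\}$ onto its endpoints and satisfies $\abs{g(z) - z} \leq \ep/2$. Then $g(U'')$ is within $\ep/2$ of $U''$ and satisfies $\dist(-1, \sigma(g(U''))) \geq \ep/2$. Adding up the errors gives $\norm{U_i - V_i} \leq \ep$ once $L$ is chosen large enough depending only on $\ep$ and the uniform locality constants. Note that $\ep/2$ is the exact target, so the error budget needs the strict inequality; it is cleaner to push out a slightly larger arc.

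\emph{Step (iv): finite propagation.} For finite propagation $R$, build $V_i$ from the joint spectral projections, which are supported up to small tails in windows of size $\sim L$. Truncate those tails at scale $R \sim CL$ and repair unitarity with a polar part (Lemma \ref{lem:A close to U implies pol(A) close to U}). This can only perturb the spectrum by a small amount, so the gap around $-1$ survives with a slightly larger initial arc. All choices depend only on $\ep$ and the uniform locality constants, which gives uniformity in $i \in \caI$.

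\emph{Main obstacle.} The hardest step is (i)--(ii) combined. Steps (iii)--(iv) are routine consequences of the finite-dimensional statement plus the Schur-test locality estimates of Section \ref{sec:locality and functional calculus}. The difficulty is passing from the infinite line to finite blocks where Lin's theorem applies, with the obstruction provably vanishing and with errors uniform in the block size. I would first try the ring closure described in Step (i), identifying the Bott index of the closed-up pair with $\Flow(U_i')$ via a Fredholm index computation at a single cut. If the closure cannot be done with small norm error, the fallback is Hastings' approach: use overlapping blocks with a partition of unity, apply Lin's theorem to the pair $(\Re U, X/L)$ and $(\Im U, X/L)$ block by block, and glue the results. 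Zero flow is then used only to exclude the index obstruction on each block.
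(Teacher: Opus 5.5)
Your reduction to finite blocks has a genuine gap: nothing in the plan brings you back to an operator that is norm-close to the original $U_i$. Any closure of the compressions to the segments $\Lambda_k$, ring or open, gives an operator that commutes with $\I_{\{x \geq b\}}$ at every segment endpoint $b$. Every such $V$ satisfies
$$\|U - V\| \geq \|\I_{\{x\geq b\}}(U-V)\I_{\{x<b\}}\| = \|\I_{\{x \geq b\}} U \I_{\{x < b\}}\|.$$
This already equals $1$ for the flow-zero unitary $U = S \oplus S^*$ on $\H{1}{2}$, with $S$ the bilateral shift. So the closure error is $O(1)$ however large $M$ is. Lin's theorem on each ring then only produces commuting pairs near the \emph{modified} operator, and Steps (iii)--(iv) work inside the segments, so they yield a gapped $V$ close to that modified operator, not to $U_i$. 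You anticipate this, but the fallback does not resolve it. Running Lin separately on $(\Re U, X/L)$ and $(\Im U, X/L)$ gives two unrelated hermitian approximants that need not commute or assemble into a unitary, and \emph{glue the results} is exactly the step that needs an idea. Also, hermitian--hermitian and hermitian--unitary pairs carry no index obstruction in Lin's theorem. So zero flow cannot enter there; it has to be used to make some finite block a genuine unitary in the first place.

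The missing idea, which is what the paper does, is to use Lin's theorem only to build approximately $U$-invariant half-line projections, never as the approximant itself. The construction at a cut $a$ runs as follows.
\begin{enumerate}
\item Decouple $U$ at $a \pm R$. This is where zero flow enters. The cost is $O(1)$, but it is rapidly localized near $a\pm R$.
\item Apply the hermitian--unitary Lin theorem on $[a-R, a+R)$ to the block $\widetilde U_I$ and a ramp $\Lambda$. The ramp is identically $0$ on the left collar and identically $1$ on the right collar.
\item Take the single spectral projection $\chi_{[1/2,\infty)}(\Lambda')$. Since $\Lambda$ sits at $0$ and $1$ on the collars, far from the threshold $1/2$, its leakage there is norm-small. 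Correct it to be exactly $0$ left of $a-R/2$ and exactly $\I$ right of $a+R/2$.
\end{enumerate}
The resulting $E_a$ is trivial on a width-$R/2$ neighbourhood of the decoupling points. The $O(1)$ decoupling corrections therefore commute with it up to rapidly decaying tails, which gives $\|[U, E_a]\| \leq \ep_0$ for the \emph{original} $U$.

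Nested projections $E_{jL}$ then give mutually orthogonal, exactly finitely supported projections $D_j = E_{jL} - E_{(j+1)L}$. Neighbouring blocks satisfy $\|D_j U D_{j\pm 1}\| \leq \ep_0$, and distant blocks are controlled by locality. The Schur test then gives $\|U - \sum_j D_j U D_j\|$ small. Taking the polar part and deforming each block spectrally yields $V$ of propagation $L + 2R + 2$.

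This route also avoids a second weakness of your Step (iv). Norm-closeness $\|Y'' - Y\| \leq \delta$ only gives Chebyshev-type localization of the many joint spectral subspaces. That is not enough to truncate them at scale $R \sim CL$ with a small total error.
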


The proof appears in Section \ref{subsec:proof of gapping 1D unitaries proposition}.

\subsection{Rapidly local decoupling}

The following Lemma shows that rapidly local unitaries can be decoupled across any cut by a perturbation which is rapidly localized near that cut, provided the flow vanishes. Similar decoupling results for 1D unitaries have appeared in \cite[Sect. VII.]{cedzich2018topological} and \cite[~Lemma 3.4.]{chung2025topological}

\begin{lemma} \label{lem:rapidly local decoupling}
    Let $\{U_i\}_{i \in \caI}$ be a uniformly rapidly local family of unitaries on $\H{1}{n}$ with $\Flow(U_i) = 0$ for all $i \in \caI$. For any $a \in \R$ there is a family of unitaries $\{ U_{a, i} \}_{i \in \caI}$ such that $[U_{a, i}, \Pi_a] = 0$ for all $i \in \caI$, and $\{ U_i - U_{a, i} \}_{i \in \caI}$ is uniformly rapidly local and localized near $\{a\}$. 
    
    Moreover, if $[U_{i}, \Pi_{a-R}] = 0$ for all $i \in \caI$ and some $R \geq 1$, then the decouplings at $a$ can be taken so that they preserve the decoupling at $a-R$. That is, $[U_{a, i}, \Pi_{a-R}] = 0$ for all $i \in \caI$.
\end{lemma}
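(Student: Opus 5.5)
We write $U = U_i$ and suppress the index $i$, keeping track that every constant depends only on the uniform locality constants. The plan is to decompose $U$ into blocks with respect to $\Pi := \Pi_a$ and $\Pi^\perp := \I - \Pi$. Consider the compression $A := \Pi U \Pi + \Pi^\perp U \Pi^\perp$. Then
$$U - A = \Pi U \Pi^\perp + \Pi^\perp U \Pi,$$
which equals $[\Pi,[\Pi,U]]$ up to sign, so it is rapidly local and localized near $\{a\}$ by Lemma \ref{lem:commutator with half-plane localized near the cut}. Moreover $A$ commutes with $\Pi$. However, $A$ need not be close to unitary in norm: $A^*A - \I$ is only localized near $\{a\}$, hence compact, so $A$ is Fredholm. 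Its index is the index of $\Pi U \Pi$ on $\Ran \Pi$ plus the index of $\Pi^\perp U\Pi^\perp$ on $\Ran \Pi^\perp$.

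\textbf{Step 1: identify the indices.} The index of $\Pi U \Pi|_{\Ran\Pi}$ equals $\pm\ind(\Pi,U) = \pm\Flow(U) = 0$, by the standard relation between the Avron--Seiler--Simon index and the Fredholm index of the compression. Since $A$ is a compact perturbation of the unitary $U$, its total index is $0$, so the index of $\Pi^\perp U\Pi^\perp$ is also $0$.

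\textbf{Step 2: correct the finite-dimensional defect locally.} Let $h$ be a smooth compactly supported function with $h(t) = t^{-1/2}$ on $[1/4, 4]$. Choose $R_0$ large, uniformly in $i$, so that by Lemma \ref{lem:restricting close to localization region}
$$\norm{(U - A) - \I_{[a-R_0,a+R_0]}(U-A)\I_{[a-R_0,a+R_0]}} \leq 1/4.$$
Set $B := U - \I_{[a-R_0,a+R_0]}(U-A)\I_{[a-R_0,a+R_0]}$. Then $B$ is within $1/4$ of $U$, and $B$ agrees with $A$ off a finite box. The problem is thereby reduced to a finite box: on $\Ran\Pi$, the operator $\Pi B \Pi$ differs from $\Pi U \Pi$ only on the block $[a,a+R_0]$, and similarly on the other side.

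The cleaner route, which I would actually follow, is to work directly with $A$. Its near-kernel is spanned by spectral subspaces of $\Pi A^*A \Pi$ and $\Pi^\perp A^* A\Pi^\perp$ below $1/4$. These subspaces are finite dimensional, of dimension bounded uniformly in $i$, and spanned by vectors localized near $a$; this localization comes from Proposition \ref{prop:rapid locality is preserved by smooth functions} applied to $A^*A - \I$, which is localized near $\{a\}$. Index zero on each side means that the near-kernel of $\Pi A\Pi$ and the near-cokernel (the near-kernel of $\Pi A A^*\Pi$) have equal dimension within $\Ran\Pi$.

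Pick a partial isometry $K_+$ inside $\Ran\Pi$ mapping the near-kernel spectral subspace onto the near-cokernel one, and similarly $K_-$ inside $\Ran\Pi^\perp$. Define $A' := A + K_+ + K_-$ with the near-kernel part of $A$ removed. Then $A'$ is invertible, commutes with $\Pi$, and $A' - A$ is localized near $\{a\}$ with uniform constants, since it is built from spectral projections of operators localized near $\{a\}$. Finally set
$$U_{a,i} := \pol(A') = A'(A'^*A')^{-1/2}.$$
By Lemma \ref{lem:rapid locality and polar parts}, or rather its proof combined with the spectral gap of $A'^*A'$, the difference $U_{a,i} - U$ is rapidly local and localized near $\{a\}$, and $U_{a,i}$ commutes with $\Pi$.

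\textbf{The main obstacle} is making Step 2 uniform in $i$. Using sharp spectral projections destroys smoothness, so instead I would choose a gap: since the dimensions are uniformly bounded, pigeonhole gives, for each $i$, a spectral gap interval of $\Pi A^*A\Pi$ inside $(0,1/2)$ of length bounded below uniformly. Then smooth cutoffs adapted to that interval can be taken from a finite family, which yields uniform constants.

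\textbf{The \enquote{moreover} part.} If $[U,\Pi_{a-R}] = 0$, then every operator built above commutes with $\Pi_{a-R}$ except for the tails. I would therefore conjugate by, or compress to, $\Ran\Pi_{a-R}$. Concretely, perform the whole construction for the unitary $\Pi_{a-R}U\Pi_{a-R}$ on $\Ran\Pi_{a-R}$, which still has zero flow at $a$, and keep $\Pi_{a-R}^\perp U$ unchanged. Because $R \geq 1$, the localization near $\{a\}$ is unaffected.
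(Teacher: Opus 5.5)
Your proof is correct, but it follows a different route from the paper.

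\textbf{The paper's route.} The paper never uses the block-diagonal part of $U$. It works with $\widetilde\Pi_a=U\Pi_aU^*$ and proceeds in three steps.
\begin{enumerate}
\item It truncates $\widetilde\Pi_a-\Pi_a$ to a window $[a-r,a+r]$. Because the truncation is within $1/4$ of a projection, a fixed cutoff turns it into a projection $\widetilde\Pi_{a,r}$ that equals $\Pi_a$ outside the window.
\item It moves $\widetilde\Pi_a$ onto $\widetilde\Pi_{a,r}$ with the polar part of $Z=\widetilde\Pi_{a,r}\widetilde\Pi_a+\widetilde\Pi_{a,r}^\perp\widetilde\Pi_a^\perp$. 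Since $\|Z-\I\|\le 1/2$, Lemma~\ref{lem:rapid locality and polar parts} applies verbatim.
\item It uses zero flow only to match ranks inside the window. A unitary supported on $[a-r,a+r]$ then rotates $\widetilde\Pi_{a,r}$ onto $\Pi_a$.
\end{enumerate}
No spectral gap ever has to be located, and the last correction has compact support, so uniformity in $i$ is automatic.

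\textbf{Your route.} You read zero flow as zero index of each diagonal block of $A$, replace the near-kernel by a partial isometry onto the near-cokernel, and take the polar part. This is the quantitative version of the standard fact that a zero-index unitary is a compact perturbation of one commuting with $\Pi_a$. It needs several ingredients that you state or leave implicit; all of them hold.
\begin{enumerate}
\item A uniform bound on the number of eigenvalues of $A^*A$ below $1/2$. This follows from Lemma~\ref{lem:restricting close to localization region} plus min-max, since $A^*A-\I$ is localized near $a$.
\item The pigeonhole choice of the cut from a finite list of intervals.
\item The intertwining $A\,g(A^*A)=g(AA^*)\,A$. It gives $E'A(\I-E)=0$, hence $A'^*A'=A^*A(\I-E)+E\geq c$, with $c$ the upper edge of the chosen gap.
\item Localization of an arbitrary partial isometry $K=E'KE$. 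This follows from $|(E'KE)(x,y)|\le\|\iota_x^*E'\|\,\|E\iota_y\|$ together with $\|E\iota_y\|^2=|E(y,y)|$, and it works because the localization set is a point.
\end{enumerate}

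\textbf{A small correction.} Since $c$ may be small, the function $h$ in the polar step must equal $t^{-1/2}$ on $[c_{\min},1]$, where $c_{\min}>0$ is the smallest admissible gap edge. Having it equal $t^{-1/2}$ only on $[1/4,4]$ is not enough. Also, as you say, Lemma~\ref{lem:rapid locality and polar parts} enters only through its proof, not its statement.

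\textbf{What your construction gains.} The ``moreover'' part comes for free. If $[U,\Pi_{a-R}]=0$, then $A$ commutes with $\Pi_{a-R}$ and $A^*A=AA^*=\I$ on $\Ran\Pi_{a-R}^\perp$. So the near-kernel projections, the partial isometries and the polar part all act inside $\Ran\Pi_{a-R}$. There is no need to pass to a half-line Hilbert space or to invoke $R\geq 1$.
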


\begin{proof}
    To avoid clutter, we drop $i \in \caI$ from the notation. It will be clear that all estimates are uniform in $i$ and in $a$.
    
    Let $\widetilde \Pi_a = U \Pi_a U^*$ and $A_a = \widetilde \Pi_a - \Pi_a = [U, \Pi_a] U^*$. The commutators $\{ [U, \Pi_a] \}_{a \in \R}$ are uniformly rapidly local and localized near $\{ a \}$ by Lemma \ref{lem:commutator with half-plane localized near the cut}, so it follows from Lemma \ref{lem:*-algebras and ideals of rapidly local operators} that $A_a$ is rapidly local and localized near $\{a\}$ as well.

    We now approximate $\widetilde \Pi_a$ by a projector $\widetilde \Pi_{a, r}$ that acts as 0 far to the left of $a$, and acts as $\I$ far to the right of $a$. To achieve this, let
    $$ B_{a, r} := \Pi_a + \I_{[a-r, a+r]} A_a \I_{[a-r, a+r]} $$
    for any $r \in \N$. Since $A_a$ is rapidly local and localized near $\{a\}$ it follows from Lemma \ref{lem:restricting close to localization region} that $\norm{B_{a, r} - \widetilde \Pi_a} = \norm{ A_a - \I_{[a-r, a+r]} A_a \I_{[a-r, a+r]} } \rightarrow 0$ (uniformly in $i \in \caI, a \in \R$). Take $r$ large enough so that $\norm{ B_{a, r} - \widetilde \Pi_a } \leq 1/4$. Let $f \in C_c^{\infty}$ be such that $f(x) = 0$ for all $x \in [0, 1/4]$ and $f(x) = 1$ for all $x \in [3/4, 1]$. Then $\widetilde \Pi_{a, r} := f( B_{a, r} )$ is a projector with
    $$ \norm{ \widetilde \Pi_{a, r} - \widetilde \Pi_a } \leq \norm{\widetilde \Pi_{a, r} - B_{a, r}} + \norm{ B_{a, r} - \widetilde \Pi_a } \leq 1/2.$$
    The difference $B_{a, r} - \widetilde \Pi_a$ is rapidly local and localized near $\{a\}$.  It therefore follows from Proposition \ref{prop:rapid locality is preserved by smooth functions} that $\widetilde \Pi_{a, r} - \widetilde \Pi_a = f(B_{a, r}) - f(\widetilde \Pi_a)$ is rapidly local and localized near $\{a\}$.

    By the Remark below \cite[~Corollary 2.2]{avron1994index}, if we put
    $$ Z_{a, r} := \widetilde \Pi_{a, r} \widetilde \Pi_{a} + \widetilde \Pi_{a, r}^{\perp} \widetilde \Pi_{a}^{\perp} \quad \text{and} \quad  W_{a, r} := \pol(Z_{a, r}), $$
    then $W_{a, r}$ is unitary and satisfies $ W_{a, r} \widetilde \Pi_{a} = \widetilde \Pi_{a, r} W_{a, r}$. Moreover, we have $Z_{a, r} - \I = (\widetilde \Pi_{a, r} - \widetilde \Pi_{a})( 2 \widetilde \Pi_{a} - \I )$, so that $Z_{a, r} - \I$ is rapidly local and localized near $\{a\}$ by Lemma \ref{lem:*-algebras and ideals of rapidly local operators}. Further noting that
    $$ \norm{ Z_{a, r} - \I } = \norm{ \widetilde \Pi_{a, r}( \widetilde \Pi_{a} - \widetilde \Pi_{a, r} ) + \widetilde \Pi_{a, r}^{\perp}( \widetilde \Pi_{a}^{\perp} - \widetilde \Pi_{a, r}^{\perp} ) } = \norm{ (\widetilde \Pi_{a, r} - \widetilde \Pi_{a, r}^{\perp} )( \widetilde \Pi_{a} - \widetilde \Pi_{a, r} ) } \leq 1/2$$
    we get from Lemma \ref{lem:rapid locality and polar parts} that $W_{a, r} - \I$ is rapidly local and localized near $\{ a \}$.

    Using the intertwining property of $W_{a, r}$ we find that the rapidly local unitary $U'_{a, r} = W_{a, r} U$ satisfies
    $$ U'_{a, r} \Pi_a (U'_{a, r})^* = \widetilde \Pi_{a, r}. $$
    Note that $\widetilde \Pi_{a, r}$ indeed has the property announced above, namely
    $$ \widetilde \Pi_{a, r} = \I_{[a-r, a+r]} \widetilde \Pi_{a, r} \I_{[a-r, a+r]} + \I_{x > a+r}.$$
    Since $U'_{a, r}$ is a compact perturbation of $U$, we have $\Flow(U'_{a, r}) = 0$ and therefore 
    $$ \dim \Ran \I_{[a-r, a+r]} \widetilde \Pi_{a, r} \I_{[a-r, a+r]} = \dim \Ran \I_{[a-r, a+r]} \Pi_{a} \I_{[a-r, a+r]} = \dim \Ran \Pi_{[a, a+r]}. $$
    It follows that there is a unitary $V_{a, r}$ with $V_{a, r}- \I$ supported on $[a-r, a+r]$ such that $V_{a, r} \widetilde \Pi_{a, r} = \Pi_a V_{a, r}$. In particular, $V_{a, r} - \I$ is rapidly local and localized near $\{a\}$. The rapidly local unitary $U_{a, r} = V_{a, r} U'_{a, r} = V_{a,r} W_{a, r} U$ therefore satisfies $[U_{a, r}, \Pi_a] = 0$. Finally, since $U_{a, r} - U = (V_{a, r} W_{a, r} - \I) U = \big(  (V_{a, r} - I)(W_{a, r} - \I) + (V_{a, r} - \I) + (W_{a, r} - \I) \big) U$, it follows from Lemma \ref{lem:*-algebras and ideals of rapidly local operators} that $U_{a, r} - U$ is rapidly local and localized near $\{a\}$.

    Under the additional hypothesis that $[U_i, \Pi_{a-R}] = 0$, it is clear that all the operators $\widetilde \Pi_a, A_a, B_{a, r}, \widetilde \Pi_{a, r}$ are supported on $\Pi_{a-R}$. It follows that $W_{a, r} - \I$ is also supported on $\Pi_{a-R}$ so the unitaries $U'_{a, r}$ still commute with $\Pi_{a-R}$. The final rotation $V_{a, r}$ can then also be done entirely to the right of the cut $a-R$, proving the final claim.
\end{proof}

\subsection{Approximately commuting half-line projectors}

We will construct deformed half-line projectors which approximately commute with the given rapidly local unitary $U$, under the assumption that $\Flow(U) = 0$. The crucial input is the following corollary to Lin's theorem on almost commuting matrices \cite{lin1993almostcommuting, friis1996almost, hastings2009making}:

\begin{lemma}[\cite{herrera2020hastings}~Proposition 16.1] \label{lem:Lin for s.a. and unitary}
    For every $\ep > 0$ there is $\delta > 0$ so that if $a, u \in M_m(\C)$ with $a$ hermitian and $\norm{a} \leq 1$, and $u$ unitary, and $\norm{ [a, u] } \leq \delta$, then there is a unitary $u' \in M_m(\C)$ and a hermitian $a' \in M_m(\C)$ with
    $$ [a', u'] = 0, \quad \norm{a'-a} \leq \ep, \quad \norm{u'-u} \leq \ep.$$
\end{lemma}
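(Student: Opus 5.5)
The plan is to reduce the statement to Lin's theorem in its "almost normal implies nearly normal" form. That form says: for every $\eta>0$ there is $\delta_0>0$, independent of the dimension $m$, such that any $n\in M_m(\C)$ with $\norm{n}\leq 3$ and $\norm{[n,n^*]}\leq\delta_0$ lies within $\eta$ of a normal matrix. The difficulty is that $(a,u)$ is really a triple of almost commuting hermitians ($a$, $\Re u$, $\Im u$), and for triples Lin-type statements fail in general. So the pair must be encoded in a \emph{single} almost normal matrix, in such a way that both $a$ and $u$ can be read off continuously, through functional calculus, from a nearby normal matrix.

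\textbf{The encoding.} Set $n:=(2+a)\,u$. Since $\norm{a}\leq 1$ we have $\I\leq 2+a\leq 3\,\I$, so $n$ is invertible with $\norm{n}\leq 3$ and $\norm{n^{-1}}\leq 1$. We compute $n n^* = (2+a)^2$ and $n^*n = u^*(2+a)^2u$. Hence
$$\norm{[n,n^*]}=\norm{u(2+a)^2-(2+a)^2u}\leq 6\norm{[a,u]}\leq 6\delta.$$
Choosing $\delta$ small, Lin's theorem gives a normal $n'$ with $\norm{n'-n}\leq\eta$. The same computation shows $\abs{n}=(n^*n)^{1/2}=u^*(2+a)u$, so $\norm{\abs{n}-(2+a)}\leq 2\norm{[a,u]}\leq 2\delta$. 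Also $\pol(n)=u$.

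\textbf{Reading off $(a',u')$.} For $\eta<1$, Lemma \ref{lem:A close to U implies pol(A) close to U} (applied after factoring out the invertible part) shows $n'$ is invertible. Write $n'=u'\abs{n'}$ for its polar decomposition. Because $n'$ is normal, $u'=\pol(n')$ and $\abs{n'}$ are both functions of $n'$, so they commute. Define $a':=\abs{n'}-2$, which is hermitian and satisfies $[a',u']=0$. It remains to show that both are close to the originals.

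1. For $a'$: I would use that $x\mapsto x^{1/2}$ is Lipschitz on $[1/4,16]$, or use the integral formula for the square root of a positive operator bounded below, to get $\norm{\abs{n'}-\abs{n}}\leq C\eta$ with $C$ universal. This follows from $\norm{n'^*n'-n^*n}\leq 7\eta$ and the fact that both $n'^*n'$ and $n^*n$ have spectrum in $[1/4,16]$. Therefore $\norm{a'-a}\leq C\eta+2\delta$.

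2. For $u'$: write $u'=n'\abs{n'}^{-1}$ and $u=n\abs{n}^{-1}$. Since both inverses have norm at most $2$, we get $\norm{u'-u}\leq C'\eta$.

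Finally, choose $\eta$ so that $C\eta\leq\ep/2$ and $C'\eta\leq\ep$, and take $\delta\leq\min\{\delta_0/6,\ep/4\}$.

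\textbf{Main obstacle.} The only deep input is Lin's theorem, with its crucial uniformity in $m$. The step that needs care is the choice of encoding. A naive combination such as $a+\iu\,\Re u$ loses $u$, and the full triple cannot be handled directly. The choice $n=(2+a)u$ works because it keeps the modulus bounded away from $0$, which makes the polar decomposition and the square root uniformly continuous. The remaining estimates are routine functional calculus bounds.
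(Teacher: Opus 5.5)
Your argument is correct, and it takes a different route only in the sense that the paper gives no proof at all. The paper imports the lemma from Herrera's Proposition 16.1 and merely calls it a corollary of Lin's theorem. What you have written is a self-contained derivation of exactly that corollary from the almost-normal form of Lin's theorem. The citation keeps the paper short. Your argument shows that the lemma needs nothing beyond Lin's theorem and routine functional calculus.

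The steps hold up:
\begin{itemize}
\item The encoding $n=(2+a)u$ identifies $\T\times[-1,1]$ with the annulus $1\le|z|\le 3$, and $\norm{[n,n^*]}=\norm{[u,(2+a)^2]}\le 6\norm{[a,u]}$ is right.
\item Since $\norm{n^{-1}}\le 1$, any normal $n'$ with $\norm{n'-n}\le\eta<1$ is invertible. Its singular values, which are its eigenvalue moduli, lie in $[1-\eta,3+\eta]$.
\item Hence $\abs{n'}$ and $\pol(n')$ are commuting functions of $n'$ and depend uniformly continuously on it.
\item Every auxiliary estimate is a dimension-free norm bound, so $\delta$ depends on $\ep$ only through Lin's $\delta_0$, which is what the lemma requires.
\end{itemize}

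Three small points to tidy:
\begin{itemize}
\item State the requirement $\eta\le 1/2$ explicitly, so that $\sigma(n'^*n')$ and $\sigma(n^*n)$ lie in $[1/4,16]$.
\item The square-root step needs an operator-Lipschitz bound, not just scalar Lipschitz continuity, which does not give a dimension-free constant. For instance, $X=A^{1/2}-B^{1/2}$ solves $A^{1/2}X+XB^{1/2}=A-B$, which gives $\norm{X}\le\norm{A-B}/(2\sqrt{c})$ whenever $A,B\ge c>0$. Your mention of the integral formula covers this, but the scalar-Lipschitz phrasing alone would not.
\item Your $2\delta$ can be sharpened: $\abs{n}-(2+a)=u^*(2+a)u-(2+a)=u^*[a,u]$, so this error is exactly $\norm{[a,u]}$.
\end{itemize}
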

Note that the threshold $\delta$ depends only on $\ep$, not on the dimension $m$.

We will use this Lemma to prove the following result, whose proof resembles the proofs of \cite[~Lemma 3]{hastings2009making} and \cite[~Lemma 11.1]{herrera2020hastings}. See also \cite{davidson1985almost}.

\begin{lemma} \label{lem:approximately commuting half-line projectors}
    Let $U$ be a unitary on $\H{1}{n}$ which is rapidly local with constants $\{ C_N \}_{N \in \N}$. Suppose further that $\Flow(U) = 0$.
    Then for every $\ep > 0$ there exists $R < \infty$ depending only on $\ep$ and on $\{ C_N \}_{N \in \N}$ so that for every $a \in \R$ there is an orthogonal projection $E_a$ which satisfies
    $$ E_a \I_{\{x < a-R\}} = 0, \quad E_a \I_{\{x > a+R\}} = \I_{\{x > a+R\}}, \quad \norm{ [U, E_a] } \leq \ep. $$
\end{lemma}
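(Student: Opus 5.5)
The plan is to reduce to a finite-dimensional almost-commuting problem, apply Lemma \ref{lem:Lin for s.a. and unitary}, and take a spectral projection of the resulting commuting hermitian matrix. Fix $\ep>0$. Let $\delta=\delta(\ep')$ be the threshold of Lemma \ref{lem:Lin for s.a. and unitary}, with $\ep'$ a small multiple of $\ep$ to be fixed at the end. Choose $R_0$ with $R_0^{-1}$ small compared with $\delta$, and set $R:=3R_0$.

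\textbf{Finite-dimensional setup.} First decouple $U$ at the cut $a+R$ using Lemma \ref{lem:rapidly local decoupling}. Then decouple the result at $a-R$, where the second part of that lemma (after reflecting $x\mapsto -x$) lets us preserve the first decoupling. This produces a unitary $U'$ that commutes with $\I_{[a-R,a+R)}$ (up to the obvious lattice convention at the endpoints). Moreover, $U'-U$ is rapidly local and localized near $\{a-R,a+R\}$, with constants depending only on $\{C_N\}$; this is where $\Flow(U)=0$ enters. Let $u$ be the compression of $U'$ to the finite-dimensional block $\caK=\Ran \I_{[a-R,a+R)}$, which is a unitary matrix. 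Let $h:\R\to[0,1]$ be piecewise linear, equal to $0$ on $(-\infty,a-R_0]$, to $1$ on $[a+R_0,\infty)$, with slope $1/(2R_0)$ in between, and take $b:=h(X)|_{\caK}$.

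\textbf{Almost commutation.} I claim $\norm{[b,u]}\leq\delta$ for $R_0$ large. Split $[h(X),U']=[h(X),U]+[h(X),U'-U]$. The first term has kernel bounded by $\abs{U(x,y)}\,\abs{x-y}/(2R_0)$, so by the Schur test it has norm at most $\norm{U}_1/(2R_0)$. For the second term, $h$ is constant on the balls of radius $R_0$ around $a\pm R$. Hence only matrix elements with $\abs{x-y}\gtrsim R_0$, or with $x,y$ at distance $\gtrsim R_0$ from the localization set $\{a\pm R\}$, contribute. These are bounded by $D_N R_0^{-N}$ via Lemmas \ref{lem:distant regions} and \ref{lem:restricting close to localization region}. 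Since $U'$ is block diagonal for $\caK$, compressing gives $\norm{[b,u]}\leq\delta$, uniformly in $a$.

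\textbf{Lin's theorem and the spectral projection.} Lemma \ref{lem:Lin for s.a. and unitary} gives commuting $b',u'$ on $\caK$ with $\norm{b'-b},\norm{u'-u}\leq\ep'$. Put $Q'=\I_{[1/2,\infty)}(b')$; then $[Q',u']=0$ exactly, so $\norm{[Q',u]}\leq 2\ep'$. Let $L=\I_{[a-R,a-R_0]}$, $M=\I_{(a-R_0,a+R_0)}$ and $\tilde R=\I_{[a+R_0,a+R)}$.
\begin{itemize}
\item For $\psi\in\Ran L$ we have $\norm{b'\psi}\leq\ep'\norm{\psi}$. Since $b'$ commutes with $Q'$, the decomposition $b'\psi=b'Q'\psi+b'Q'^\perp\psi$ is orthogonal, so $\tfrac12\norm{Q'\psi}\leq\norm{b'Q'\psi}\leq\ep'\norm{\psi}$. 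Thus $\norm{Q'L}\leq 2\ep'$.
\item Applying the same argument to $\I-b'$ gives $\norm{(\I-Q')\tilde R}\leq 2\ep'$.
\end{itemize}

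\textbf{Correcting the edges.} Set $B:=M Q' M+\tilde R+\I_{\{x\geq a+R\}}$ on the full Hilbert space, with $B$ equal to $0$ on $\I_{\{x<a-R\}}$. Then $\norm{B-(Q'\oplus\I_{\{x\geq a+R\}})}=O(\ep')$, so $\sigma(B)$ lies within $O(\ep')$ of $\{0,1\}$. Define $E_a:=f(B)$, where $f$ is smooth and equals $0$ near $0$ and $1$ near $1$. Because $B$ is block diagonal with respect to $\{x<a-R_0\}$, $M$ and $\{x\geq a+R_0\}$, and equals $0$ and $\I$ on the outer blocks, $E_a$ satisfies the required support conditions exactly, with $R_0\leq R$. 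We also have $\norm{E_a-(Q'\oplus \I)}=O(\ep')$.

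\textbf{Final estimate.} The commutator bound is
$$\norm{[U,E_a]}\leq\norm{[U',E_a]}+\norm{[U-U',E_a]}.$$
The first term is at most $\norm{[u,Q']}+O(\ep')=O(\ep')$. For the second, $E_a$ is locally $0$ or $\I$ within distance $R_0$ of $a\pm R$, where $U-U'$ is localized, so the same tail estimate as before makes it $\leq D_NR_0^{-N}$. Choosing $\ep'$ small and then $R_0$ large yields $\norm{[U,E_a]}\leq\ep$, with $R$ depending only on $\ep$ and $\{C_N\}$.

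I expect the main obstacle to be the bookkeeping in the almost-commutation step. One must ensure that the decoupling perturbation $U'-U$, which is not small in norm, is invisible both to $h(X)$ and to $E_a$. This is why the cuts are placed at $a\pm 3R_0$, well inside the regions where $h$ and $E_a$ are locally constant, and why the rapid localization of Lemma \ref{lem:rapidly local decoupling} is needed, not just compactness.
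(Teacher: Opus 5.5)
Your proposal is correct and follows essentially the same route as the paper. You decouple $U$ at $a\pm R$ with Lemma~\ref{lem:rapidly local decoupling} and apply Lemma~\ref{lem:Lin for s.a. and unitary} to the compressed block unitary and a linear ramp; you take the spectral projection at $1/2$, clean it up so it is exactly $0$ and $\I$ on the collars, and absorb $U-U'$ by truncating it near the cuts, where $E_a$ is locally $0$ or $\I$. The remaining differences are cosmetic and do not affect the argument: you decouple in the reflected order, you split $[h(X),U']$ instead of using the uniform rapid locality of $U'$ directly, and you use a smooth $f$ in place of $\chi_{[1/2,\infty)}$.
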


\begin{proof}
    The first step is to use Lemma \ref{lem:rapidly local decoupling} to decouple $U$ at the endpoints of a sufficiently large interval $[a-R, a+R]$.

    A first application of Lemma \ref{lem:rapidly local decoupling} yields a unitary $U_{a-R}$ such that $[U_{a-R}, \Pi_{a-R}] = 0$ and $U_{a-R} - U$ is rapidly local and localized near $\{ a - R \}$, uniformly in $a$ and $R$. Applying Lemma \ref{lem:rapidly local decoupling} to $U_{a-R}$ yields a unitary $\widetilde U$ such that
    $$ [\widetilde U, \Pi_{a - R}] = [\widetilde U, \Pi_{a+R}] = 0 $$
    and such that $\widetilde U - U_{a-R}$ is rapidly local and localized near $\{a+R\}$. In particular,
    \begin{equation} \label{eq:U tilde - U is localized near the cuts}
        \widetilde U - U = (\widetilde U - U_{a-R}) + (U_{a-R} - U) =: K_{a+R} + K_{a-R}
    \end{equation}
    is a sum of two terms, the first of which is rapidly local and localized near $\{a+R\}$, and the second of which is rapidly local and localized near $\{a-R\}$, all uniformly in $a$ and $R$.

    Write $I = [a-R, a+R)$, $I_L = \{x < a-R \}$, and $I_R = \{x \geq a+R\}$. The unitary $\widetilde U$ is decoupled at $a-R$ and at $a+R$. That is,
    $$ \widetilde U = \begin{bmatrix}
        \widetilde U_{I_L} && && \\   &&  \widetilde U_{I} &&  \\ 
          &&   && \widetilde U_{I_R}
          \end{bmatrix}$$
    according to the direct sum decomposition
    $$ \H{1}{n} = \I_{I_L} \H{1}{n} \oplus \I_{I} \H{1}{n} \oplus \I_{I_R} \H{1}{n}.$$

    Consider the function $\lambda : I \to [0, 1]$ given by (see Figure \ref{fig:ramp function})
    $$ \lambda(x) = \begin{cases} 
                    0 \quad\text{if} \, x \in [a-R, a-R/2) \\
                    \frac{x - a + R/2}{R} \quad \text{if} \, x \in [a-R/2, a+R/2] \\
                    1 \quad \text{if} \, x \in (a+R/2, a+R)
    \end{cases} $$
    and define the multiplication operator $\Lambda \delta_{x, i} = \lambda(x) \delta_{x, i}$ on $\I_{I} \H{1}{n}$.

    \begin{figure}
    \begin{center}
        \includegraphics[width=14cm]{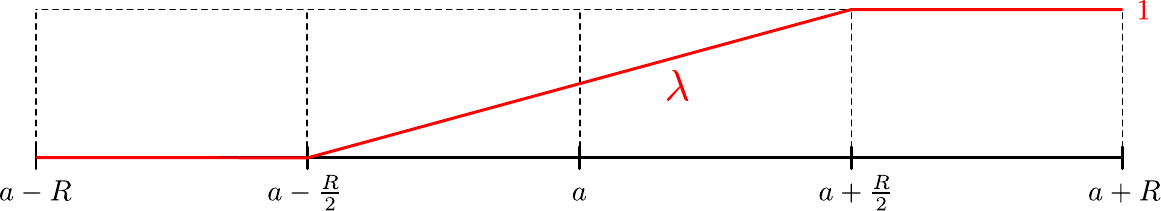}
    \end{center}
    \caption{The function $\lambda : I \to [0, 1]$ is identically zero on the left buffer region $[a-R, a-R/2)$ and identically one on the right buffer region $(a+R/2, a+R)$. In the middle region, it rises linearly from 0 to 1, with a slope $R^{-1}$ which is small for large values of $R$.}
    \label{fig:ramp function}
\end{figure}

    Since $\widetilde U$ is rapidly local, the Schur test \ref{lem:Schur test} implies that there is a constant $C < \infty$ independent of $a$ and $R$ such that
    $$ \norm{ [ \widetilde U_{I}, \Lambda ] } \leq \frac{C}{R}. $$

    For a given $\ep_0 > 0$, we can therefore take $R$ large enough so that Lemma \ref{lem:Lin for s.a. and unitary} gives a unitary $U'$ and a self-adjoint $\Lambda'$ on $\I_{[a-R, a+R)} \H{1}{n}$ such that
    $$ [U', \Lambda'] = 0, \quad \norm{\Lambda - \Lambda'}, \norm{U' - \widetilde U_{I}} \leq \ep_0. $$

    We now turn $\Lambda'$ into a projector while retaining the approximate commutativity with $\widetilde U$, and approximate equality with $0$ and $\I$ on the collar regions $[a-R, a-R/2)$ and $(a+R/2, a+R)$ respectively.

    Define the projector $F := \chi_{[1/2, \infty)}(\Lambda')$. Then $[U', F] = 0$ and
    \begin{equation} \label{eq:approximate commutation of F and U tilde}
        \norm{[F, \widetilde U_I]} = \norm{ [F, \widetilde U_I - U'] + [F, U'] } \leq 2 \ep_0.
    \end{equation}
    Let's give names $L = [a-R, a-R/2)$ and $R = (a+R/2, a+R)$ to the left and right collars of $I$, and let $M = [a-R/2, a+R/2]$ be the middle region. On $\Ran F$ we have $\Lambda' \geq 1/2$ by definition, so $\Lambda'$ is invertible on $\Ran F$ with inverse bounded by 2. This observation together with
    $$ F \I_L = (\Lambda'|_{\Ran F})^{-1} F \Lambda' \I_L = (\Lambda'|_{\Ran F})^{-1} F (\Lambda' - \Lambda) \I_L, $$
    implies that
    \begin{equation} \label{eq:F is close to 0 on L}
        \norm{ F \I_L } \leq 2 \ep_0.
    \end{equation}
    A similar argument shows that
    \begin{equation} \label{eq:F is close to 1 on R}
        \norm{ F^{\perp} \I_R } \leq 2\ep_0.
    \end{equation}
    These bounds express that $F$ approximately acts as 0 and $\I$ on $L$ and $R$ respectively. It follows that we can approximate $F$ by $\I_{M} F \I_M + \I_R$. Indeed,
    \begin{align*}
        F &= \I_L F \I_L + \I_L F \I_M + \I_L F \I_R \\
        &+ \I_M F \I_L + \I_M F \I_M + \I_M F \I_R \\
        &+ \I_R F \I_L + \I_R F \I_M + \I_R F \I_R
    \end{align*}
    where any term containing a factor $F \I_L$ is bounded by $2 \ep_0$, and 
    $$ \norm{ \I_R F \I_M  } = \norm{ \I_R \I_M - \I_R F^{\perp} \I_M} \leq 2\ep_0. $$
    Together with
    $$ \norm{  \I_R F \I_R - \I_R} = \norm{ \I_R F^{\perp} \I_R } \leq 2\ep_0 $$
    we conclude from this that
    $$ \norm{ F - (\I_M F \I_M + \I_R) } \leq 16 \ep_0.$$

    Now define a new projector $G = \chi_{[1/2, \infty)}(\I_M F \I_M  + \I_R)$. It satisfies
    $$ G \I_L = 0, \quad G \I_R = \I_R, \quad \norm{ [G, \widetilde U_I] } \leq  66 \ep_0$$
    where we used Eq. \eqref{eq:approximate commutation of F and U tilde}. Let
    $$
    E_a :=  \begin{bmatrix}
         0 && && \\
         &&  G &&  \\ 
          &&   && \I
          \end{bmatrix}.
    $$
    Then $\norm{ [E_a, \widetilde U] } \leq 66 \ep_0$. Let
    \begin{align*}
        K_{a+R}^{(R)} &:=  \I_{(a+R/2, a+3R/2)} K_{a+R} \I_{(a+R/2, a+3R/2)} \\
        K_{a-R}^{(R)} &:= \I_{(a-3R/2, a-R/2)} K_{a-R} \I_{(a-3R/2, a-R/2)}
    \end{align*}
    so $[K_{a+R}^{(R)}, E_a] = [K_{a-R}^{(R)}, E_a] = 0$. From Eq. \eqref{eq:U tilde - U is localized near the cuts} and Lemma \ref{lem:restricting close to localization region}
    we have
    $$ \norm{ (\widetilde U - U) - ( K_{a+R}^{(R)} +  K_{a-R}^{(R)})  } \leq D_N R^{-N} $$
    for all $N \in \N$ and constants $\{ D_N \}_{N \in \N}$ which are independent of $a$ and $R$. It follows that
    $$ \norm{[ \widetilde U - U, E_a ]} \leq 2D_N R^{-N} $$
    and hence
    $$ \norm{ [U, E_a] } \leq 66 \ep_0 + 2D_N R^{-N} $$
    for all $N \in \N$. Choosing $\ep_0 = \ep/132$ and taking $R$ large enough yields the claim.
\end{proof}

\subsection{Proof of Proposition \ref{prop:gapping 1D unitaries}} \label{subsec:proof of gapping 1D unitaries proposition}
We use the almost-commuting half-line projections provided by Lemma \ref{lem:approximately commuting half-line projectors} to show that each $U_i$ is norm close to a direct sum of localized finite dimensional unitaries. These can then be deformed to open up a gap around -1. The first part of the construction resembles the proof of \cite[~Lemma 2]{hastings2009making}.

By hypothesis each $U_i$ is rapidly local with constants $\{ C_N \}_{N \in \N}$ independent of $i$. To avoid clutter we drop $i$ from the notation and make sure all bounds depend only on the constants $\{ C_N \}$.

For any $\ep_0 > 0$, Lemma \ref{lem:approximately commuting half-line projectors} provides a uniform $R < \infty$ and projectors $\{ E_a \}_{a \in \R}$ such that
$$ E_a \I_{\{x < a-R\}} = 0, \quad E_a \I_{\{x > a+R\}} = \I_{\{x > a+R\}}, \quad \norm{ [U, E_a] } \leq \ep_0 $$
for all $a \in \R$. Fix $L > 4R$ and let $F_j := E_{j L}$ for all $j \in \Z$. Then
$$ F_j F_{j+1} = F_{j+1} = F_{j+1} F_j, \quad \text{for all} \, j \in \Z. $$
It follows that $\{ D_j := F_{j} - F_{j+1} \}_{j \in \Z}$ is a family of mutually orthogonal projectors and that $\sum_j D_j = \I$. Moreover, each $D_j$ is supported on $[jL - R, (j+1)L + R]$, an interval of length $L + 2R$.

Define
$$ B := \sum_j D_j U D_j $$
and note that $D_j U D_{j+1} = D_j [U, F_{j+1}] D_{j+1}$ from which it follows that
$$ \norm{D_j U D_{j+1}} \leq \ep_0 \quad \text{for all} \, j \in \Z$$
and by the same reasoning
$$ \norm{D_{j+1} U D_{j}} \leq \ep_0 \quad \text{for all} \, j \in \Z.$$

If $\abs{j - k} \geq 2$ then the supports of $D_j$ and $D_k$ are separated by a distance at least $\delta(j, k) := (\abs{j - k} - 1)L - 2R \geq \abs{j - k} L /4$. It therefore follows from Lemma \ref{lem:distant regions} that there are constants $\{D'_N\}_{N \in \N}$ depending only on $\{C_N\}_{N \in \N}$ such that
$$ \norm{ D_j U D_k } \leq D'_N (\abs{j-k} L)^{-N} $$
whenever $\abs{j-k} \geq 2$.

Write $T_{jk} = D_j (U - B) D_k$ for the matrix blocks of $U - B$ with respect to the decomposition $\H{1}{n} = \bigoplus_j D_j \H{1}{n}$. Then the above bounds yield
$$ \sum_k \norm{ T_{jk} } \leq 2\ep_0 + 2 D'_N L^{-N} \sum_{q \geq 2} q^{-N} $$
where the sum on the right hand side is finite for $N \geq 2$. A similar bound holds for the sum over $j$ so the Schur test yields
$$\norm{U - B} \leq  2\ep_0 + D_{N} L^{-N}$$
for all $N \in \N$ and constants $\{ D_N \}_{N \in \N}$ depending only on $\{ C_N \}_{N \in \N}$.

Choose $\ep_0 = \ep/16$ and take $L$ large enough so that $\norm{U - B} \leq \ep / 4$. Since $B$ is reduced by all of the $D_j$, so is its polar part $W = \pol(B)$, and by Lemma \ref{lem:A close to U implies pol(A) close to U} we have
$$ \norm{U - W} \leq \ep/2. $$

Finally, perturb each block $D_j W D_j$ to obtain a unitary $V$ so that $\dist(-1, \sigma(V)) \geq \ep/2$. This can be done so that $\norm{W - V} \leq \ep/2$ and therefore $\norm{U - V} \leq \ep$. Finally, note that $V$ has propagation bounded by $L + 2R + 2$, the upper bound on the size of the supports of the $D_j$. \hfill$\qedsymbol$\vskip.5cm

\begin{remark}\label{rem:real rank zero}
    The original version of Lin's theorem is for pairs of almost commuting hermitian matrices \cite{lin1993almostcommuting}. Using this result it is straightforward to repeat the analysis of this section for rapidly local self-adjoint operators on $\H{1}{n}$. The decoupling step in Lemma \ref{lem:rapidly local decoupling} is trivial in the self-adjoint case because diagonal blocks of self-adjoints are still self-adjoint. In particular no topological flow obstruction appears.

    The self-adjoint analogue of the proof of Proposition \ref{prop:gapping 1D unitaries} then establishes that any rapidly local self-adjoint operator $H$ on $\H{1}{n}$ can be arbitrarily well approximated in norm by a direct sum of finite-dimensional blocks. These blocks can then be deformed so that their spectra all lie in the same finite set. In this way we see that any such $H$ is norm-close to an operator with finite spectrum. This gives an artisanal proof of the fact that the uniform Roe algebra over $\Z$ has real rank zero, \ie the set of its self-adjoint elements with finite spectrum is norm-dense in the set of all self-adjoint elements. This result was recently proven by general $\rm C^*$-algebraic means in \cite[~Corollary 4.7]{an2026complexity}. Interestingly, they also use Lin's theorem.
\end{remark}

\section{Rapidly localized Wannier bases in 2D} \label{sec:rapidly localized Wannier bases in 2D}

Our ultimate goal is to find exponentially localized Wannier bases for exponentially local projectors on $\H{2}{n}$ with zero Chern number. The proof strategy is to first construct \emph{rapidly localized} Wannier bases (Section \ref{subsec:2D rapidly local Wannier}), and then upgrade them to exponential locality in the final step (Section \ref{sec:rapid to exponential}). Let us here define what we mean by a rapidly localized Wannier basis.

\begin{definition} \label{def:rapidly localized Wannier basis}
    Let $P \in \caB(\H{d}{n})$ be a projector. A rapidly localized Wannier basis for $P$ with constants $\{ C_N \}_{N \in \N}$ is an orthonormal basis $\{ w_{\al} \}_{\al \in \caA}$ of $\Ran P$ for which there are localization centres $c_{\al} \in \R^d, \al \in \caA$ such that
    $$ \abs{w_{\al}(x)} \leq C_N \jb{x-c_{\al}}^{-N} \quad \text{for all } \, x \in \Z^d, \al \in \caA.  $$
\end{definition}

In this section we will show that any Chern trivial rapidly local projector $P$ on $\H{2}{n}$ admits a rapidly local Wannier basis.

The idea is to decompose $P$ as a sum $P = \sum_{j \in \Z} P_j$ of mutually orthogonal and rapidly local `strip projections' $P_j$, each of which is localized near a horizontal strip. Such rapidly local one dimensional projections will be shown to admit rapidly localized Wannier bases in Section \ref{subsec:1D rapidly local Wannier}. These Wannier bases can be combined into a rapidly local Wannier basis for $P$.

\subsection{Gapping out the boundary unitaries}

Let $P$ be a rapidly local projection on $\H{2}{n}$ with $\Ch(P) = 0$.

We apply Proposition \ref{prop:gapping 1D unitaries} to gap out the boundary unitaries $U_a = \ed^{2 \pi \iu E_a}$ with $E_a = P \Pi_a P$ introduced in Section \ref{subsec:boundary unitaries}.

\begin{lemma} \label{lem:gapping out boundary unitaries}
    Let $\{U_a\}_{a \in \R}$ be the family of boundary unitaries associated to $P$. For every $0 < \ep \leq 1$ there are integers $L, R < \infty$ and a family of unitaries $\{ V_a \}_{a \in \R}$ such that each $V_a$ has propagation bounded by $R$, has $V_a - \I$ supported on the strip $\sfS_a = \Upper_{a-L} \setminus \Upper_{a+L}$, and satisfies
    $$ \norm{ U_a - V_a } \leq \ep, \quad \dist(-1, \sigma(V_a)) \geq \ep/4. $$
\end{lemma}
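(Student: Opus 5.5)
The plan is to reduce to the one-dimensional Proposition \ref{prop:gapping 1D unitaries} by first compressing $U_a$ to a strip and then viewing the strip as a quasi one-dimensional system. By Proposition \ref{prop:boundary unitaries act near the boundary}, the family $\{U_a - \I\}_{a\in\R}$ is uniformly rapidly local and localized near $\{\Hyper_a\}$.

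\textbf{Step 1: cut $U_a$ down to a strip.} Fix $L$, to be chosen large. By Lemma \ref{lem:restricting close to localization region}, $A_a := \I_{\sfS_a}(U_a-\I)\I_{\sfS_a} + \I$ satisfies $\norm{U_a - A_a} \leq D_N L^{-N}$ uniformly in $a$, and $\{A_a\}$ is uniformly rapidly local. For $L$ large this distance is at most $\delta<1$, so Lemma \ref{lem:rapid locality and polar parts} shows that $\widetilde U_a := \pol(A_a)$ is unitary, uniformly rapidly local, and $\norm{\widetilde U_a - U_a}\leq \ep/4$. Since $A_a$ is reduced by $\I_{\sfS_a}$ and equals $\I$ off the strip, the same holds for $\widetilde U_a$. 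In particular $\widetilde U_a - \I$ is supported on $\sfS_a$.

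\textbf{Step 2: identify the strip with a 1D system and check the flow.} Partition $\sfS_a\cap\Z^2$ into columns $\{x_1 = k\}$. Each column has at most $2L+1$ sites, and a fixed count $m \le 2L+1$ of them (depending only on $a \bmod 1$ and $L$). Choose $m$ independent of $a$ by padding with dummy sites, or just use $n' = n(2L+1)$ with the extra components acted on by the identity. This identifies $\I_{\sfS_a}\H{2}{n}$ with $\H{1}{n'}$, with the coordinate $x_1$ becoming the 1D coordinate. Rapid locality in $\Z^2$ restricted to the strip gives uniform rapid locality in $\H{1}{n'}$, since $|x_1-y_1|\le|x-y|$, with constants independent of $a$ because $L$ is fixed. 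Under this identification $\I_{\Right_b}$ becomes the half-line projector $\Pi_b$. Thus $\Flow(\widetilde U_a|_{\sfS_a}) = \ind(\I_{\Right_b},\widetilde U_a)$, and the identity part contributes nothing. By homotopy/compact stability of the index \cite{avron1994index} this equals $\ind(\I_{\Right_b},U_a) = \Flow(U_a)$: the straight-line path from $U_a$ to $\widetilde U_a$ has norm less than $1$ and stays invertible, so its polar parts give a norm-continuous unitary path, and commutators with $\I_{\Right_b}$ stay compact along it by rapid localization near $\Hyper_a$. Proposition \ref{prop:bulk-boundary correspondence} then gives $\Flow(U_a)=\Ch(P)=0$.

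\textbf{Step 3: apply the 1D result and transport back.} Apply Proposition \ref{prop:gapping 1D unitaries} with $\ep/2$ to the family $\{\widetilde U_a|_{\sfS_a}\}_{a\in\R}$. It yields unitaries $V'_a$ of uniform 1D propagation $R'$ with $\norm{V'_a-\widetilde U_a|_{\sfS_a}}\le \ep/2$ and $\dist(-1,\sigma(V'_a))\ge\ep/4$. Set $V_a := V'_a\oplus \I$, acting as $V'_a$ on the strip and as the identity elsewhere. Then $V_a - \I$ is supported on $\sfS_a$, and its 2D propagation is at most $R := R' + 2L + 1$ (rounded to an integer) because the strip has width $2L$. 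Also $\sigma(V_a)\subset\sigma(V'_a)\cup\{1\}$ stays away from $-1$, and $\norm{U_a-V_a}\le \ep/4+\ep/2\le\ep$.

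The main obstacle is Step 2: making the flow identification rigorous, both the choice of the column identification uniformly in $a$ and the index invariance under the strip cutoff. If the homotopy argument is awkward, the fallback is to note that $\widetilde U_a - U_a$ is compact relative to the commutator structure, namely $\widetilde U_a\I_{\Right_b}\widetilde U_a^* - U_a\I_{\Right_b}U_a^*$ is compact. This does not hold in general, however, so the norm-small path argument is the preferred route.
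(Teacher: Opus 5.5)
Your proposal is correct and is essentially the paper's proof: compress $U_a - \I$ to $\sfS_a$, take the polar part, view the strip as $\H{1}{2Ln}$, check the flow, apply Proposition \ref{prop:gapping 1D unitaries} with $\ep/2$, and extend by $\I$. The only variation is how you get $\Flow(\widetilde U_a) = \Flow(U_a)$: you use a norm-continuous unitary homotopy, whereas the paper uses \cite[Proposition 5.1]{avron1994index} together with $\norm{\widetilde U_a \I_{\Right_b} \widetilde U_a^* - U_a \I_{\Right_b} U_a^*} \leq 2 \norm{\widetilde U_a - U_a} < 1$. That is exactly the ``fallback'' you dismissed, and it does apply here, because $\widetilde U_a \I_{\Right_b} \widetilde U_a^* - \I_{\Right_b}$ and $U_a \I_{\Right_b} U_a^* - \I_{\Right_b}$ are both compact.

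You should drop the padding option. For integer $L$, every column of $\sfS_a = \{ a - L \leq x_2 < a + L \}$ contains exactly $2L$ sites for every $a$, so padding is unnecessary. It would in fact break Step 3: the unitary returned by Proposition \ref{prop:gapping 1D unitaries} need not leave the dummy components invariant, so $V'_a \oplus \I$ would not define an operator on $\H{2}{n}$.
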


\begin{proof}
    Put
    $$ B_a := \I_{\sfS_a} U_a \I_{\sfS_a} + \I_{\sfS_a}^{\perp} = \I + \I_{\sfS_a}( U_a - \I) \I_{\sfS_a}. $$
    Since the $\{ U_a - \I \}$ are uniformly rapidly local and rapidly localized near the lines $\Hyper_a$ by Proposition \ref{prop:boundary unitaries act near the boundary}, it follows from Lemma \ref{lem:restricting close to localization region} that there are constants $\{D_N\}_{N \in \N}$ such that
    $$ \norm{ B_a - U_a } \leq D_N L^{-N} $$
    for all $N \in \N$. By Lemma \ref{lem:rapid locality and polar parts} we can therefore take $L$ large enough so that $\{W_a = \pol(B_a)\}_{a \in \R}$ is uniformly rapidly local and
    $$ \norm{ W_a - U_a } \leq \ep/6. $$
    Since $B_a - \I$ is supported on $\sfS_a$ the same is true for $W_a - \I$. In other words, $W_a = \I_{\sfS_a} W_a \I_{\sfS_a} + \I^{\perp}_{\sfS_a}$, where the $\I_{\sfS_a} W_a \I_{\sfS_a}$ form a uniformly rapidly local family of unitaries on the one-dimensional system $\I_{\sfS_a} \H{2}{n} \simeq \H{1}{2Ln}$. Since $\Flow(U_a) = 0$, also $\Flow(W_a) = 0$ for all $a \in \R$. Indeed, \cite[~Proposition 5.1.]{avron1994index} gives
    \begin{align*}
        \Flow(W_a) &= \ind(\I_{\Right_b}, W_a) = \ind(\I_{\Right_b}, U_a) + \ind( U_a \I_{\Right_b} U_a^*, W_a U_a^* ) \\
        &= \Flow(U_a) + \ind( U_a \I_{\Right_b} U_a^*, W_a U_a^* )
    \end{align*}
    and $\ind( U_a \I_{\Right_b} U_a^*, W_a U_a^* ) = 0$ because $\norm{W_a \I_{\Right_b} W_a^* - U_a \I_{\Right_b} U_a^* } \leq \ep/3 < 1$.
    
    We can therefore apply Proposition \ref{prop:gapping 1D unitaries} with $\ep/2$ to obtain a propagation bound $R'$ and a family of unitaries $\{V_a\}$ of propagation $R'$ (in the Euclidean metric of $\H{1}{2Ln}$) so that $V_a - \I$ is supported on $\sfS_a$ and
    $$ \norm{W_a - V_a} \leq \ep/2, \quad \dist(-1, \sigma(V_a)) \geq \ep/4. $$
    The propagation bound $R'$ w.r.t. the Euclidean metric for $\H{1}{2Ln}$ translates to a propagation bound $R = \ceil{\sqrt{ (R')^2 + (2L)^2 }}$ for the $V_a$ as unitaries on $\H{2}{n}$.
    
    Combining this with  $\norm{U_a-W_a} \leq \ep/2$ we get $\norm{U_a - V_a} \leq \ep$, concluding the proof.
\end{proof}

\subsection{Spectral gluing}

We would like to lift the gapped approximations $V_a$ of the boundary unitaries $U_a$ obtained in Lemma \ref{lem:gapping out boundary unitaries} to obtain uniformly gapped approximations $G_a$ to the operators $E_a = P \Pi_a P$. If this can be done, then we can use the smooth functional calculus to uniformly turn the $G_a$ into approximate half-space projectors. In order for this to work, we must first modify the $V_a$ so that they are equal to $U_a$ on a spectral window around 1, while maintaining the gap around -1.

Fix a smooth function $s : \T \to [0, 1]$ such that
\begin{equation}
    s(z) = 1 \,\, \text{if} \,\, \abs{z+1} \leq 1/2 \quad\quad \text{and} \quad\quad s(z) = 0 \,\, \text{if} \,\, \abs{z-1} \leq 1/2,
\end{equation}
and $0 < s(z) < 1$ everywhere else. That is, $s$ is a smooth filter function on a spectral window around -1, see Figure \ref{fig:filter function}.

\begin{figure}
    \begin{center}
        \includegraphics[width=6cm]{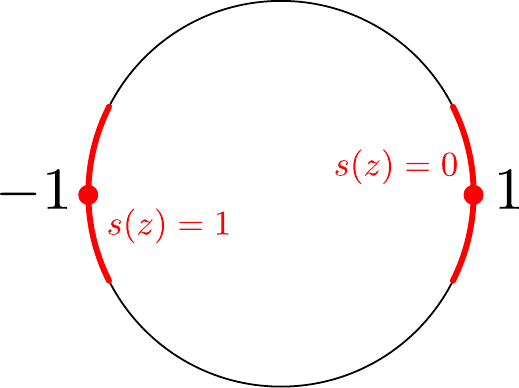}
    \end{center}
    \caption{The smooth filter function $s : \T \to [0, 1]$ takes the value 1 in a neighbourhood of -1, and the value 0 in a neighbourhood of 1.}
    \label{fig:filter function}
\end{figure}

\begin{lemma} \label{lem:spectral gluing}
    Let $\{ U_{i} \}_{i \in \caI}$ and $\{ V_{i} \}_{i \in \caI}$ be uniformly rapidly local families of unitaries on $\H{d}{n}$ such that $\{ U_i - \I \}_{i \in \caI}$ and $\{ V_i - \I \}_{i \in \caI}$ are uniformly rapidly local and localized near some $\{ \Sigma_i \}_{i \in \caI}$.

    Suppose there is $0 < \delta < 1/736$ such that
    $$ \norm{U_i - V_i} \leq \delta, \quad \dist(-1, \sigma{V_i}) \geq \delta/4 $$
    for all $i \in \caI$. Then the operators
    \begin{equation}
        K_i = K_i^* = -\iu \log( V_i U_i^* ), \quad F_i = s(U_i), \quad W_i = \ed^{\iu F_i K_i F_i} U_i
    \end{equation}
    are well-defined for all $i \in \caI$, where $\log$ denotes the principal logarithm, and
    \begin{equation} \label{eq:spectral gluing closeness results}
        \norm{U_i - W_i} \leq 2\delta, \quad \dist( -1, \sigma(W_i) ) \geq \delta/8.
    \end{equation}
    The family $\{ W_i - \I \}_{i \in \caI}$ is uniformly rapidly local and localized near $\{ \Sigma_i \}_{i \in \caI}$.
    
    Furthermore, if $Q_i$ is the projector onto $\Ker F_i$ then $U_i Q_i = W_iQ_i$ and $[U_i, Q_i] = [W_i, Q_i] = 0$, and if there is a subspace $\caK \subset \H{d}{n}$ such that $U_i|_{\caK} = \I_{\caK}$, then also $W_i|_{\caK} = \I_{\caK}$.
\end{lemma}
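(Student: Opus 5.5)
The plan is to verify the claims in the order: well-definedness and the norm bound, then locality, then the algebraic statements about $Q_i$ and $\caK$, and finally the spectral gap of $W_i$, which I expect to be the only delicate step. Throughout I drop the index $i$ and make sure every bound depends only on $\delta$ and the uniform locality constants.

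\emph{Well-definedness and $\norm{U-W}\le 2\delta$.} Since $\norm{VU^*-\I}=\norm{V-U}\le\delta<1$, the spectrum of the unitary $VU^*$ lies in the arc $\{|z-1|\le\delta\}$. The principal logarithm is smooth there, so $K=-\iu\log(VU^*)$ is self-adjoint. Its norm satisfies $\norm{K}\le 2\arcsin(\delta/2)\le \tfrac{3}{2}\delta$, say. The operator $F=s(U)$ is a self-adjoint contraction, $FKF$ is self-adjoint, and therefore $W=\ed^{\iu FKF}U$ is unitary. Moreover
$$\norm{W-U}=\norm{\ed^{\iu FKF}-\I}\le\norm{FKF}\le\norm{K},$$
which is at most $2\delta$.

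\emph{Locality.} I would write $VU^*-\I=(V-\I)(U^*-\I)+(V-\I)+(U^*-\I)$. By Lemma \ref{lem:*-algebras and ideals of rapidly local operators} this is rapidly local and localized near $\Sigma$. Choose $g\in C^\infty(\T)$ equal to $-\iu\log$ on the arc $\{|z-1|\le\delta\}$; note $g(1)=0$. Proposition \ref{prop:rapid locality is preserved by smooth functions}, applied to the pair $VU^*$ and $\I$, then shows that $K=g(VU^*)-g(\I)$ is localized near $\Sigma$. The same argument with $s(1)=0$ gives $F=s(U)-s(\I)$ localized near $\Sigma$, so $FKF$ is localized near $\Sigma$ as well. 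Next, $\ed^{\iu FKF}-\I=FKF\,k(FKF)$ with $k(x)=(\ed^{\iu x}-1)/x$, which is smooth. Finally
$$W-\I=(\ed^{\iu FKF}-\I)U+(U-\I),$$
and the ideal property in Lemma \ref{lem:*-algebras and ideals of rapidly local operators} finishes the locality claim. All bounds are uniform in $i$.

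\emph{Algebraic statements.} Since $s\ge0$ vanishes exactly on $\{|z-1|\le1/2\}$, $Q$ is the spectral projection of $U$ onto that arc, so $[U,Q]=0$. From $FQ=QF=0$ we get $FKFQ=QFKF=0$, hence $\ed^{\iu FKF}Q=Q\ed^{\iu FKF}=Q$. This gives $WQ=\ed^{\iu FKF}QU=QU=UQ$, and also $QW=QU=UQ=WQ$. If $U|_\caK=\I_\caK$, then $\caK$ lies in the spectral subspace of $U$ at $1$, which is contained in $\Ran Q$. Therefore $W|_\caK=UQ|_\caK=\I_\caK$.

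\emph{The gap $\dist(-1,\sigma(W))\ge\delta/8$ (main obstacle).} The naive estimate $\norm{W-V}\le\norm{K-FKF}$ is of order $\delta$, which is too large compared with the gap $\delta/4$ of $V$. So I would instead exploit the spectral structure of $U$. Let $\chi$ be the spectral projection of $U$ on $\{|z+1|\le1/4\}$, where $F=\I$, and let $\chi^\perp=\I-\chi$.

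On $\Ran\chi^\perp$ the spectrum of $U$ stays at distance about $1/4$ from $-1$, and $W$ is a $2\delta$-perturbation of $U$. So vectors mostly in $\Ran\chi^\perp$ satisfy $\norm{(W+\I)\psi}\gtrsim 1/4-2\delta$.

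On $\Ran\chi$ we have $FKF\chi=FK\chi$, so $W$ agrees with $V=\ed^{\iu K}U$ up to terms involving $(\I-F)K\chi$. Since $\I-F$ is supported spectrally on $\{|z+1|\ge1/2\}$, which is separated from the support of $\chi$, I would try to show $\norm{(\I-F)K\chi}=O(\delta^2)$. The route I would try first is a sin-theta / Davis--Kahan type bound using $[U,K]=O(\delta^2)$. This in turn would follow from $UKU^*=-\iu\log(UVU^*U^*)$ together with $\norm{UVU^*U^*-VU^*}\le\norm{[U,V]}\le 2\norm{U-V}\norm{V-\I}$-type estimates; if that fails I would simply expand the exponentials to second order.

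Combining the two regimes, with cross terms controlled by the same separation, should give $\norm{(W+\I)\psi}\ge\delta/4-C\delta^2\ge\delta/8$ for all unit vectors $\psi$. The explicit threshold $\delta<1/736$ is exactly what should absorb the constant $C$ in this second-order bookkeeping.
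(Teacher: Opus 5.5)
The norm bound, the locality of $W-\I$, and the claims about $Q$ and $\caK$ are handled correctly and essentially as in the paper. Showing that $F=s(U)-s(\I)$ is itself localized near $\Sigma$ is a harmless strengthening; just cut $k(x)=(\ed^{\iu x}-1)/x$ off to compact support before invoking Proposition \ref{prop:rapid locality is preserved by smooth functions}.

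\textbf{The gap step fails as proposed.} It rests on $\norm{(\I-F)K\chi}=O(\delta^2)$, to be derived from $[U,K]=O(\delta^2)$, and both are false. The identity $\norm{[U,V]}=\norm{[U-V,V-\I]}\le 2\norm{U-V}\norm{V-\I}$ is correct but only gives $O(\delta)$. The reason is that $V-\I$ is localized, not small: whenever $U$ has spectrum near $-1$, which is the whole point, $\norm{V-\I}$ is of order one.

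Here is a concrete counterexample on a two-dimensional block, with the identity elsewhere. Take $U=\sigma_z$, let $e_2$ be its $-1$ eigenvector and $P_2$ the projection onto $e_2$, and set $K=t(\sigma_x+P_2)$ with $t=\delta/2$.
\begin{itemize}
\item The hypotheses hold: $\norm{V-U}\le\norm{K}<\delta$, and the eigenvalue of $V$ near $-1$ is $-\ed^{\iu t}+O(t^2)$, so $V$ is gapped.
\item One has $F=\chi=P_2$, yet $\norm{(\I-F)K\chi}=t$ and $\norm{[U,K]}=2t$.
\end{itemize}
This coupling is a first-order term, so expanding the exponentials to second order cannot remove it. It is also of the same order as the gap $\delta/4$ you want to preserve. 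Hence no vector-by-vector comparison of $\norm{(W+\I)\psi}$ with $\norm{(V+\I)\psi}$ can work. At $\psi=e_2$ one has $\norm{(V+\I)e_2}\approx\sqrt2\,t$ but $\norm{(W+\I)e_2}\approx t$, even though the spectra of $V$ and $W$ near $-1$ agree to $O(\delta^2)$.

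\textbf{The missing idea.} Only the \emph{diagonal} block needs to agree to second order; the $O(\delta)$ off-diagonal couplings move the spectrum near $-1$ only quadratically.

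Let $E$ be the spectral projection of $U$ onto $\{|z+1|\le 1/2\}$. Then $FE=EF=E$, so $E(FKF-K)E=0$ exactly. Consequently $\norm{E(W-V)E}\le 6\delta^2$, while the off-diagonal blocks satisfy only $\norm{EVE^\perp}\le\delta$ and $\norm{EWE^\perp}\le 2\delta$.

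The paper then takes Schur complements with respect to $E\oplus E^\perp$:
\begin{itemize}
\item For $|z+1|\le 1/4$, the blocks $E^\perp(V-z)E^\perp$ and $E^\perp(W-z)E^\perp$ are invertible with inverses bounded by $8$. This follows from the spectral separation of $U|_{\Ran E^\perp}$ from $-1$.
\item Hence $\norm{S_W(z)-S_V(z)}\le 46\delta^2$.
\item For $|z+1|<\delta/8$, the gap of $V$ gives $\norm{S_V(z)^{-1}}\le 8/\delta$, read off as a corner of $(V-z)^{-1}$.
\item A Neumann series then inverts $S_W(z)$, and hence $W-z$, once $368\delta<1/2$.
\end{itemize}
That is where $\delta<1/736$ comes from. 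It is not absorbing a constant from a second-order bound on $(\I-F)K\chi$.

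A vector-wise version of your plan is possible. A near-null vector of $W-z$ must have $E^\perp$-component approximately $-D_W(z)^{-1}C_W\psi_1=O(\delta)$ relative to its $E$-component $\psi_1$. But that is this Schur reduction in disguise, and the comparison with $V$ must still go through $S_V$, not through $\norm{(V+\I)\psi}$.
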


\begin{proof}
    Let's drop $i \in \caI$ from the notation, and note that all the bounds in the proof are uniform.

    Since $\norm{VU^* - \I} = \norm{V - U} \leq \delta < 1$ the unitary $VU^*$ has its spectrum contained in the ball $\{ z \in \C \, : \, \abs{z-1} \leq \delta \}$. Since $\delta < 2$, the logarithm is well-defined and
    $$\norm{K} \leq 2 \arcsin(\delta/2) \leq \frac{\pi}{2} \delta \leq 2\delta.$$

    The operator $F = s(U)$ is self-adjoint and $0 \leq F \leq \I$ so
    $$ \norm{W - U} = \norm{ \ed^{\iu F K F} - \I } \leq \norm{FKF} \leq 2 \delta.$$

    Since $\Ker F$ reduces $U$ and $F$ vanishes on $\Ker F$, we find that $W|_{\Ker F} = U|_{\ker F}$. Moreover, if U acts as identity on $\caK \subset \H{d}{n}$ then since $s(1) = 0$ we have $\caK \subset \Ker F$ and therefore $W|_{\caK} = \I_{\caK}$.

    Let us now prove the gap assertion in \eqref{eq:spectral gluing closeness results}. Put
    $$ E = \chi_{ \{ \abs{z+1} \leq 1/2 \} }(U). $$
    It is the spectral projector to which $F$ is a smooth approximation. Since $s \chi_{\{ \abs{z+1} \leq 1/2 \}} = \chi_{\{ \abs{z+1} \leq 1/2 \}}$ we have $F E = E$ so
    $$ E( FKF - K )E = 0. $$
    Using $V = \ed^{\iu K} U$ and $W = \ed^{\iu FKF} U$ and expanding the exponentials
    $$ \norm{ E( W - V ) E } \leq 2 \sum_{n \geq 2} \frac{(2\delta)^n}{n!} \leq 8 \left( \sum_{n \geq 2} \frac{1}{n!} \right)  \, \delta^2 \leq  6 \delta^2 $$
    where we used $\delta \leq 1/2$.

    Because $[E, U] = 0$, it follows from $\norm{U - V} \leq \delta$ and $\norm{U - W} \leq 2\delta$ that
    $$ \norm{ E V E^{\perp} }, \norm{E^{\perp} V E} \leq \delta \quad \text{and} \quad  \norm{ E W E^{\perp} }, \norm{E^{\perp W E}} \leq 2\delta.$$

    To show that $W$ inherits the gap of $V$ we want to show that $W - z$ is invertible for $z$ sufficiently close to -1. We will do this by showing that the Schur complement of $W - z$ w.r.t. the decomposition $\H{d}{n} = E \H{d}{n} \oplus E^{\perp} \H{d}{n}$ is invertible. To this end, write
    $$ U - z = \begin{bmatrix} A_U(z) && 0 \\ 0 && D_U(z) \end{bmatrix}, \quad V - z = \begin{bmatrix} A_V(z) && B_V \\ C_V && D_V(z) \end{bmatrix}, \quad W - z = \begin{bmatrix} A_W(z) && B_W \\ C_W && D_W(z) \end{bmatrix}. $$
    Then the above estimates give us
    $$ \norm{B_V}, \norm{C_V} \leq \delta, \quad \norm{B_W}, \norm{C_W} \leq 2\delta $$
    and
    $$ \norm{ A_W(z) - A_V(z) } \leq 6 \delta^2 $$
    for all $z \in \C$.

    The spectrum of $U|_{\Ran E^{\perp}}$ is contained in $\{ \abs{w+1} \geq 1/2 \}$. For $\abs{z+1} \leq 1/4$ we therefore have $\dist \big(z, \sigma( U|_{\Ran E^{\perp}}) \big) \geq 1/4$ and so $D_U(z) = U|_{\Ran E^{\perp}} - z$ is invertible with $\norm{ D_U(z)^{-1} } \leq 4$. This implies bounds on $D_V(z)^{-1}$ and $D_W(z)^{-1}$ as follows. On $E^{\perp} \H{d}{n}$ define
    $$ H_V(z) = D_U(z)^{-1}( D_V(z) - D_U(z) ) \quad \text{and} \quad H_W(z) = D_U(z)^{-1}( D_W(z) - D_U(z) )$$
    Then $D_V(z) = D_U(z)(\I + H_V(z))$ and $D_W(z) = D_U(z)(\I + H_W(z))$ and we get
    $$\norm{ H_V(z) } \leq 4\delta \quad \text{and} \quad \norm{ H_W(z) } \leq 8 \delta.$$
    Since $\delta \leq 1/16$ the geometric series for $(\I + H_V(z))^{-1}$ and $(\I + H_W(z))^{-1}$ converge, yielding bounds
    \begin{align*}
        \norm{D_V(z)^{-1}} &\leq \norm{ D_U(z)^{-1} } \norm{(\I + H_V(z))^{-1}} \leq \frac{4}{1 - 4\delta} \leq 8 \\
        \norm{D_W(z)^{-1}} &\leq \norm{ D_U(z)^{-1} } \norm{(\I + H_W(z))^{-1}} \leq \frac{4}{1 - 8\delta} \leq 8
    \end{align*}
    uniformly for $\abs{z+1} \leq 1/4$.

    The Schur complements
    \begin{align*}
        S_V(z) &= A_V(z) - B_V D_V(z)^{-1} C_V \\
        S_W(z) &= A_W(z) - B_W D_W(z)^{-1} C_W
    \end{align*}
    are therefore well-defined and
    \begin{align*} 
        \norm{ S_W(z) - S_V(z) } &\leq \norm{ (A_W(z) - A_V(z)) } + \norm{ B_W D_W(z)^{-1} C_W } + \norm{ B_V D_V(z)^{-1} C_V } \\
        &\leq 6\delta^2 + 32 \delta^2 + 8 \delta^2 = 46 \delta^2.
    \end{align*}
    for $\abs{z+1} \leq 1/4$.

    Now note that
    $$ \begin{bmatrix} A_V(z) && B_V \\ C_V && D_V(z)  \end{bmatrix} = \begin{bmatrix} \I && B_V D_V(z)^{-1} \\ 0 && \I  \end{bmatrix} \begin{bmatrix} S_V(z) && 0 \\ 0 && D_V(z)  \end{bmatrix} \begin{bmatrix} \I && 0 \\ D_V(z)^{-1} C_V && \I  \end{bmatrix}$$
    where the block triangular operators are invertible (flip the sign of the off-diagonal blocks). For $\abs{z+1} < \delta/8$ we have that $V - z$ is invertible with $\norm{(V - z)^{-1}} \leq 8/\delta$ due to the gap assumption on $V$. It follows that for such $z$ also $S_V(z)$ is invertible, and from the above we have
    $$ (V - z)^{-1} = \begin{bmatrix} S_V(z)^{-1} && - S_V(z)^{-1} B_V D_V(z)^{-1}  \\  - D_V(z)^{-1} C_V S_V(z)^{-1} && D_V(z)^{-1} + D_V(z)^{-1} C_V S_V(z)^{-1} B_V D_V(z)^{-1} \end{bmatrix} $$
    from which it follows that
    $$\norm{ S_V(z)^{-1} } \leq 8/\delta $$
    for $\abs{z+1} < \delta/8$ as well.

    Now
    $$ S_W(z) = S_V(z) \big( \I + S_V(z)^{-1}( S_W(z) - S_V(z) ) \big)$$
    and
    $$ \norm{ S_V(z)^{-1}( S_W(z) - S_V(z)) } \leq 8 \times 46 \delta = 368 \delta.$$
    It follows that with $\delta \leq 1/736$ the Neumann series for $S_W(z)^{-1}$ converges whenever $\abs{z+1} < \delta/8$. That is, the Schur complement $S_W(z)$ is invertible and therefore $W - z$ has inverse
    $$ (W - z)^{-1} = \begin{bmatrix} S_W(z)^{-1} && - S_W(z)^{-1} B_W D_W(z)^{-1}  \\  - D_W(z)^{-1} C_W S_W(z)^{-1} && D_W(z)^{-1} + D_W(z)^{-1} C_W S_W(z)^{-1} B_W D_W(z)^{-1} \end{bmatrix}.$$
    We conclude that $\abs{z+1} < \delta/8$ implies that $z \notin \sigma(W)$, as was to be shown.

    Let us finally show that the family $\{W_i - \I\}_{i \in \caI}$ is uniformly rapidly local and localized near $\{ \Sigma_i \}_{i \in \caI}$. Let $T_i = V_i U_i^*$ and recall that $\sigma(T_i)$ is contained in $\{ \abs{z-1} \leq 1/2 \}$. We have
    $$ T_i - \I = \big( (V_i - \I) - (U_i - \I)  \big) U_i^* $$
    and both $\{ V_i - \I \}_{i \in \caI}$ and $\{ U_i - \I \}_{i \in \caI}$ are uniformly rapidly local and localized near $\{\Sigma_i \}_{i \in \caI}$ while $\{ U_i^* \}_{i \in \caI}$ is uniformly rapidly local. It therefore follows from Lemma \ref{lem:*-algebras and ideals of rapidly local operators} that $\{ T_i - \I \}_{i \in \caI}$ is uniformly rapidly local and localized near $\{ \Sigma_i \}_{i \in \caI}$.

    We now want to show that the $K_i = -\iu \log T_i$ are uniformly rapidly local and localized near $\{ \Sigma_i \}$. To this end let $h : \T \to \C$ be a smooth function which agrees with $\log(z)$ on $\{ \abs{z-1} < 1/2\}$. Since $h(1) = 0$ we have
    $$ K_i = -\iu ( h(T_i) - h(\I)). $$
    Therefore $\{ K_i \}_{i \in \caI}$ is uniformly rapidly local and localized near $\{ \Sigma_i \}_{i \in \caI}$ by Proposition \ref{prop:rapid locality is preserved by smooth functions}.

    Proposition \ref{prop:rapid locality is preserved by smooth functions} also implies that $\{ F_i = s(U_i) \}_{i \in \caI}$ is uniformly rapidly local and therefore $\{ F_i K_i F_i \}_{i \in \caI}$ is uniformly rapidly local and localised near $\{ \Sigma_i \}_{i \in \caI}$ by another application of Lemma \ref{lem:*-algebras and ideals of rapidly local operators}.

    We now show that $\{ \ed^{\iu F_iK_iF_i} - \I \}_{i \in \caI}$ is uniformly rapidly local and localized near $\{ \Sigma_i \}_{i \in \caI}$. To this end, let $k : \R \to \C$ be a compactly supported smooth function which agrees with $\ed^{\iu z}$ on $[ -2\delta, 2\delta ] \supset \sigma(F_i K_i F_i)$. Since $k(0) = 1$ we have
    $$ \ed^{\iu F_i K_i F_i} - \I = k( F_i K_i F_i ) - k(0)$$
    so $\{ \ed^{\iu F_iK_iF_i} - \I \}_{i \in \caI}$ is uniformly rapidly local and localized near $\{ \Sigma_i \}_{i \in \caI}$ by Proposition \ref{prop:rapid locality is preserved by smooth functions}. Finally,
    $$ W_i - \I = (W_i - U_i) + (U_i - \I) = ( \ed^{\iu F_i K_i F_i} - \I ) U_i +   (U_i - \I)$$
    is uniformly rapidly local and localized near $\{\Sigma_i\}_{i \in \caI}$ by Proposition \ref{prop:rapid locality is preserved by smooth functions} and Lemma \ref{lem:*-algebras and ideals of rapidly local operators}.

    For the last claims, note that $Q_i = \chi_{\{ \abs{z-1} \leq 1/2\}}(U_i)$, so $[U_i, Q_i] = 0$ is immediate. Since $s\chi_{\{ \abs{z-1} \leq 1/2\}} = 0$ we have $F_i Q_i = Q_i F_i = 0$ and therefore $\ed^{\iu F_i K_i F_i} Q_i = Q_i \ed^{\iu F_i K_i F_i} = Q_i$. It follows that also $[W_i, Q_i] = 0$, and that $U_i Q_i = W_i Q_i$ as claimed. Finally, if $U_i|_{\caK} = \I_{\caK}$ then $F_i|_{\caK} = 0$ and therefore $W_i|_{\caK} = \I_{\caK}$.
\end{proof}

\subsection{Reducing half-space projectors}

Recall $E_a = P \Pi_a P$ and write $\Hyper_a = \Hyper_a^{(2)} = \{ x \in \R^2 \, : \, x_2 = a \}$.

\begin{proposition} \label{prop:reducing half-space projectors}
    Let $P$ be a rapidly local projector on $\H{2}{n}$ with $\Ch(P) = 0$. Then there is a uniformly rapidly local family of projectors $\{ F_a \}_{a \in \R}$ such that $F_a \leq P$ for all $a \in \R$, and the family $\{ F_a - E_a \}_{a \in \R}$ is uniformly rapidly local and localized near $\{ \Hyper_a \}_{a \in \R}$.
\end{proposition}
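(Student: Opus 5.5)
The plan follows the strategy sketched in the introduction. By Proposition \ref{prop:bulk-boundary correspondence} and $\Ch(P)=0$, every boundary unitary $U_a = \ed^{2\pi\iu E_a}$ has zero flow. Fix a small $\delta < 1/736$. Lemma \ref{lem:gapping out boundary unitaries}, applied with $\ep=\delta$, produces unitaries $V_a$ with $\norm{U_a - V_a}\le \delta$ and $\dist(-1,\sigma(V_a))\ge\delta/4$. Moreover $V_a$ has finite propagation and $V_a-\I$ is supported on the strip $\sfS_a$, so $\{V_a - \I\}_{a\in\R}$ is uniformly rapidly local and localized near $\{\Hyper_a\}$.

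By Proposition \ref{prop:boundary unitaries act near the boundary}, the same localization holds for $\{U_a-\I\}$. We can therefore apply Lemma \ref{lem:spectral gluing} with $\Sigma_a = \Hyper_a$. This gives unitaries $W_a$ with the following properties:
\begin{itemize}
\item $W_a - \I$ is uniformly rapidly local and localized near $\Hyper_a$;
\item $\dist(-1,\sigma(W_a))\ge\delta/8$;
\item $W_a = U_a$ on $\Ran Q_a$, where $Q_a=\chi_{\{|z-1|\le 1/2\}}(U_a)$ commutes with both;
\item $W_a$ acts as $\I$ on $\Ran P^\perp$, since $U_a|_{\Ran P^\perp} = \I$ there.
\end{itemize}

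Next I "invert the exponential". Since $W_a$ has a gap around $-1$, the branch $G_a := \frac{1}{2\pi\iu}\log W_a$, with $\log$ cut along $(-\infty,0]$ and values chosen in $[0,1]$, is well-defined by the continuous functional calculus. Its spectrum lies in $[0,1]$ and avoids $(1/2-c\delta, 1/2+c\delta)$. I then set $F_a := \chi_{(1/2,1]}(G_a) = \chi_{\{\Im z > 0 \text{ or } \dots\}}$; concretely, $F_a$ is a smooth function of $W_a$ (using the gap) that equals the spectral projection of $W_a$ onto the arc $\{\ed^{2\pi\iu t}: t\in(1/2,1)\}$, plus the eigenvalue-$1$ part that sits in $\Ran P$.

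This last choice is where care is needed, because the eigenvalue $1$ of $U_a$ occurs on both $\Ran E_a$ (eigenvalue $1$ of $E_a$) and $\Ran(P-E_a)$ (eigenvalue $0$), and $W_a$ cannot tell them apart. The remedy is to define $F_a$ piecewise along the decomposition $\Ran Q_a \oplus \Ran Q_a^\perp$. Set
$$ F_a := Q_a\, \chi_{[1/2,1]}(E_a) + \chi_{\text{arc}}(W_a)\, Q_a^\perp, $$
where $\text{arc}$ is the upper half of the circle from $-1$ to $1$ (the image of $t\in(1/2,1)$). Since $Q_a$ commutes with $E_a$, $U_a$ and $W_a$, and $W_aQ_a^\perp$ preserves $\Ran Q_a^\perp$, this $F_a$ is a projection.

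The main obstacle is to show that $F_a$ is rapidly local and that $F_a-E_a$ is localized near $\Hyper_a$. I would handle this by writing both pieces as functions of a single operator. Choose a smooth $g:\T\to[0,1]$ equal to $1$ on the arc $\{\ed^{2\pi\iu t}: t\in[1/2+c\delta, 1-1/8]\}$, equal to $0$ on the lower arc away from $1$, and so on. Then the $Q_a^\perp$-piece equals $g(W_a)Q_a^\perp$, because the spectral window near $1$ is excluded. On $\Ran Q_a$, the spectrum of $E_a$ lies in $[0,1/8+]\cup[7/8-,1]$, so $\chi_{[1/2,1]}(E_a)=f(E_a)$ for a smooth $f$ with $f=0$ near $0$ and $f=1$ near $1$.

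It remains to compare with $E_a$. On $\Ran Q_a^\perp$ we have $U_a = W_a$ up to the gluing, and $\chi_{\text{arc}}(U_a)Q_a^\perp = \tilde f(E_a)Q_a^\perp$ for a smooth $\tilde f$ equal to $f$ away from $1/2$. Hence
$$ F_a = f(E_a) + \big(g(W_a)-g(U_a)\big)Q_a^\perp, $$
with $f$ chosen to agree with $g(\ed^{2\pi\iu\,\cdot})$ off $[0,1/8]\cup[7/8,1]$. To make this exact, one can choose a single smooth $f$ on $[0,1]$ with $f(x) = g(\ed^{2\pi\iu x})$ for $x\in[1/8,7/8]$, $f=0$ near $0$ and $f=1$ near $1$. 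Then $F_a = f(E_a) + (g(W_a)-g(U_a))(\I-Q_a)$, and the remaining pieces are controlled as follows:
\begin{itemize}
\item $g(W_a)-g(U_a)$ is localized near $\Hyper_a$ by Proposition \ref{prop:rapid locality is preserved by smooth functions}, since $W_a-U_a$ is.
\item $f(E_a)-E_a = (E_a^2-E_a)k(E_a)$ is localized near $\Hyper_a$ by Lemma \ref{lem:E is a projector away from the cut}.
\item $Q_a$ is a smooth function of $U_a$, hence rapidly local.
\end{itemize}
Lemma \ref{lem:*-algebras and ideals of rapidly local operators} then gives the localization of $F_a-E_a$, uniformly in $a$. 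Finally, $F_a \le P$ holds because $F_a$ vanishes on $\Ran P^\perp$: there $E_a=0$ and $W_a=U_a=\I$, so $Q_a=\I$ and $f(0)=0$.
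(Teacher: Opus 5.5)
\textbf{The gap: $Q_a$ is not rapidly local.} Your claim that $Q_a=\chi_{\{\abs{z-1}\le 1/2\}}(U_a)$ is a smooth function of $U_a$, and hence rapidly local, is false. $Q_a$ is a spectral projection for a window whose boundary circle $\abs{z-1}=1/2$ need not lie in a gap of $\sigma(U_a)$. Indeed, $\sigma(E_a)$ can fill all of $[0,1]$ even when $\Ch(P)=0$; this is exactly the obstruction discussed in the introduction. So $Q_a$ need not have any off-diagonal decay, and Lemma~\ref{lem:*-algebras and ideals of rapidly local operators} cannot be applied to $(g(W_a)-g(U_a))(\I-Q_a)$ as you do.

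The factor $\I-Q_a$ is, however, superfluous. You already noted that $Q_a$ reduces both $U_a$ and $W_a$ and that $W_aQ_a=U_aQ_a$. Hence the restrictions of $U_a$ and $W_a$ to $\Ran Q_a$ coincide, so $g(W_a)Q_a=g(U_a)Q_a$. Therefore
\[ F_a = f(E_a) + g(W_a) - g(U_a), \qquad F_a - E_a = (E_a^2-E_a)k(E_a) + \big(g(W_a)-g(U_a)\big). \]
This involves only smooth functional calculus, and both terms are localized near $\Hyper_a$ by the results you cite. The rest of your argument (that $F_a$ is a projection and $F_a\le P$) uses $Q_a$ only algebraically and is fine.

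\textbf{Spectral windows.} These need adjusting. On $\Ran Q_a$ one only knows $\sigma(E_a)\subset[0,x_0]\cup[1-x_0,1]$ with $x_0=\pi^{-1}\arcsin(1/4)\approx 0.08<1/8$. Also, $\sigma(W_a|_{\Ran Q_a^\perp})\subset\{\ed^{2\pi\iu t}: t\in[t_0,1-t_0]\}$ with $t_0=\pi^{-1}\arcsin(1/4-\delta)<x_0$. So take $g(\ed^{2\pi\iu t})=0$ on $[t_0,1/2-c\delta]$ and $g(\ed^{2\pi\iu t})=1$ on $[1/2+c\delta,1-t_0]$. Then require $f=g(\ed^{2\pi\iu\,\cdot})$ on $[x_0,1-x_0]$, $f=0$ on $[0,x_0]$ and $f=1$ on $[1-x_0,1]$; these are compatible because $t_0<x_0$. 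Also, $\{\ed^{2\pi\iu t}:t\in(1/2,1)\}$ is the lower half-circle; your parametrization, not the word ``upper'', is what is needed.

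\textbf{Comparison with the paper.} With these repairs your proof is a valid, slightly more direct variant of the paper's. The paper sets $G_a=E_a-h(U_a)+h(W_a)$ and uses the $Q_a$-splitting only to show $\dist(1/2,\sigma(G_a))\ge\delta/(16\pi)$. It then takes $F_a=l(G_a)$ and controls $F_a-E_a=(G_a^2-G_a)m(G_a)+(G_a-E_a)$. One checks that $l(G_a)$ is exactly your projection $Q_a\chi_{[1/2,1]}(E_a)+\chi_{\mathrm{arc}}(W_a)Q_a^\perp$. Your route writes this projection down explicitly and avoids the second functional-calculus step and the localization of $G_a^2-G_a$. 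The price is that $f$ and $g$ must be matched on the spectral windows, whereas the paper needs only the cruder gap statement.
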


\begin{proof}
    Let $0 < \delta < 1/736$. It follows from Lemma \ref{lem:gapping out boundary unitaries} that there is $R < \infty$ and a family of unitaries $\{ V_a \}_{a \in \R}$ with uniform propagation bound $R$ such that $\{ V_a - \I\}_{a \in \R}$ is uniformly rapidly local and localized near $\Hyper_a$ and
    $$ \norm{U_a - V_a} \leq \delta, \quad \dist(-1, \sigma(V_a)) \geq \delta/4 $$
    for all $a \in \R$, where $\{ U_a = U_a(P) \}_{a \in \R}$ is the family of boundary unitaries associated to $P$.

    We can now apply Lemma \ref{lem:spectral gluing} to the families $\{ U_a \}$ and $\{ V_a \}$ to obtain a family of unitaries $\{ W_a \}_{a \in \R}$ such that
    $$ \norm{ U_a - W_a } \leq 2\delta, \quad \dist(-1, \sigma(W_a)) \geq \delta/8 $$
    and $\{ W_a - \I \}_{a \in \R}$ is uniformly rapidly local and localized near $\{ \Hyper_a \}_{a \in \R}$. Moreover, writing $Q_a = \chi_{ \{\abs{z-1} \leq 1/2 \} }(U_a)$ we have
    \begin{equation} \label{eq:U_a and W_a agree on a spectral window around 1}
        W_a Q_a = U_a Q_a \quad  \text{and} \quad [U_a, Q_a] = [W_a, Q_a] = 0 \quad \text{for all} \,\, a \in \R,
    \end{equation}
    and since $U_a P^{\perp} = P^{\perp}$ we also have $W_a P^{\perp} = P^{\perp}$ by the last point of Lemma \ref{lem:spectral gluing}. In particular, $[W_a, P] = 0$.

    Let $h :\T \to [0, 1]$ be a smooth function such that $h(\ed^{2 \pi \iu x}) = x$ for all $x \in (0, 1)$ for which $e^{2\pi \iu x} \in \{\abs{z-1} \geq 1/4\}$ and define
    $$ G_a := E_a - h(U_a) + h(W_a). $$

     Since $U_a = \ed^{2\pi \iu E_a}$ we have in addition to Eq. \eqref{eq:U_a and W_a agree on a spectral window around 1} that $[E_a, Q_a] = 0$. Together with $W_a Q_a = U_a Q_a$ this implies
     \begin{equation} \label{eq:G on Q}
        [G_a, Q_a] = 0 \quad \text{and} \quad G_a Q_a = E_a Q_a.
     \end{equation}
    
    We also have $\dist( 1, \sigma( U_a Q_a^{\perp} ) ) \geq 1/2$ and since $\norm{U_a - W_a} \leq 2\delta < 1/4$ we have that $\dist(1, \sigma( W_a Q_a^{\perp} )) \geq 1/4$. It follows that $E_a Q_{a}^{\perp} = h(U_a) Q_a^{\perp}$ and therefore
    \begin{equation} \label{eq:G on Q perp}
        G_a Q_a^{\perp} = h(W_a) Q_a^{\perp}.
    \end{equation}

    It follows from all this that $0 \leq G_a \leq \I$ and that $G_a$ inherits the gap from $W_a$. Indeed, by Eq. \eqref{eq:G on Q} we have $Q_a G_a Q_a = Q_a E_a Q_a$, and $Q_a$ is a spectral projector of $E_a$ for $\{ x \in [0, 1] \, : \, \abs{\ed^{2 \pi \iu x} - 1} \leq 1/2 \}$. It follows that the spectrum of $Q_a G_a Q_a$  has a distance at least $\pi^{-1} \arcsin(1/4)$ from the point 1/2. Similarly using Eq. \eqref{eq:G on Q perp} we find that $Q_a^{\perp} G_a Q_a^{\perp} = Q_a^{\perp} h(W_a) Q_a^{\perp}$ has spectrum whose distance from $1/2$ is at least $\pi^{-1} \arcsin( \delta/16 ) \geq \delta/ (16 \pi)$. We conclude that
    $$ \dist (1/2, \sigma(G_a) ) \geq \frac{\delta}{16\pi}.  $$

    Now, $\{ G_a -  E_a =  h(W_a) - h(U_a) \}_{a \in \R}$ is uniformly rapidly local and localized near $\{ \Hyper_a \}_{a \in \R}$ by Proposition \ref{prop:rapid locality is preserved by smooth functions}. Since
    $$ G_a^2 - G_a = (G_a - E_a)^2 + (G_a - E_a) E_a + E_a (G_a - E_a) + ( E_a^2 - E_a ) - (G_a - E_a) $$
    an application of Lemma \ref{lem:E is a projector away from the cut} and Lemma \ref{lem:*-algebras and ideals of rapidly local operators} yields that $\{ G_a^2 - G_a \}_{a \in \R}$ is uniformly rapidly local and localized near $\{ \Hyper_a \}_{a \in \R}$.

    Let $l : \R \to [0, 1]$ be a compactly supported smooth function which equals zero on $[0, 1/2-\delta/(16\pi)]$ and equals one on $[1/2 + \delta/(16 \pi), 1]$. Setting
    $$ F_a := l(G_a) $$
    we get that $\{ F_a \}_{a \in \R}$ is a uniformly rapidly local family of projectors. Since the smooth function $l(x) - x$ vanishes at $0$ and $1$ and is linear near those points there exists a compactly supported smooth function $m : \R \to \R$ such that
    $$ l(x) - x = (x^2 - x) m(x) $$
    for all $x \in [0, 1]$. It follows that
    $$ F_a - E_a = (G_a^2 - G_a) m(G_a) + (G_a - E_a)$$
    so $\{ F_a - E_a \}_{a \in \R}$ is uniformly rapidly local and localized near $\{ \Hyper_a \}_{a \in \R}$ as a consequence of Proposition \ref{prop:rapid locality is preserved by smooth functions} and Lemma \ref{lem:*-algebras and ideals of rapidly local operators}.

    Finally, the operators $U_a, W_a$ act as identity on $\Ran P^{\perp}$. It follows that $h(U_a) - h(W_a)$ vanishes on $\Ran P^{\perp}$ and therefore $G_a$ also acts as zero on $\Ran P^{\perp}$. Since $l(0) = 0$ the same is true for $F_a = l(G_a)$ and we conclude that $F_a \leq P$.
\end{proof}

\subsection{Exact strip decomposition}

\begin{lemma} \label{lem:estimates on the F_a}
    Under the hypotheses of Proposition \ref{prop:reducing half-space projectors}, the family of reducing half-space projectors $\{ F_a \}_{a \in \R}$ provided by that proposition satisfies the following:
    \begin{enumerate}
        \item $\{ F_a \}_{a \in \R}$ is uniformly rapidly local and localized near the upper half-planes $\{ \Upper_a \}_{a \in \R}$.
        \item $\{ P - F_a \}_{a \in \R}$ is uniformly rapidly local and localized near the lower half-planes $\{ \Lower_a \}_{a \in \R}$.
    \end{enumerate}
    Moreover, writing $C_{a, b} = (P - F_a) F_b$ and $D_{a, b} = F_a - F_b$ for any $a < b$ we have
    \begin{enumerate}
        \setcounter{enumi}{2}
        \item $ \norm{C_{a, b}} \leq D_N \, (b - a)^{-N} $ for all $N \in \N$ and constants $\{ D_N \}_{N \in N}$ which do not depend on $a$ and $b$.
        \item $\{ D_{a, b} \}_{a < b}$ is uniformly rapidly local and localized near the strips $\{ \sfS_{a, b} = \Upper_a \setminus \Upper_b \}_{a < b}$.
    \end{enumerate}
\end{lemma}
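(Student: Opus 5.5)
The plan is to write $F_a = E_a + (F_a - E_a)$, where the second term is localized near $\Hyper_a$ by Proposition \ref{prop:reducing half-space projectors}. Everything then reduces to localization properties of $E_a = P\Pi_a P$ and $P - E_a = P\Pi_a^{\perp}P$, uniformly in $a$.

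For item 1, I first show that $\{E_a\}_{a\in\R}$ is uniformly rapidly local and localized near $\{\Upper_a\}$. Note that $\d(\Upper_a, x) = \max\{0, a - x_2\}$. The kernel is $E_a(x,y) = \sum_{z \in \Upper_a} P(x,z) P(z,y)$, and for $z \in \Upper_a$ we have $\d(\Upper_a, x) \le \abs{x - z}$. Hence $\jb{\d(\Upper_a,x)+\d(\Upper_a,y)} \le \jb{x-z}\jb{z-y}$, and together with $\jb{x-y}\le \jb{x-z}\jb{z-y}$ this gives $\norm{E_a}_{N,\Upper_a} \le \norm{P}_{2N}^2$. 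This is the same computation as item \ref{eq:Schur norm property for ideal} of Lemma \ref{lem:*-algebras and ideals of rapidly local operators}, with the middle factor $\Pi_a$ trivially localized near $\Upper_a$ with Schur norm $1$. Since $\Hyper_a \subset \Upper_a$, localization near $\Hyper_a$ implies localization near $\Upper_a$ (distances only shrink). Adding the two pieces, $F_a$ is localized near $\Upper_a$. Item 2 is the same argument: $P - F_a = P\Pi_a^{\perp} P - (F_a - E_a)$, and $\Pi_a^\perp = \I_{\{x_2 < a\}}$ is supported in $\Lower_a$.

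For item 3, I write $C_{a,b} = (P-F_a)F_b = (P-F_a)\,\I_{\{x_2 \le (a+b)/2\}}\,F_b + (P-F_a)\,\I_{\{x_2 > (a+b)/2\}}\,F_b$. In the first term, $\I_{\{x_2\le (a+b)/2\}}F_b$ has kernel bounded by $C_N\jb{\d(\Upper_b,x)}^{-N}$ with $\d(\Upper_b,x)\ge (b-a)/2$ on its support. A Schur test in the style of Lemma \ref{lem:restricting close to localization region} then gives norm $\le D_N (b-a)^{-N}$, and $\norm{P - F_a}\le 1$. The second term is handled symmetrically using item 2.

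For item 4, I combine localization of $F_a$ near $\Upper_a$ with localization of $F_b - P$ near $\Lower_b$, since $D_{a,b} = F_a - P + (P - F_b)$ in either order of grouping. The key step is a pointwise estimate: if $x,y$ are such that $\d(\sfS_{a,b},x)+\d(\sfS_{a,b},y) = t$, then either both points are far below $\Upper_a$, or both are far above $\Lower_b$, or they are separated vertically by about $t$. In the first case I use $D_{a,b} = F_a - F_b$, where both terms are small by item 1, since $\Upper_b \subset \Upper_a$. In the second case I write $D_{a,b} = (P-F_b)-(P-F_a)$ and use item 2. In the third case the bound $\jb{x-y}^{-2N}$ suffices. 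Concretely, I bound $\jb{\d(\sfS,x)+\d(\sfS,y)}$ by the product of $\jb{x-y}$ with $\jb{\d(\Upper_a,x)+\d(\Upper_a,y)}$ and with $\jb{\d(\Lower_b,x)+\d(\Lower_b,y)}$. This holds because $\d(\sfS_{a,b},x)\le \d(\Upper_a,x)+\d(\Lower_b,x)$ for horizontal strips, and at most one of these is nonzero when $a<b$. I then pick the appropriate representation of $D_{a,b}$ depending on whether $x_2 \le b$, so that it is the $\Upper_a$ distance that matters, or $x_2 \ge a$, so that the $\Lower_b$ distance matters. I expect this case bookkeeping, done uniformly in $a<b$, to be the main (though routine) obstacle.
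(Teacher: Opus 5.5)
Your proof is correct. Items 1--3 follow the paper's argument in substance. For item 3 the paper truncates $P-F_a$ to $\Lower_{a+L/3}$ and $F_b$ to $\Upper_{b-L/3}$ and uses that the truncations multiply to zero; your midpoint split is the same idea.

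For item 4 you take a different route. The paper needs no case analysis. It uses the identity $E_a - E_b = P(\Pi_a-\Pi_b)P = P\I_{\sfS_{a,b}}P$ to write $D_{a,b} = (F_a - E_a) - (F_b - E_b) + P\I_{\sfS_{a,b}}P$. The first two terms are localized near $\Hyper_a$ and $\Hyper_b$, hence near $\sfS_{a,b}$, and the last is localized near $\sfS_{a,b}$ by the ideal property, so item 4 is essentially immediate.

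You instead derive item 4 from items 1 and 2 alone, together with the nesting $\Upper_b \subset \Upper_a$ and $\Lower_a \subset \Lower_b$. This does close, but the three-factor product bound you state is not by itself what makes it work. In each case one half-plane factor must be absorbed into $\jb{x-y}$:
\begin{itemize}
\item If $x_2 \le b$, then $\d(\Lower_b,x)=0$, so $\d(\Lower_b,y)\le \abs{x-y}$. Hence $\jb{\d(\sfS_{a,b},x)+\d(\sfS_{a,b},y)} \le \jb{x-y}\,\jb{\d(\Upper_a,x)+\d(\Upper_a,y)}$. Combined with the order-$2N$ kernel bounds for $F_a$ and $F_b$ (both controlled by $\Upper_a$-distances since $\Upper_b\subset\Upper_a$), this gives the estimate.
\item If $x_2 \ge a$, argue symmetrically with $(P-F_b)-(P-F_a)$ and $\Lower_b$-distances.
\end{itemize}

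The trade-off: your argument applies to any family of subprojections of $P$ satisfying items 1 and 2, however they were constructed. The paper's is shorter because it uses directly that $F_a$ differs from $E_a$ only near $\Hyper_a$.
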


\begin{proof}
    For item 1, write
    $$ F_a = (F_a - E_a) + E_a $$
    and note that by Proposition \ref{prop:reducing half-space projectors} we have that $\{ F_a - E_a \}_{a \in \R}$ is uniformly rapidly local and localized near $\{ \Hyper_a \}_{a \in \R}$ and therefore also near $\{ \Upper_a \}_{a \in \R}$, and that $\{E_a = P \I_{\Upper_a} P \}_{a \in \R}$ is uniformly rapidly local and localized near $\{ \Upper_a \}_{a \in \R}$ by Lemma \ref{lem:*-algebras and ideals of rapidly local operators}.

    Item 2 follows similarly by writing
    $$ P - F_a = (E_a - F_a) + (P - E_a) = (E_a - F_a) + P \I_{\Upper^c_a} P $$
    and noting that $\Upper_a^c \subset \Lower_a$.
    
    For item 3, let $L = b - a$ and write $G_a = P - F_a$ and $\overline F_b = \I_{\Upper_{b - L/3}} F_b \I_{\Upper_{b - L/3}}$ and $\overline G_a = \I_{\Lower_{a + L/3}} G_a \I_{\Lower_{a + L/3}}$. It follows from items 1 \& 2 and Lemma \ref{lem:restricting close to localization region} that
    $$ \norm{ G_a - \overline G_a }, \norm{ F_b - \overline F_b } \leq D_N \, L^{-N} $$
    for some constants $\{ D_N \}_{N \in \N}$ which do not depend on $a$ or $b$. Since also $\overline G_a \overline F_b = 0$ it follows that
    $$  \norm{ C_{a, b} } = \norm{ (G_a - \overline G_a)(F_b - \overline F_b) + (G_a - \overline G_a) \overline F_b + \overline G_a (F_b - \overline F_b) + \overline G_a \overline F_b } \leq 3D_N L^{-N}.$$

    For item 4, write
    $$ D_{a, b} = (F_a - E_a) - (F_b - E_b) + (E_a - E_b) = (F_a - E_a) - (F_b - E_b) + P \I_{\sfS_{a, b}} P$$
    and note that by Proposition \ref{prop:reducing half-space projectors} we have that $\{ F_a - E_a \}_{a \in \R}$ is uniformly rapidly local and localized near $\{ \Hyper_a \}_{a \in \R}$, and therefore near $\{\sfS_{a, b} \}$, and likewise for $\{ F_b - E_b \}$. Furthermore, $P \I_{\sfS_{a, b}} P$ is uniformly rapidly local and localized near $\{ \sfS_{a, b} \}$ by Lemma \ref{lem:*-algebras and ideals of rapidly local operators}.
\end{proof}

\begin{lemma} \label{lem:strip projectors}
    Let $P$ be a rapidly local projector on $\H{2}{n}$ with $\Ch(P) = 0$. There is $L_0 < \infty$ such that for all $L \geq L_0$ there is a family of projections $\{Q_j\}_{j \in \Z}$ which is uniformly rapidly local and localized near strips $\{ \sfS_j = \Upper_{jL} \setminus \Upper_{(j+1)L} \}_{j \in \Z}$ and such that there are constants $\{ C_N \}_{N \in \N}$ with
    $$ Q_j \leq P, \quad \norm{ Q_j Q_k } \leq C_N L^{-N} \jb{j - k}^{-N} $$
    for all $N \in \N$ and all $j \neq k$ in $\Z$. Moreover, the $\{ Q_j \}$ sum strongly to a rapidly local operator $Q = \sum_j Q_j$, and
    $$ \norm{Q - P} \leq C_N L^{-N} \quad \text{for all } \,\, N \in \N. $$
\end{lemma}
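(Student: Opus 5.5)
The plan is to start from the reducing half-space projectors $\{F_a\}_{a \in \R}$ of Proposition \ref{prop:reducing half-space projectors} and set $F_j := F_{jL}$ and $D_j := F_j - F_{j+1} = D_{jL,(j+1)L}$. By item 4 of Lemma \ref{lem:estimates on the F_a}, the family $\{D_j\}$ is uniformly rapidly local and localized near $\sfS_j$. These $D_j$ telescope: $\sum_{j=-M}^{M} D_j = F_{-ML} - F_{(M+1)L}$, and by items 1 and 2 of that lemma $F_{(M+1)L} \to 0$ and $P - F_{-ML} \to 0$ strongly. Hence $\sum_j D_j = P$ strongly, and also $D_j \leq P$ in the sense that $D_j = PD_jP$, since $F_a \leq P$.

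The first step is to show that each $D_j$ is an approximate projector. We have $D_j^2 - D_j = F_jF_{j+1} + F_{j+1}F_j - 2F_{j+1}$. Writing $F_jF_{j+1} - F_{j+1} = -(P - F_j)F_{j+1} = -C_{jL,(j+1)L}$, item 3 of Lemma \ref{lem:estimates on the F_a} gives $\norm{D_j^2 - D_j} \leq C_N L^{-N}$. Similarly, for $j<k$ I expand $D_jD_k$ into terms of the form $F_aF_b - F_b = -C_{a,b}$ with $b-a \geq (k-j-1)L$. For $k \geq j+2$ this yields $\norm{D_jD_k} \lesssim C_N (L\jb{j-k})^{-N}$. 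For $k=j+1$ the four terms combine to $-C_{jL,(j+1)L} + C_{jL,(j+2)L} + C_{(j+1)L,(j+1)L}$-type expressions, and the middle cut $F_{(j+1)L}$ appears twice. In that case I would use $F_{j+1}^2 = F_{j+1}$ directly: $D_jD_{j+1} = F_jF_{j+1} - F_jF_{j+2} - F_{j+1} + F_{j+1}F_{j+2}$, and each pair $F_aF_b - F_b$ is controlled by item 3 with gap at least $L$.

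Next, for $L \geq L_0$ large, $\sigma(D_j) \subset [-\tfrac14,\tfrac14] \cup [\tfrac34,\tfrac54]$, so I define $Q_j := l(D_j)$ with $l$ smooth, equal to $0$ near $0$ and to $1$ near $1$. By Proposition \ref{prop:rapid locality is preserved by smooth functions} together with the factorization trick $l(x)-x = (x^2-x)m(x)$ used in the proof of Proposition \ref{prop:reducing half-space projectors}, we get $Q_j - D_j = (D_j^2-D_j)m(D_j)$. This gives $\norm{Q_j - D_j} \leq C_N L^{-N}$, and $Q_j$ is uniformly rapidly local and localized near $\sfS_j$ because $D_j$ is, with $\LOC_{\sfS_j}$ an ideal. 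Uniformity in $j$ and $L$ needs care: the localization constants of $D_j$ near $\sfS_j$ are uniform in $a<b$, which covers all $L$. Since $l(0)=0$ and $D_j = PD_jP$, we have $Q_j = PQ_jP$, and thus $Q_j \leq P$. Writing $Q_j = D_j + (D_j^2-D_j)m(D_j)$ then gives $\norm{Q_jQ_k} \leq C_N L^{-N}\jb{j-k}^{-N}$, since every term contains a factor $D_jD_k$ or $D_j^2-D_j$.

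The main obstacle is to combine the decay in $L$ with the decay in $\jb{j-k}$ in the last estimate. For the cross terms with $D_j^2-D_j$ I would use the strip localization: $D_j^2 - D_j$ and $m(D_j)$ are localized near $\sfS_j$, and Lemma \ref{lem:distant regions}-type estimates for operators localized near distant strips give decay $(L|j-k|)^{-N}$. For $|j-k| \leq 1$ the bound $L^{-N}$ already suffices.

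Finally, $Q = \sum_j Q_j$ converges strongly, and $Q - P = \sum_j (Q_j - D_j)$. Each summand has norm at most $C_N L^{-N}$ and is localized near $\sfS_j$, so the block Schur test (Lemma \ref{lem:Schur test}) applied with respect to the strip decomposition gives $\norm{Q - P} \leq C_N L^{-N}$. The same Schur-type argument with weights, combined with uniform localization near disjoint strips, shows that $Q$ is rapidly local.
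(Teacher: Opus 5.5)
Your proof is correct and follows the paper's argument. It uses the same $D_j = F_{jL}-F_{(j+1)L}$, the same expansions of $D_j^2-D_j$ and $D_jD_k$ into the $C_{a,b}$ of Lemma \ref{lem:estimates on the F_a}, the same functional calculus to produce $Q_j$, and the same block Schur test for $Q-P$.

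The only detour is in bounding $\norm{Q_jQ_k}$, and it is not needed. Functions of $D_j$ commute with $D_j$, so every term of your expansion already contains the factor $D_jD_k$; for example, $(D_j^2-D_j)m(D_j)D_k = m(D_j)(D_j-\I)D_jD_k$. This is how the paper does it, via $Q_jQ_k = f(D_j)D_jD_kf(D_k)$. Your strip-localization argument for the cross terms is valid, but the direct bound on $\norm{D_jD_k}$ already suffices.
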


\begin{proof}
    Let $\{F_a\}_{a \in \R}$ be the approximate half-plane projectors provided by Proposition \ref{prop:reducing half-space projectors} and use the notation $C_{a, b} = (P - F_a) F_b$ and $D_{a, b} = F_a - F_b$ for any $a < b$ from Lemma \ref{lem:estimates on the F_a}.

    We have
    \begin{align*}
        D_{a, b}^2 - D_{a, b} &= (F_a - F_b)^2 - (F_a - F_b) \\
        &= (\I - F_a) F_b + F_b(\I - F_a) \\
        &= C_{a, b} + C_{a, b}^*,
    \end{align*}
    where we used $F_b \leq P$ to get the last line. It follows from item 3 of Lemma \ref{lem:estimates on the F_a} that there are constants $\{ C'_N \}_{N \in \N}$ such that
    $$ \sup_{\lambda \in \sigma\big( D_{jL, (j+1)L} \big)} \dist \left( \{0, 1\}, \lambda  \right) \leq C'_N L^{-N}$$
    for all $L > 0$ and all $N \in \N$. In particular, there is $L_0 < \infty$ such that for all $L \geq L_0$ and all $j \in \Z$, the spectrum of $D_{jL, (j+1)L}$ is contained in $[-1/4, 1/4] \cup [3/4, 5/4]$. We can therefore take a compactly supported smooth function $h : \R \to [0, 1]$ which equals zero on $[-1/4, 1/4]$ and equals one on $[3/4, 5/4]$ and define projectors
    $$ Q_j := h(D_{jL, (j+1)L}). $$
    By item 4 of Lemma \ref{lem:estimates on the F_a} and Proposition \ref{prop:rapid locality is preserved by smooth functions} the family $\{Q_j\}_{j \in \Z}$ is uniformly rapidly local and localized near $\{ \sfS_j \}_{j \in \Z}$. Since $D_{jL, (j+1)L} = PD_{jL, (j+1)L}P$ and $h(0) = 0$, we also have $Q_j \leq P$ for all $j \in \Z$. Moreover,
    \begin{equation} \label{eq:norm estimate for Q_j}
        \norm{ Q_j - D_{jL, (j+1) L} } \leq C'_N L^{-N}
    \end{equation}
    for all $N \in \N$.

    Now note that for any $a < b \leq c < d$ we have $F_a F_b = F_b - C_{a, b}$ and therefore
    \begin{align*}
        D_{a, b} D_{c, d} &= F_a F_c - F_a F_d - F_b F_c + F_b F_d \\
        &= F_c - C_{a, c} - F_d + C_{a, d} - F_c + C_{b, c} + F_d - C_{b, d} \\
        &=  - C_{a, c} + C_{a, d} + C_{b, c} - C_{b, d}
    \end{align*}
    where $C_{b, b} := 0$ for the case where $b = c$. Now item 3 of Lemma \ref{lem:estimates on the F_a} implies that
    $$ \norm{D_{jL, (j+1)L} D_{kL, (k+1)L} } \leq 4 C''_N \, L^{-N} \jb{j - k}^{-N} \quad \text{for all} \,\, j \neq k. $$

    Let $f : \R \to \R$ be a compactly supported smooth function such that $h(x) = x f(x)$ for all $x \in [-1/4, 1/4] \cup [3/4, 5/4]$ (\ie $f(x) = 0$ for $x \in [-1/4, 1/4]$ and $f(x) = x^{-1}$ for $x \in [3/4, 5/4]$). Then $Q_j = D_{jL, (j+1) L} f( D_{jL, (j+1)L} ) $ and
    $$ Q_j Q_k = Q_j^* Q_k = f( D_{jL, (j+1)L} ) D_{jL, (j+1)L} D_{kL, (k+1)L} f( D_{kL, (k+1)L} )$$
    satisfies the same norm bound as $D_{jL, (j+1)L} D_{kL, (k+1)L}$ up to a constant factor.

    Finally, write $D_j = D_{jL, (j+1)L}$ and note that the sum
    $$ \sum_{j = -M}^M D_j = F_{-ML} - F_{(M+1)L} = ( F_{-ML} - E_{-ML} ) - ( F_{(M+1)L} - E_{(M+1)L} ) + P \I_{\Upper_{-ML} \setminus \Upper_{(M+1)L} } P $$
    converges strongly to $P$. Writing $Q = \sum_j Q_j$ we therefore have
    $$ Q - P = \sum_j (Q_j - D_j).$$
    By decomposing the Hilbert space as $\H{2}{n} = \bigoplus_{j \in \Z} \caH_j$ with $\caH_j = \I_{\sfS_j} \H{2}{n}$ we can apply the Schur test \ref{lem:Schur test} to $Q - P$, using the fact that $\{ Q_j \}_{j \in \Z}$ and $\{ D_j \}_{j \in \Z}$ are uniformly rapidly local and localized near $\{ \sfS_j \}_{j \in \Z}$ together with Lemma \ref{lem:distant regions} and the estimate Eq. \eqref{eq:norm estimate for Q_j} to conclude that
    $$ \norm{Q - P} \leq C_N L^{-N} $$
    for some constants $\{C_N\}_{N \in \N}$.

    Let us finally show that $Q$ is rapidly local. Recall that $\{ Q_j \}_{j \in \Z}$ is uniformly rapidly local and localized near $\{ \sfS_j \}_{j \in \Z}$ with some constants $\{ C'_N \}_{N \in \N}$. For $N \geq 2$ we have
    \begin{align*}
        \abs{Q(x, y)} &\leq \sum_j \abs{Q_j(x, y)} \leq C'_N \jb{x-y}^{-N} \, \sum_{j} \big( 1 + \d(x, \sfS_j) + \d(y, \sfS_j) \big)^{-N} \\
        &\leq C''_N \jb{x-y}^{-N}
    \end{align*}
    for constants $\{ C''_N \}_{N \in \N}$, because the sum over strips is uniformly bounded.
\end{proof}

\begin{proposition} \label{prop:exact strip decomposition}
    Let $P$ be a projector on $\H{2}{n}$ with $\Ch(P) = 0$. Assume that $P$ is rapidly local with constants $\{ C_N \}_{N \in \N}$. Then there is $L_1 < \infty$ depending only on $\{ C_N \}_{N \in \N}$ such that for all $L \geq L_1$ there is an orthogonal family of projections $\{P_j\}_{j \in \Z}$ which is uniformly rapidly local and localized near strips $\{ \sfS_j = \Upper_{jL} \setminus \Upper_{(j+1)L} \}_{j \in \Z}$, with constants $\{ D_N \}_{N \in \N}$ depending only on $\{ C_N \}_{N \in \N}$ and $n$, and such that
    $$ P = \sum_{j \in \Z} P_j. $$
\end{proposition}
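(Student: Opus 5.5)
We start from the approximately orthogonal strip projections $\{Q_j\}_{j \in \Z}$ of Lemma \ref{lem:strip projectors}, which satisfy $Q_j \leq P$, $\norm{Q_jQ_k} \leq C_N L^{-N}\jb{j-k}^{-N}$ for $j \neq k$, and $\norm{Q - P} \leq C_N L^{-N}$ with $Q = \sum_j Q_j$. The plan is to orthogonalize them with the L\"owdin trick described in the introduction. Consider $\caK = \bigoplus_j \Ran Q_j$ and the map $\Phi : \caK \to \H{2}{n}$, $\Phi(\oplus_j \psi_j) = \sum_j \psi_j$. Then $\Phi\Phi^* = Q$, and $\Phi^*\Phi$ is the block operator with blocks $Q_j Q_k|_{\Ran Q_k}$. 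By the Schur test (Lemma \ref{lem:Schur test}) and the bound on $\norm{Q_jQ_k}$, we get $\norm{\Phi^*\Phi - \I_{\caK}} \leq C_N L^{-N}$. For $L \geq L_1$ large this is at most $1/2$, so $\Phi$ is close to an isometry.

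Set $P' := Q^{+}$, the spectral projection of $Q$ away from $0$. Since $\sigma(\Phi\Phi^*)\setminus\{0\} = \sigma(\Phi^*\Phi)\setminus\{0\} \subset [1/2, 3/2]$, we have $P' = h(Q)$ with a smooth $h$ that vanishes near $0$ and equals $1$ on $[1/2,3/2]$. Because $\norm{Q-P} < 1$ and $\Ran Q \subset \Ran P$, we get $\Ran Q = \Ran P$, so in fact $P' = P$; indeed $\Phi$ maps onto $\Ran P$. Now define the orthogonalized projections
$$P_j := \Phi(\Phi^*\Phi)^{-1/2} J_j (\Phi^*\Phi)^{-1/2}\Phi^*,$$
where $J_j$ is the projection of $\caK$ onto its $j$-th summand. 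The operator $\Phi(\Phi^*\Phi)^{-1/2}$ is a unitary from $\caK$ onto $\Ran P$. Hence the $P_j$ are mutually orthogonal projections with $\sum_j P_j = P$ strongly.

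The main obstacle is locality, since $P_j$ lives on $\H{2}{n}$ while the formula involves $\caK$. To handle it, express everything on $\H{2}{n}$ via $\Phi(\Phi^*\Phi)^{-1/2} = Q^{-1/2}\Phi$ on $\Ran P$, where $Q^{-1/2} := g(Q)$ for a smooth $g$ equal to $x^{-1/2}$ on $[1/2,3/2]$ and $0$ near $0$. This gives
$$P_j = g(Q)\, Q_j\, g(Q).$$
Indeed $\Phi J_j \Phi^* = Q_j$, and $g(\Phi\Phi^*)\Phi = \Phi g(\Phi^*\Phi)$ (intertwining for functions vanishing at $0$). By Lemma \ref{lem:strip projectors}, $Q$ is rapidly local, so $g(Q)$ is rapidly local by Proposition \ref{prop:rapid locality is preserved by smooth functions}. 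Since $Q_j$ is uniformly rapidly local and localized near $\sfS_j$, the ideal property in Lemma \ref{lem:*-algebras and ideals of rapidly local operators} shows that $\{P_j\}$ is uniformly rapidly local and localized near $\{\sfS_j\}$.

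It remains to check that the constants depend only on $\{C_N\}$ and $n$. For this we need to trace through the earlier constructions (Lemma \ref{lem:gapping out boundary unitaries}, the spectral gluing, and Lemma \ref{lem:strip projectors}) and confirm that their constants are uniform in this sense. That lemma was applied to a family, so uniformity is automatic as long as $\delta$, $L$ and $R$ are chosen depending only on those constants. Finally, we fix $L_1 \geq L_0$ so that $C_N L_1^{-N} \leq 1/2$ for, say, $N=1$, together with the Schur-test constant for $\Phi^*\Phi - \I_{\caK}$.
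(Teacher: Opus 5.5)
Your proposal is correct and follows essentially the paper's route: you orthogonalize the strip projections $Q_j$ of Lemma \ref{lem:strip projectors} by the Löwdin construction through $\Phi : \bigoplus_j \Ran Q_j \to \H{2}{n}$, and you get locality from writing $P_j = g(Q) Q_j g(Q)$ and applying the smooth functional calculus together with the ideal property. The only cosmetic difference is that the paper uses $H = h(\I + Q - P)$, which equals $\I$ rather than $0$ on $\Ran P^{\perp}$; this yields the same $P_j$ because $Q_j \leq P$.
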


\begin{proof}
    Let $\{ Q_j \}_{j \in \Z}$ be the strip projectors provided by Lemma \ref{lem:strip projectors} at length scale $L \geq L_0$, and let $\caK_j := \Ran Q_j$ and $\caK = \bigoplus_{j \in \Z} \caK_j$. Note that the spaces $\caK_j$ as subspaces of $\H{2}{n}$ need not be orthogonal to each other, so $\caK$ is in no sense a subspace of $\H{2}{n}$, it is an abstract Hilbert space.

    Since $Q_j \leq P$ we do have $\caK_j \subset \Ran P$ for all $j \in \Z$. It follows that the map
    $$ \Phi : \caK \to \H{2}{n} : (\psi_j) \mapsto \sum_j \psi_j $$
    has its image contained in $\Ran P$.
    
    Using the bounds on $\norm{Q_j Q_k}$ provided by Lemma \ref{lem:strip projectors} and the Schur test \ref{lem:Schur test} we find
    \begin{equation} \label{eq:Phi*Phi close to I}
        \norm{ \Phi^* \Phi - \I } \leq C_N L^{-N}
    \end{equation}
    for some constants $\{ C_N \}_{N \in \N}$. On the other hand, $Q := \Phi \Phi^* = \sum_j Q_j$ so
    \begin{equation} \label{eq:PhiPhi* close to P}
        \norm{ \Phi \Phi^* - P } \leq C_N L^{-N}
    \end{equation}
    as well.

    It follows from Eqs. \eqref{eq:Phi*Phi close to I} and \eqref{eq:PhiPhi* close to P} that there is $L_1 < \infty$ so that for all $L \geq L_1$, the restriction $(\Phi \Phi^*)|_{\Ran P} : \Ran P \to \Ran P$ is invertible, and therefore $\Phi$ is a bijection from $\caK$ to $\Ran P$.

    Write $T = Q - P$. Then $\I + T$ is self-adjoint and
    $$ \sup_{\lambda \in \sigma( \I + T ) }\dist \left(  1, \lambda  \right) \leq C_N L^{-N}$$
    for all $N$. We can therefore take a compactly supported smooth function $h : \R \to \R$ such that $h(x) = x^{-1/2}$ on the spectrum of $\I + T$ and put
    $$ H = (\I + T)^{-1/2} = h(\I + T). $$
    This operator equals $(\Phi \Phi^*)^{-1/2}$ on $\Ran P$, and equals $\I$ on $\Ran P^{\perp}$, and it is rapidly local by Proposition \ref{prop:rapid locality is preserved by smooth functions}. Setting $\Psi = H \Phi$ it follows that
    $$ \Psi^* \Psi = \I_{\caK}, \quad \quad \Psi \Psi^* = P.$$
    That is, $\Psi : \caK \to \Ran P$ is a unitary map, and it sends each $\caK_j$ to the subspace $H \Ran Q_j \subset \H{2}{n}$. These subspaces are therefore orthogonal to each other, so the projectors
    $$ P_j := \Psi \I_{\caK_j} \Psi^* = H Q_j H $$
    are orthogonal to each other and sum to $P$.

    Finally, since $\{ Q_j \}_{j \in \Z}$ is uniformly rapidly local and localized near $\{ \sfS_j \}_{j \in \Z}$ by Lemma \ref{lem:strip projectors}, so is the family $\{ P_j \}_{j \in \Z}$ by Lemma \ref{lem:*-algebras and ideals of rapidly local operators}.
\end{proof}

\subsection{Rapidly local Wannier bases in 1D} \label{subsec:1D rapidly local Wannier}

\begin{theorem} \label{thm:1D rapidly local Wannier}
    Let $P$ be a projector on $\H{2}{n}$ which is rapidly local and localized near the line $\Hyper_a$ with constants $\{ C_N \}_{N \in \N}$. Then $P$ admits a rapidly local Wannier basis with constants $\{ D_N \}_{N \in \N}$ depending only on the $\{ C_N \}_{N \in \N}$ and on $n$.
\end{theorem}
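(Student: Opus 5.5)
The plan is to treat $P$ as a quasi one-dimensional projector and to repeat, one dimension lower, the strip-decomposition argument of Section \ref{sec:rapidly localized Wannier bases in 2D}. Here the pieces will be \emph{finite dimensional} boxes along $\Hyper_a$ instead of strips. The point is that in 1D no topological obstruction arises when cutting with vertical lines. By Lemma \ref{lem:restricting close to localization region}, $P$ is concentrated within distance $R$ of $\Hyper_a$ up to errors $D_N R^{-N}$. So the relevant object is the family of compressions $E_b := P\,\I_{\Right_b} P$ for $b\in\R$. As in Lemma \ref{lem:E is a projector away from the cut}, $E_b^2-E_b = P\I_{\Right_b}[P,\I_{\Right_b}]P$ is rapidly local and localized near $\sfV_b$. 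Since $P$ is also localized near $\Hyper_a$, this defect is in fact localized near the single point $\sfV_b\cap\Hyper_a$. It is therefore norm-approximable by finite rank operators, uniformly in $b$. In particular the spectrum of $E_b$ in $(\delta,1-\delta)$ consists of at most $M$ eigenvalues counted with multiplicity, with $M$ depending only on $\delta$ and the constants $\{C_N\}$.

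\textbf{Step 1 (gapped half-line projectors).} For fixed small $\delta$, pigeonhole among the at most $M$ eigenvalues to choose a threshold $t_b\in(\delta,1-\delta)$ with $\dist(t_b,\sigma(E_b))\geq \delta/(M+1)$. The gap is then uniform in $b$. Take $l_b$ smooth, equal to $0$ below $t_b-\delta/(2M+2)$ and to $1$ above $t_b+\delta/(2M+2)$, with derivatives bounded uniformly in $b$; only finitely many profiles are needed up to translation of $t_b$. Set $F_b := l_b(E_b)$. Exactly as in the proof of Proposition \ref{prop:reducing half-space projectors}, write $l_b(x)-x=(x^2-x)m_b(x)$. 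Then $F_b\le P$ is a projector and $\{F_b-E_b\}$ is uniformly rapidly local and localized near $\sfV_b\cap\Hyper_a$. This replaces the whole machinery of boundary unitaries and Lin's theorem, which was needed in 2D only because $U_a-\I$ was localized near a line rather than a point.

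\textbf{Step 2 (box decomposition and orthogonalization).} With $F_{jL}$ in place of the half-plane projectors, the proofs of Lemmas \ref{lem:estimates on the F_a} and \ref{lem:strip projectors} and of Proposition \ref{prop:exact strip decomposition} go through verbatim after swapping the roles of the coordinates. Here $P$ plays the role of $\I$ in Lemma \ref{lem:estimates on the F_a}, and all localization regions are intersected with the fattening of $\Hyper_a$. This yields, for $L$ large, mutually orthogonal projectors $P_j\le P$ with $\sum_j P_j=P$. The family $\{P_j\}$ is uniformly rapidly local and localized near the boxes $B_j=\{x: jL\le x_1<(j+1)L,\ x_2=a\}$. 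Since $\norm{\I_{B_j^{(r)c}}P_j}\le D_N r^{-N}$ and $P_j$ is localized near a compact set, $P_j$ is finite rank. Indeed, $\operatorname{Tr}P_j=\sum_x\operatorname{tr}P_j(x,x)\le \sum_x C'_N\jb{\d(x,B_j)}^{-N}\le K$, with $K$ depending only on $L$ and the constants.

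\textbf{Step 3 (the basis).} Choose any orthonormal basis $\{w_{j,k}\}_{k\le \operatorname{rank}P_j}$ of $\Ran P_j$, and take as localization centre $c_{j,k}$ the centre of $B_j$. Since $w=P_jw$, Cauchy–Schwarz gives $\abs{w(x)}^2\le \operatorname{tr}P_j(x,x)\le C'_{2N}\jb{\d(x,B_j)}^{-2N}$. Moreover $\jb{x-c_{j,k}}\le \jb{\d(x,B_j)}(1+L)$, which gives $\abs{w_{j,k}(x)}\le D_N\jb{x-c_{j,k}}^{-N}$ uniformly. The union over $j$ is an orthonormal basis of $\Ran P$.

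\textbf{Main obstacle.} The delicate step is Step 1: guaranteeing the uniform-in-$b$ gap of $E_b$ with smooth cutoffs whose seminorms are controlled uniformly. This is what makes Proposition \ref{prop:rapid locality is preserved by smooth functions} apply with uniform constants. It rests on a uniform bound on the number of eigenvalues of $E_b$ in $(\delta,1-\delta)$. I would derive that bound by approximating $E_b^2-E_b$ to norm $\delta^2/2$ by $\I_{Q}(E_b^2-E_b)\I_Q$ for a box $Q$ of fixed size around $\sfV_b\cap\Hyper_a$ (Lemma \ref{lem:restricting close to localization region}). The rank of this approximant is at most $n\abs{Q}$, and min–max then gives the count.
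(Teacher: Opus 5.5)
Your proof is correct, but it takes a genuinely different route from the paper. The paper rotates $P$ and notes that the transverse defect, and hence $U_a-\I$, is localized near a single point. It uses this only to conclude $\Flow(U_a)=0$, deduces $\Ch(P)=0$ from the bulk-boundary correspondence (Proposition \ref{prop:bulk-boundary correspondence}), and then invokes Proposition \ref{prop:exact strip decomposition} wholesale. The resulting $P_j$ are localized both near the strips and near the line, hence near points, so any orthonormal basis of $\Ran P_j$ works.

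You exploit the same point-localized defect directly. Because $E_b^2-E_b$ is uniformly close to an operator of rank at most $n\abs{Q}$, min--max bounds the number of eigenvalues of $E_b$ in $(\delta,1-\delta)$ uniformly in $b$. A uniform spectral gap therefore exists for free, and the smooth functional calculus gives the projectors $F_b\leq P$ at once. This bypasses the boundary unitaries, Lin's theorem (Proposition \ref{prop:gapping 1D unitaries}), the spectral gluing of Lemma \ref{lem:spectral gluing}, and the external bulk-boundary correspondence. All of these are heavy machinery for a quasi-one-dimensional problem in which gapping is trivial.

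The price is that you must re-run the bookkeeping of Lemmas \ref{lem:estimates on the F_a}, \ref{lem:strip projectors} and Proposition \ref{prop:exact strip decomposition}, with $P$ in place of $\I$ and segments in place of strips. This does go through; to merge the two localizations, use that the distance from $x$ to the half-line $\{x_1\geq b\}\cap\Hyper_a$ is at most the sum of its distances to $\{x_1\geq b\}$ and to $\Hyper_a$. The paper's proof is shorter on the page because it reuses the 2D result.

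One point to make precise: your threshold $t_b$ depends on $b$, so Proposition \ref{prop:rapid locality is preserved by smooth functions} is applied to a $b$-dependent family of cutoffs $l_b$ and $m_b$. This is harmless if you pick $t_b$ from a fixed finite grid, so that only finitely many cutoffs occur.

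Your bound $\abs{w(x)}^2\leq\abs{P_j(x,x)}$ for unit $w\in\Ran P_j$ is also a slightly cleaner route to the final decay estimate than the paper's.
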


\begin{proof}
    We will want to apply Proposition \ref{prop:exact strip decomposition} to cut $P$ transversally to its localization axis. To this end, first rotate $P$ by $90^{\circ}$ and translate so that $P$ is rapidly local and localized near the vertical line $\sfV = \{ x \in \R^2 \, : \, x_1 = 0 \}$.

    Since $P$ is localized near $\sfV$, we have $\Ch(P) = 0$. To see this, recall the boundary unitaries $U_a = \ed^{2\pi \iu E_a}$ with $E_a = P \Pi_a P$, where $\Pi_a = \I_{\Upper_a}$ is the projection onto the upper half plane. We have $E_a^2 - E_a = P \Pi_a [P, \Pi_a] P$, and since $P$ is localized near $\sfV$, it follows that $[P, \Pi_a]$ is localized near the intersection $x_a = \{(0, a)\} = \sfV \cap \Hyper_a$. Indeed, $([ P, \Pi_a ])(x, y) = 0$ unless either $x \in \Upper_a$ and $y \notin \Upper_a$ or vice versa. We get in either case
    \begin{align*}
        \abs{ ([P, \Pi_a])(x, y)} &= \abs{ P(x, y) } \leq C_{2N} \, \jb{x-y}^{-2N} \jb{ \d(x, \sfV) + \d(y, \sfV) }^{-2N} \\
        &\leq C_{2N} \jb{x-y}^{-N} \jb{ \d(x, \Hyper_a) + \d(y, \Hyper_a) }^{-N} \jb{ \d(x, \sfV) + \d(y, \sfV) }^{-2N} \\
        &\leq C_{2N} \jb{x-y}^{-N} \jb{ \abs{x - x_a} + \abs{y - x_a} }^{-N}
    \end{align*}
    where we used Eq. \eqref{eq:triangle for the Japanese bracket} and $\abs{ x - x_a} + \abs{y - x_a} \leq \d(x, \sfV) + \d(y, \sfV) + \d(x, \Hyper_a) + \d(y, \Hyper_a)$ to obtain the third line. Since this holds for all $N \in 
    \N$, we find that $\{[P, \Pi_a]\}_{a \in \R}$ is uniformly rapidly local and localized near $\{ x_a \}_{a \in \R}$, and the same is true for $\{ E_a^2 - E_a \}_{a \in \R}$ by Lemma \ref{lem:*-algebras and ideals of rapidly local operators}. By repeating the proof of Proposition \ref{prop:boundary unitaries act near the boundary} we conclude that also $\{ U_a - \I \}_{a \in \R}$ is uniformly rapidly local and localized near the points $\{ x_a \}_{a \in \R}$.

    Recall from Section \ref{subsec:flow and chern number} that the flow of $U_a$ is given by
    $$ \Flow(U_a) = \dim \ker (A_{a, b} - \I) - \dim \ker (A_{a, b} + \I) $$
    where $A_{a, b} = U_a \I_{\Right_b} U_a^* - \I_{\Right_b} = [U_a, \I_{\Right_b}] U_a^*$ for any $b \in \R$. Since $U_a-\I$ is localized near $x_a = \{(0, a)\}$, it follows from Lemma \ref{lem:restricting close to localization region} that
    $$ \norm{ U_a - U_a^{(r)} } \leq D_N r^{-N}, \quad \text{where} \quad U_a^{(r)} = \I_{\{ \abs{x - x_a} \leq r \}} U_a \I_{\{ \abs{x - x_a} \leq r \}} + \I_{\{ \abs{x - x_a} \leq r \}}^{\perp} $$
    for some constants $\{ D_N \}_{N \in \N}$ and all $r > 0$. For $r$ large enough, and $b$ large enough in function of $r$, we then have $[ U_a^{(r)}, \I_{\Right_b} ] = 0$ and therefore
    $$ \norm{A_{a, b}} = \norm{ [ (U_a - U_a^{(r)}), \I_{\Right_b} ] U_a^* } \leq D_N r^{-N} < 1$$
    and so the kernels of $A_{a, b} \pm \I$ are trivial. In particular, $\Flow(U_a) = 0$. By the bulk-boundary correspondence \ref{prop:bulk-boundary correspondence} also $\Ch(P) = 0$.

    We can now apply Proposition \ref{prop:exact strip decomposition} to get an $L < \infty$ and a decomposition
    $$ P = \sum_j P_j $$
    with $\{ P_j \}_{j \in \Z}$ an orthogonal family of projectors which is uniformly rapidly local and localized near $\{ \sfS_j = \Upper_{jL} \setminus \Upper_{(j+1) L} \}$. In particular, $\{ P_j \}_{j \in \Z}$ is uniformly rapidly local and localized near the lines $\{ \Hyper_{jL} \}$. Since $P_j = P P_j$ for all $j \in \Z$ and $P$ is rapidly local and localized near $\sfV$ it follows from Lemma \ref{lem:*-algebras and ideals of rapidly local operators} that $\{ P_{j} \}_{j \in \Z}$ is uniformly rapidly local and localized near $\sfV$ as well. By the characterization of Lemma \ref{lem:norm characterization of rapid locality} this means that all the norms $\norm{P_j}_{N, \Hyper_{jL}}, \norm{P_j}_{N, \sfV} < C_N$ are uniformly bounded. It follows that
    \begin{align*}
        \sum_y \abs{ P_j(x, y) } &\jb{x-y}^{N} \jb{ \d(x, x_{jL}) + \d(y, x_{jL}) }^{N} \\
        &\leq \sum_y  \abs{ P_j(x, y) } \jb{x-y}^{N} \jb{ \d(x, \sfV) + \d(x, \Hyper_{jL}) + \d(y, \sfV) + \d(y, \Hyper_{jL}) }^{N} \\
        &\leq \sum_y \abs{P_j(x, y)} \jb{x-y}^N \, 2^{N-1}\left(  \jb{ \d(x, \sfV) + \d(y, \sfV) }^N + \jb{ \d(x, \Hyper_{jL}) + \d(y, \Hyper_{jL})}^N  \right) \\
        & \leq 2^{N-1} \left( \norm{P_j}_{N, \sfV} + \norm{P_j}_{N, \Hyper_{jL}} \right) < 2^N C_N
    \end{align*}
    and likewise for the sum over $x$. It follows that the norms $\norm{P_j}_{N, \{x_{jL}\}} \leq 2^N C_N$ are uniformly bounded and therefore $\{ P_j \}_{j \in \Z}$ is uniformly rapidly local and localized near $\{ x_{jL} \}_{j \in \Z}$ by Lemma \ref{lem:norm characterization of rapid locality}. This means that there are constants $\{ D'_N \}_{N \in \N}$ which depend only on the $\{ C_N \}$ and on $n$ so that
    $$ \abs{P_j(x, y)} \leq D'_N \, \jb{x-y}^{-N} \jb{ \abs{x - x_{jL}} + \abs{y - x_{jL}} }^{-N}$$
    for all $N \in \N$. For any unit vector $\psi \in \Ran P_j$ we then have
    \begin{align*}
        \abs{\psi(x)} &= \abs{ \sum_y P_j(x, y) \psi(y) } \leq \sum_y \abs{P_j(x, y)} \\
        &\leq D'_{N + 2} \jb{x - x_{jL}}^{-N-2} \sum_y \jb{x-y}^{-N-2} \leq  D_N \jb{x - x_{jL}}^{-N}
    \end{align*}
    for constants $\{ D_N \}_{N \in \N}$ depending only on the $\{ D'_N \}$.

    In particular, if $\{ w^{(j)}_{\al} \}_{\al \in \caA_j}$ are orthonormal bases for each $\Ran P_j$, then these bounds hold for each basis vector $w^{(j)}_{\al}$, and therefore the union of all these bases forms a rapidly localized Wannier basis for $P$ with constants $\{ D_N \}_{N \in \N}$ depending only on the locality constants $\{C_N\}_{N \in \N}$ of $P$ and on $n$.
\end{proof}

\subsection{Rapidly local Wannier bases in 2D} \label{subsec:2D rapidly local Wannier}

\begin{theorem} \label{thm:2D rapidly local Wannier}
    Let $P$ be a rapidly local projector on $\H{2}{n}$ with $\Ch(P) = 0$. Then $P$ admits a rapidly local Wannier basis.
\end{theorem}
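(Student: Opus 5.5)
The plan is to combine Proposition \ref{prop:exact strip decomposition} with Theorem \ref{thm:1D rapidly local Wannier}. We also need to check that the constants stay uniform across the strips.

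First, fix $L \geq L_1$ as provided by Proposition \ref{prop:exact strip decomposition}. This yields an orthogonal family of projections $\{P_j\}_{j \in \Z}$ with $P = \sum_j P_j$. The family is uniformly rapidly local and localized near the strips $\sfS_j = \Upper_{jL} \setminus \Upper_{(j+1)L}$, with constants $\{D_N\}$ independent of $j$. Since every point of $\sfS_j$ lies within distance $L$ of the line $\Hyper_{jL}$, we have $\d(x, \Hyper_{jL}) \leq \d(x, \sfS_j) + L$. Using \eqref{eq:triangle for the Japanese bracket} this gives
$$\jb{\d(\Hyper_{jL}, x) + \d(\Hyper_{jL}, y)}^{-N} \leq \jb{2L}^{N} \jb{\d(\sfS_j, x) + \d(\sfS_j, y)}^{-N}.$$
Hence each $P_j$ is rapidly local and localized near the line $\Hyper_{jL}$, with constants $\{\jb{2L}^N D_N\}$. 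These constants do not depend on $j$, because $L$ is fixed once and for all.

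Second, apply Theorem \ref{thm:1D rapidly local Wannier} to each $P_j$ separately. Each $P_j$ then admits a rapidly localized Wannier basis $\{w^{(j)}_\al\}_{\al \in \caA_j}$ with centres $c^{(j)}_\al$. The crucial point is that the theorem's constants depend only on the locality constants of $P_j$ and on $n$. Those locality constants were just shown to be uniform in $j$, so the resulting Wannier constants $\{D'_N\}$ are uniform in $j$ as well.

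Third, take the union $\{w^{(j)}_\al\}_{j \in \Z, \al \in \caA_j}$. The ranges $\Ran P_j$ are mutually orthogonal and $\sum_j P_j = P$ strongly, so $\Ran P = \bigoplus_j \Ran P_j$ and the union is an orthonormal basis of $\Ran P$. Each vector obeys $\abs{w^{(j)}_\al(x)} \leq D'_N \jb{x - c^{(j)}_\al}^{-N}$ with uniform constants. This is exactly Definition \ref{def:rapidly localized Wannier basis}.

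The only genuine subtlety is the uniformity of the constants, which is handled by tracking the constant dependence in Proposition \ref{prop:exact strip decomposition} and Theorem \ref{thm:1D rapidly local Wannier}. Translation invariance of the locality notions lets Theorem \ref{thm:1D rapidly local Wannier} be applied with lines $\Hyper_a$ at arbitrary heights. There is no circularity: the proof of Theorem \ref{thm:1D rapidly local Wannier} uses Proposition \ref{prop:exact strip decomposition} only for the quasi-1D projector after rotation, and its Chern triviality is established there independently.
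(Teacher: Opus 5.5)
Your proposal is correct and follows the paper's proof: strip decomposition via Proposition \ref{prop:exact strip decomposition}, then Theorem \ref{thm:1D rapidly local Wannier} applied strip by strip with $j$-independent constants, then the union of the bases, where your strip-to-line estimate just spells out the paper's ``in particular''. One small slip is that your displayed inequality has its two sides swapped: as written it holds trivially because $\Hyper_{jL} \subset \sfS_j$, but it is not what you need, whereas your bound $\d(x,\Hyper_{jL}) \leq \d(x,\sfS_j) + L$ gives the inequality you actually use, namely $\jb{\d(\sfS_j,x)+\d(\sfS_j,y)}^{-N} \leq \jb{2L}^{N}\jb{\d(\Hyper_{jL},x)+\d(\Hyper_{jL},y)}^{-N}$.
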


\begin{proof}
    By Proposition \ref{prop:exact strip decomposition} we can decompose
    $$ P = \sum_{j \in \Z} P_j $$
    where $\{ P_j \}_{j \in \Z}$ is a uniformly rapidly local family of mutually orthogonal projectors that is localized near strips $\{ \sfS_j = \Upper_{jL} \setminus \Upper_{(j+1)L} \}$ for some $L < \infty$.

    In particular, $\{ P_j \}_{j \in \Z}$ is uniformly rapidly local and localized near the lines $\{ \Hyper_{jL} \}_{j \in \Z}$. It therefore follows from Theorem \ref{thm:1D rapidly local Wannier} that there are constants $\{D_N\}_{N \in \N}$ such that each $P_j$ admits a rapidly local Wannier basis $\{ w^{(j)}_{\al} \}_{\al \in \caA_j}$ with constants $\{ D_N \}_{N \in \N}$. But then the union of all these Wannier bases is a rapidly local Wannier basis for $P$ with constants $\{ D_N \}_{N \in \N}$.
\end{proof}

\section{From rapidly local to exponentially local Wannier bases} \label{sec:rapid to exponential}

Let us first show that the localization centers of a rapidly localized Wannier basis cannot cluster too much. See also \cite[Lemma 2.1.]{lu2024algebraic} for an analogous result in the continuum.

\begin{lemma} \label{lem:no clustering}
    Let $P$ be a rapidly local projector on $\H{d}{n}$ and let $\{ w_{\al} \}_{\al \in \caA}$ be a rapidly local Wannier basis for $P$ with localization centers $\{ x_{\al} \}_{\al \in \caA}$. Then there is a constant $C < \infty$ such that
    $$ \# \{ \al \in \caA \, : \, \abs{x_{\al} - c} \leq R \} \leq C (1 + R)^d$$
    for any $c \in \R^d$ and any $R \geq 0$.
\end{lemma}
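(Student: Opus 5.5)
The strategy is to compare the number of Wannier functions centred in a ball with the rank of $P$ on a slightly larger ball. The starting point is the identity $\sum_{\al \in \caA} \abs{w_\al(x)}^2 = \tr P(x,x) \leq n$, valid for every $x \in \Z^d$ because $\{w_\al\}$ is an orthonormal basis of $\Ran P$. We do not actually need locality of $P$ here, only the uniform Wannier decay with constants $\{C_N\}_{N \in \N}$.

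Fix $c \in \R^d$ and $R \geq 0$. Let $\caA_{c,R} = \{\al : \abs{x_\al - c} \leq R\}$, and let $B = \{x \in \Z^d : \abs{x - c} \leq R + r\}$ for a buffer $r \geq 1$ to be chosen independently of $c$ and $R$. Summing the identity over $x \in B$ gives
\begin{equation*}
\sum_{\al \in \caA_{c,R}} \sum_{x \in B} \abs{w_\al(x)}^2 \leq \sum_{x \in B} \sum_{\al \in \caA} \abs{w_\al(x)}^2 \leq n \, \# B \leq n c'_d (1 + R + r)^d .
\end{equation*}
For each $\al \in \caA_{c,R}$, the ball of radius $r$ around $x_\al$ is contained in $B$. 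Hence
\begin{equation*}
\sum_{x \in B} \abs{w_\al(x)}^2 \geq 1 - \sum_{x : \abs{x - x_\al} > r} \abs{w_\al(x)}^2 \geq 1 - C_{d+1}^2 \sum_{x : \abs{x - x_\al} > r} \jb{x - x_\al}^{-2(d+1)} .
\end{equation*}
The tail on the right is bounded by a constant times $r^{-(d+2)}$, uniformly in the centre $x_\al \in \R^d$. Choosing $r$ large, depending only on $C_{d+1}$ and $d$, makes it at most $1/2$.

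Combining the two bounds gives $\tfrac12 \# \caA_{c,R} \leq n c'_d (1 + R + r)^d \leq n c'_d (1+r)^d (1+R)^d$. This is the claim with $C = 2 n c'_d (1+r)^d$.

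The only point requiring care is the uniformity of the lattice tail sum over non-lattice centres $x_\al \in \R^d$. A comparison with an integral, or the observation that $\jb{x - x_\al} \geq \tfrac12 \jb{x - \lfloor x_\al \rfloor}$ up to a constant, settles it. Otherwise the argument is routine.
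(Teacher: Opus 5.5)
Your proof is correct and follows the same route as the paper. Each Wannier function centred in the ball puts mass close to one into a slightly larger ball, while the Parseval bound $\sum_{\al} \abs{w_{\al}(x)}^2 \leq n$ caps the total mass per site. The only difference is that you enlarge the ball by a fixed buffer $r$, where the paper doubles the radius to $2R$; this lets you handle all $R \geq 0$ in one step, instead of first taking $R \geq R_0$ and then treating small $R$ separately.
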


\begin{proof}
    Let $\{ w_{\al} \}_{\al \in \caA}$ be rapidly localized with constants $\{ C_N \}_{N \in \N}$.

    For any $c \in \R^d$ and $R \geq 0$ write
    $$  \caA_{c, R} = \{ \al \in \caA \, : \, \abs{x_{\al} - c} \leq R \} \quad \text{and} \quad B_{c, R} = \{ x \in \Z^d \, : \, \abs{x - c} \leq R \}. $$

    We have for any $R > 0$, any $\al \in \caA_{c, R}$, and any $N \geq (d+1)/2$ that
    \begin{align*}
        \sum_{x \in \Z^d \setminus B_{c, 2R}} \abs{ w_{\al}(x) }^2 &\leq \sum_{x \in \Z^d \setminus B_{x_{\al, R}}} \abs{w_{\al}(x)}^2 \leq C_N^2 \sum_{x \in \Z^d \setminus B_{x_{\al}, R}} \, \jb{ x - x_{\al} }^{-2N} \\
        &\leq C_{N, d} R^{d - 2N}
    \end{align*}
    where $C_{N, d} < \infty$ depends only on $C_{N}$ and on $d$.

    Take $N = \ceil{(d+1)/2}$ and $R \geq 1$, then since $\sum_{x \in \Z^d} \abs{w_{\al}(x)}^2 = 1$ there is a constant $C_d < \infty$ such that
    $$ \sum_{x \in B_{c, 2R} } \abs{w_{\al}(x)}^2 > 1- C_d R^{-1} $$
    for all $\al \in \caA_{c, R}$, and therefore
    $$ \sum_{\al \in \caA_{c, R}}  \sum_{x \in B_{c, 2R} } \abs{w_{\al}(x)}^2 > \# \caA_{c, R} (1- C_d R^{-1}).  $$

    On the other hand, by the Parseval identity we have
    $$ \sum_{\al \in \caA_{c, R}} \abs{w_{\al}(x)}^2 \leq n$$
    for any $x \in \Z^d$ and therefore
    $$ \sum_{\al \in \caA_{c, R}}  \sum_{x \in B_{c, 2R} } \abs{w_{\al}(x)}^2 \leq n D_{d} R^d  $$
    for some constant $D_d < \infty$.

    Comparing the two bounds, we get
    $$ \# \caA_{c, R} < 2 n D_d R^d $$
    whenever $R \geq R_0 := \max \{ 1, 2C_d \}$. The claim of the lemma now holds with $C = 2 n D_d R_0^d$.
\end{proof}

The following theorem establishes that if an exponentially local projection admits a rapidly localized Wannier basis, then it actually admits an exponentially localized Wannier basis. See \cite{lu2021algebraic} for a similar result in the continuum.

\begin{theorem} \label{thm:rapid to exponential}
    Let $P$ be an exponentially local projector on $\H{d}{n}$ which admits a rapidly local Wannier basis. Then $P$ admits an exponentially local Wannier basis.
\end{theorem}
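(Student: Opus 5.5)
Starting from the rapidly local Wannier basis $\{w_\al\}_{\al \in \caA}$ with centres $\{x_\al\}$, I will build exponentially localized approximants by projecting cut-off versions of the $w_\al$ with $P$, and then orthonormalize them by L\"owdin's trick. Concretely, fix a length $r$ (to be chosen large) and set
$$ v_\al := P \I_{B_{x_\al, r}} w_\al . $$
Since $P$ is exponentially local with constants $C, \mu$ and $\I_{B_{x_\al,r}} w_\al$ is supported in a ball of radius $r$, each $v_\al$ satisfies $\abs{v_\al(x)} \leq C' \ed^{\mu r} \ed^{-\mu \abs{x - x_\al}}$. This is exponential decay around $x_\al$, with a prefactor uniform in $\al$ because $r$ is fixed. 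Also $v_\al \in \Ran P$, and
$$ \norm{v_\al - w_\al} = \norm{P \I_{B^c_{x_\al,r}} w_\al} \leq C_N r^{-N}, $$
uniformly in $\al$, by rapid localization.

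Next I consider the synthesis map $\Phi : l^2(\caA) \to \H{d}{n}$, $e_\al \mapsto v_\al$, and the analogous map $\Phi_0$ for the $w_\al$. The map $\Phi_0$ is unitary onto $\Ran P$. I will show $\norm{\Phi - \Phi_0}$ is small by the Schur test on the Gram-type matrix of $(\Phi - \Phi_0)^*(\Phi - \Phi_0)$. Its entries are $\jb{v_\al - w_\al, v_\be - w_\be}$. Each difference $v_\al - w_\al$ is small in norm and also localized near $x_\al$ with rapid decay at scales beyond $r$, since $P$ is local. So the entries decay like $\min\{r^{-N}, \jb{x_\al - x_\be}^{-N}\}$-type bounds. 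The row sums are then controlled using Lemma \ref{lem:no clustering}, which bounds the number of centres in balls. This gives $\norm{\Phi - \Phi_0} \to 0$ as $r \to \infty$, so for $r$ large, $\Phi^*\Phi$ is invertible on $l^2(\caA)$ and $\Phi$ is a bijection onto $\Ran P$. Then $\tilde w_\al := \Phi (\Phi^*\Phi)^{-1/2} e_\al$ is an orthonormal basis of $\Ran P$.

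The main obstacle is showing the $\tilde w_\al$ remain exponentially localized. For this I view $S := \Phi^*\Phi$ as a matrix indexed by $\caA$, with entries $S_{\al\be} = \jb{v_\al, v_\be}$. Using exponential decay of the $v_\al$ and Lemma \ref{lem:no clustering}, I get $\abs{S_{\al\be}} \leq C'' \ed^{-\mu' \abs{x_\al - x_\be}}$. The centres have bounded density, so the Schur test applies in weighted form: the weighted Schur norm $\sup_\al \sum_\be \abs{S_{\al\be}} \ed^{\nu \abs{x_\al - x_\be}}$ is finite for small $\nu$, and this norm is submultiplicative. Conjugating by $\ed^{\nu \abs{x_\al - a}}$ (a Combes--Thomas argument as in Proposition \ref{prop:rapidly local Combes-Thomas}, now with exponential weights and $\nu$ small relative to the gap of $S$ away from $0$) shows that $(z - S)^{-1}$ is exponentially local for $z$ on a contour around $\sigma(S) \subset [1-\eta, 1+\eta]$. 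A Riesz integral then gives $\abs{(S^{-1/2})_{\al\be}} \leq C \ed^{-\nu\abs{x_\al - x_\be}}$. Finally,
$$ \tilde w_\al = \sum_\be (S^{-1/2})_{\be\al} v_\be , $$
so that
$$ \abs{\tilde w_\al(x)} \leq \sum_\be C \ed^{-\nu \abs{x_\al - x_\be}} \ed^{-\mu \abs{x - x_\be}} , $$
which, again by Lemma \ref{lem:no clustering}, is bounded by $C''' \ed^{-\nu' \abs{x - x_\al}}$ for some $0 < \nu' < \min\{\nu, \mu\}$. This yields the exponentially localized Wannier basis with centres $x_\al$.
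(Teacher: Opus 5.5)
Your proposal is correct and follows essentially the same route as the paper. Your $v_\alpha = P\I_{B_{x_\alpha,r}}w_\alpha$ are exactly the columns of the paper's $\Psi = P\widetilde\Phi$, and the remaining steps match the paper: L\"owdin orthonormalization via $\Psi H^{-1/2}$, a Combes--Thomas bound on $\caA$ with metric $\abs{x_\alpha - x_\beta}$, a contour integral for $H^{-1/2}$, and the no-clustering lemma to resum. The differences are only cosmetic: the paper bounds $\norm{\Phi - \widetilde\Phi}$ by a direct Schur test on the site-by-index blocks rather than on your Gram matrix of differences, and it cites the graph Combes--Thomas estimate of Aizenman--Warzel instead of rederiving it with exponential weights.
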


\begin{proof}
    Let $\{ w_{\al} \}_{\al \in \caA}$ be a rapidly local Wannier basis for $P$ with constants $\{ C_N \}_{N \in \N}$ and localization centers $\{ x_{\al} \}_{\al \in \caA}$.

    Let $\Phi : l^2(\caA) \to \H{d}{n}$ be the isometry defined by $\Phi e_{\al} = w_{\al}$. Then
    $$ \Phi^* \Phi = \I, \quad \text{and} \quad  \Phi \Phi^* = P.$$

    Fix $R \geq 1$ and define truncations
    $$ \tilde w_{\al}(x) = \begin{cases}
        w_{\al}(x) \quad &\text{if} \,\, \abs{x - x_{\al}} \leq R \\
        0 \quad &\text{otherwise}
    \end{cases} $$
    and $\widetilde \Phi : l^2(\caA) \to \H{d}{n}$ defined by $\widetilde \Phi e_{\al} = \tilde w_{\al}$ for all $\al \in \caA$.

    Decompose $l^2(\caA)$ as a direct sum over its canonical basis, and $\H{d}{n} = \bigoplus_{ x \in \Z^d } \caH_x$ with each $\caH_x \simeq \C^n$ the on-site degree of freedom at $x \in \Z^d$. With respect to these decompositions, the operator $\Delta \Phi := \Phi - \widetilde \Phi$ has blocks $\Delta\Phi(x, \al)$ given by
    $$ \Delta \Phi(x, \al) e_{\al} = w_{\al}(x) - \tilde w_{\al}(x). $$
    The column sums can be uniformly bounded by
    \begin{align*}
        \sum_{x \in \Z^d} \norm{ \Delta \Phi(x, \al) } &= \sum_{x \in \Z^d} \abs{ w_{\al}(x) - \tilde w_{\al(x)} } = \sum_{x \, : \, \abs{x - x_{\al}} > R } \abs{w_{\al}(x)} \\
        &\leq C_{N}  \sum_{x \, : \, \abs{x - x_{\al}} > R } \jb{x - x_{\al}}^{-N} \leq D'_N (1+R)^{d-N}
    \end{align*}
    for any $N \geq d+1$ and where $\{ D'_N \}_{N \geq d+1}$ are constants depending only on $\{ C_N \}_{N \geq d+1}$ and $d$. 
    
    Writing $\caA_{c, r} = \{ \al \in \caA \, : \, \abs{x_{\al} - c} \leq r \}$ Lemma \ref{lem:no clustering} provides a constant $C$ such that 
    $$\# \caA_{c, r} \leq C (1 + r)^d$$
    for any $c \in \R^d$ and any $r \geq 0$. Fix $x \in \Z^d$ and write $\caB_k = \caA_{x, (k+1)R} \setminus \caA_{x, kR}$ for any $k \in \N$. The row sums of $\Delta \Phi$ can then be bounded for any $N \geq d+1$ as
    \begin{align*}
        \sum_{\al \in \caA} \norm{ \Delta \Phi(x, \al) } &= \sum_{\al \in \caA} \abs{ w_{\al}(x) - \tilde w_{\al}(x) } = \sum_{k = 1}^{\infty} \sum_{\al \in \caB_k} \abs{ w_{\al}(x) } \\
        &\leq C' \sum_{k = 1}^{\infty} (1 + 2R)^d k^{d-1} \,\, C_N \, (1 + kR)^{-N} \\
        &\leq D''_N (1 + R)^{d-N}
    \end{align*}
    for some constants $\{ D''_N \}_{N \geq d +1}$ depending only on $\{ C_N \}_{N \geq d +1}$ and on $d$ and $n$. Here, $C'$ is a constant so that each $\caB_k$ can be covered by at most $C' k^{d-1}$ sets of the form $\caA_{c, R}$. It follows from these bounds on the row and column sums and the Schur test \ref{lem:Schur test} that
    $$ \norm{ \Delta \Phi } \leq D_N (1 + R)^{d - N} $$
    for all $N \geq d + 1$, where $D_N = \max \{ D'_N, D''_N \}$.

    Take $R$ large enough so that $\norm{ \Delta \Phi } < 1/4$ and set
    $$ \Psi = P \widetilde \Phi \quad \text{and} \quad H = \Psi^* \Psi = \widetilde \Phi^* P \widetilde \Phi. $$
    The map $\Psi : l^2(\caA) \to \Ran P$ is bijective because $\Psi = \Phi \Phi^* \widetilde \Phi$ and $\Phi$ is unitary from $l^2(\caA)$ to $\Ran P$ while $\Phi^* \widetilde \Phi$ is invertible since $\norm{\Phi^* \widetilde \Phi - \I} = \norm{\Phi^* \, \Delta \Phi} < 1/4$.
    
    Moreover, since $P \Phi = \Phi$ we have $\Phi - \Psi = P \Delta \Phi$ and therefore $\norm{ \Psi - \Phi } < 1/4$. Since $\Phi$ is an isometry onto $\Ran P$, it follows that $3/4 \norm{\xi} \leq \norm{\Psi \xi} \leq 5/4 \norm{\xi}$ for every $\xi \in l^2(\caA)$ and so $(3/4)^2 \I \leq H \leq (5/4)^2 \I$. We can therefore define 
    $$ \widetilde \Psi = \Psi H^{-1/2},
    $$ 
    which satisfies
    \begin{equation} \label{eq:widetilde Psi is spot on}
        \widetilde \Psi^* \widetilde \Psi = \I \quad \text{and} \quad \widetilde \Psi \widetilde \Psi^* = P.
    \end{equation}

    It follows from exponential locality of $P$ that there are constants $C, C'$ and $\mu$ depending on our choice of $R$ such that
    $$ \norm{\Psi(x, \al)} \leq C \ed^{-\mu \abs{x - x_{\al}}} \quad \text{and} \quad \norm{H(\al, \beta)} \leq C' \ed^{-\mu \abs{x_{\al} - x_{\beta}}}. $$
    Together with the no-clustering bound of Lemma \ref{lem:no clustering} this implies that if we think of $\caA$ as a graph equipped with metric $\d(\al, \beta) = \max\{\abs{x_{\al} - x_{\beta}}, 1-\delta_{\al, \beta} \}$ then $H$ satisfies the assumption of the Combes-Thomas estimate of \cite[Theorem 10.5]{aizenman2015random}. We therefore have
    $$ \abs{ (H - z)^{-1}(\al, \beta) } \leq \frac{2}{\delta} \ed^{- \mu' \min\{ \delta, 1\} \abs{ x_{\al} - x_{\beta} }} $$
    where $\delta = \dist( z, \sigma(H) )$. Using the holomorphic functional calculus we can write
    $$ H^{-1/2} = \frac{\iu}{2 \pi} \oint_{\Gamma} \dd z \, \frac{ z^{-1/2} }{H - z} $$
    where the contour $\Gamma$ lies in the domain of analyticity of $z^{-1/2}$, encircles the spectrum of $H$, and stays a bounded distance away from its spectrum. It follows that there are constants $C'', \mu''$ such that
    $$ \norm{ H^{-1/2}(\al, \beta) } \leq C'' \ed^{-\mu'' \abs{x_{\al} - x_{\beta}}}. $$
    Finally, using a triangle inequality and no-clustering Lemma \ref{lem:no clustering}, we conclude that there are constants $C''', \mu'''$ such that
    $$ \norm{ \widetilde \Psi(x, \al) } \leq C''' \ed^{-\mu''' \abs{x - x_{\al}}}. $$

    Now define
    $$ v_{\al} :=  \widetilde \Psi \, e_{\al} \quad \text{for all } \,\, \al \in \caA.$$
    Then $\{ v_{\al} \}_{\al \in \caA}$ is an orthonormal basis for $\Ran P$ by Eq. \eqref{eq:widetilde Psi is spot on}, and
    $$ \abs{v_{\al}(x)} = \norm{ \widetilde \Psi(x, \al) } \leq C''' \ed^{-\mu''' \abs{x - x_{\al}}}, $$
    so $\{ v_{\al} \}_{\al \in \caA}$ is an exponentially localized Wannier basis for $P$.
\end{proof}

\section{A rapidly localized Wannier basis implies Chern triviality} \label{sec:rapid Wannier implies Ch(P) = 0}

The following theorem proves that if a rapidly local projector admits a rapidly localized Wannier basis, then its Chern number vanishes. Results of the same type have been obtained in the continuum setting in \cite{marcelli2023localization, lu2024algebraic}, see also \cite{rossi2025algebraic}. The first two of these works prove the result by working with the Kubo formula for the Chern number. Here we give an alternative strategy using the formulation in terms of an index of a pair of projections through the flux insertion operator $F_a$. The localized Wannier basis then allows the construction of a modified flux insertion $\widetilde F_a$ which leaves the projector invariant, and such that $PF_aP - \widetilde F_a$ is compact on $\Ran P$. A similar construction has appeared in \cite[~Proposition 4.3]{bols2023fredholm}.

\begin{theorem} \label{thm:rapidly local Wannier basis implies Ch(P) = 0}
    Let $P$ be a rapidly local projector on $\H{2}{n}$. Suppose $P$ admits a rapidly localized Wannier basis. Then $\Ch(P) = 0$.
\end{theorem}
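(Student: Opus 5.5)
Following the strategy announced just before the statement, I will build a modified flux insertion that commutes with $P$. Let $\{w_\al\}_{\al \in \caA}$ be the rapidly localized Wannier basis with centres $x_\al$. Fix $a \in \R^2 \setminus \Z^2$ and, after a small shift of $a$ if needed, assume no centre equals $a$. Define phases $\theta_\al := \arg\big((x_{\al})_1 - a_1 + \iu((x_\al)_2 - a_2)\big)$. Set
$$ \widetilde F_a := \sum_{\al} \ed^{\iu \theta_\al} \, |w_\al\rangle\langle w_\al| + P^\perp F_a P^\perp. $$
Only the compression $P F_a P$ matters for the index, so the exact choice on $\Ran P^\perp$ is inessential. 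The first block is a unitary on $\Ran P$ commuting with $P$.

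The key step is to show that $K := P F_a P - \sum_\al \ed^{\iu\theta_\al}|w_\al\rangle\langle w_\al|$ is compact. I will write $\Phi: l^2(\caA) \to \H{2}{n}$ for the isometry $e_\al \mapsto w_\al$ and $\Theta$ for the diagonal unitary $e_\al \mapsto \ed^{\iu\theta_\al}e_\al$. Then $K = \Phi(\Phi^* F_a \Phi - \Theta)\Phi^*$, so it suffices to prove that the matrix
$$ M(\al,\beta) = \langle w_\al, (F_a - \ed^{\iu\theta_\beta}) w_\beta\rangle $$
defines a compact operator on $l^2(\caA)$.

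For this I will use two estimates.
\begin{itemize}
\item Since $w_\beta$ is rapidly localized at $x_\beta$ and $\arg$ varies by $O(r/|x_\beta - a|)$ over a ball of radius $r$ around $x_\beta$, one gets $\|(F_a - \ed^{\iu\theta_\beta})w_\beta\| \lesssim \jb{x_\beta - a}^{-1+\epsilon}$. The proof splits the sum over $x$ into $|x-x_\beta| \le |x_\beta-a|^{1-\epsilon'}$ and its complement, using the rapid decay on the complement.
\item For the off-diagonal decay, $\langle w_\al, w_\beta\rangle = \delta_{\al\beta}$ gives $M(\al,\beta) = \langle w_\al, (F_a - F_a(x_\beta)) w_\beta\rangle$. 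The two vectors are localized at different centres, which gives decay $\jb{x_\al - x_\beta}^{-N}$ times the small factor above; the Lipschitz bound $|F_a(x)-F_a(x_\beta)| \lesssim |x - x_\beta|/|x_\beta - a|$ handles the region near both centres.
\end{itemize}
Combining these with the no-clustering Lemma \ref{lem:no clustering} and the Schur test (Lemma \ref{lem:Schur test}), the truncations of $M$ to $\{\al : |x_\al - a| \le r\}$ are finite rank, since only finitely many centres lie in each ball. The remainders have Schur norm $\to 0$ as $r \to \infty$, because the entries there are bounded by $\jb{x_\al-x_\beta}^{-N}\big(\jb{x_\al-a}^{-1/2}+\jb{x_\beta-a}^{-1/2}\big)$, say. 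Hence $M$ is compact.

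Finally I will deduce the index. $\ind(P,F_a)$ depends only on $PF_aP$ as a Fredholm operator on $\Ran P$: by the Avron–Seiler–Simon theory, $\ind(P,U) = -\mathrm{index}(PU P|_{\Ran P})$ up to sign convention, with $F_a$ and $F_a^*$ playing symmetric roles. Since $PF_aP|_{\Ran P} = \Phi\Theta\Phi^* + K$ with $\Phi\Theta\Phi^*$ unitary on $\Ran P$ and $K$ compact, the Fredholm index vanishes, so $\Ch(P)=0$.

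The main obstacle is the compactness estimate, specifically making the Schur-norm tail bound uniform using only rapid (not exponential) decay together with no-clustering. I expect the near-singularity region around $a$ to be harmless, since it contains only finitely many centres. I also need to confirm the identification of $\ind(P,F_a)$ with the Fredholm index of the compression, which I will take from \cite{avron1994index}.
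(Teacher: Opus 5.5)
Your proposal is correct and follows the paper's proof: a modified flux insertion diagonal in the Wannier basis with phases $\phi(x_\alpha)$, compactness of its difference with $PF_aP$ via the Lipschitz bound on the phase, no-clustering and the Schur test, and then stability of the Fredholm index of $PF_aP$ on $\Ran P$. The only difference is that the paper applies the Schur test directly to $K = F_a\Phi - \Phi D : l^2(\caA) \to \H{2}{n}$, whose $(x,\alpha)$-blocks are $(\phi(x)-\phi(x_\alpha))w_\alpha(x)$; this avoids your two-centre estimate on $\Phi^*F_a\Phi - \Theta$ and the $\epsilon$-loss, since $\abs{\phi(x)-\phi(c)} \leq 2\abs{x-c}/\abs{c-a}$ already gives a clean factor $\abs{x_\alpha - a}^{-1}$.
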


\begin{proof}
    The Chern number of $P$ can be expressed as a Fredholm index
    $$ \Ch(P) = \ind(P, F_a) = \ind_F( P F_a P : \Ran P \to \Ran P ), $$
    see for example \cite[~Theorem 5.2.]{avron1994index}.

    Let $\{ w_{\al} \}_{\al \in \caA}$ be a rapidly local Wannier basis for $P$ with constants $\{ C_N \}_{N \in \N}$ and localization centers $\{ x_{\al} \}_{\al \in \caA}$.

    For the remainder of this proof we identify $\R^2$ with $\C$ under the map $x \to x_1 + \iu x_2$.
    
    Since $\{ x_{\al} \}_{\al \in \caA}$ is a countable subset of $\R^2$ (by Lemma \ref{lem:no clustering}), we can take $a \in \C \setminus \Z^2$ so that none of the $x_{\al}$ is equal to $a$. Define a modified flux insertion operator $\widetilde F_a$ by
    $$
    \widetilde F_a w_{\al} = \phi( x_{\al}) w_{\al} \,\, \text{for all} \,\, \al \in \caA \quad \text{and} \quad \widetilde F_a P^{\perp} = 0,
    $$
    where
    $$
    \phi(x) = \frac{x - a }{\abs{x-a}}
    $$
    for any $x \neq a$. Then $\widetilde F_a^* \widetilde F_a = \widetilde F_a \widetilde F_a^* = P$. With this notation the original flux insertion operator is given by
    $$ F_a \delta_{x, i} = \phi(x) \delta_{x, i} $$
    for all $x \in \Z^2$ and $i = 1, \cdots, n$.

    Since $\widetilde F_a : \Ran P \to \Ran P$ is unitary, it is a Fredholm operator with vanishing Fredholm index. Since the Fredholm index is stable under compact perturbations, the theorem will be proven once we show that $P F_a P - \widetilde F_a$ is compact on $\Ran P$.

    The no clustering Lemma \ref{lem:no clustering} implies that the sums
    $$ M_1 := \sup_{x \in \Z^2} \sum_{\al \in \caA} \jb{ x - x_{\al} }^{-3} < \infty, \quad \text{and} \quad  M_2 := \sup_{c \in \R^2} \sum_{x \in \Z^2} \jb{c - x}^{-3} < \infty$$
    are uniformly bounded.

    Note the inequality
    \begin{equation} \label{eq:compactness inequality}
        \abs{ \phi(x) - \phi(c) } \leq \frac{2 \abs{ x - c }}{\abs{c - a}}
    \end{equation}
    for any $c, x \neq a$.

    Let $\Phi : l^2(\caA) \to \H{2}{n}$ be the isometry given by $\Phi e_{\al} = w_{\al}$ and let $D e_{\al} = \phi(x_{\al}) e_{\al}$ be the flux operator on $l^2(\caA)$. Then $\Phi^* \Phi = \I$, $\Phi \Phi^* = P$, and we can write the modified flux insertion operator as
    $$ \widetilde F_a = \Phi D \Phi^*. $$
    The operator $K = F_a \Phi - \Phi D : l^2(\caA) \to \H{2}{n}$ has matrix blocks
    $$ K(x, \al) = (\phi(x) - \phi(x_{\al}) ) w_{\al}(x).  $$
    Let $E_R$ be the projection onto $\Span \{ e_{\al} \,: \, \abs{x_{\al} - a} \leq R \}$. For $\al$ such that $\abs{x_{\al} - a} > R$ we can use Eq. \eqref{eq:compactness inequality} to bound the column sums by
    \begin{align*}
        \sum_{x \in \Z^2} \norm{ KE_R^{\perp}(x, \al) } &= \sum_{x \in \Z^2} \abs{ \phi(x) - \phi(x_{\al}) } \abs{ w_{\al}(x) } \\
        &\leq C_4 \sum_{x \in \Z^2} \frac{2 \abs{x - x_{\al}}}{\abs{x_{\al} - a}} \jb{x - x_{\al}}^{-4} \\
        &\leq \frac{2C_4 M_2}{R}.
    \end{align*}
    Similarly, the row sums are bounded as
    \begin{align*}
        \sum_{\al \in \caA} \norm{ K E_R^{\perp}(x, \al) } &= \sum_{\al : E_R e_{\al} = 0} \abs{ \phi(x) - \phi(x_{\al}) } \abs{ w_{\al}(x) } \\
        &\leq C_4 \sum_{\al : E_R e_{\al} = 0} \, \frac{2 \abs{x - x_{\al}}}{\abs{x_{\al} - a}} \jb{x - x_{\al}}^{-4} \\
        &\leq \frac{2 C_4 M_1}{R}.
    \end{align*}
    By the Schur test \ref{lem:Schur test} we conclude that
    $$\norm{ K E_R^{\perp} } \leq \frac{2 C_4 M}{R}$$
    with $M = \max\{ M_1, M_2 \}$. In other words, $K$ can be approximated arbitrarily well in norm by the compact operators $K E_R$, so $K$ is compact.

    We finally note that
    $$ PF_aP - \widetilde F_a = P K \Phi^* $$
    and conclude that $PF_aP - \widetilde F_a$ is compact as an operator on $\Ran P$, as required.
\end{proof}

\begin{remark}
    In all even dimensions $d$ there is a well-defined higher Chern number $\Ch_d(P) = \ind_F( (P \otimes \I) U^{(d)}_a (P \otimes \I) : \Ran (P \otimes \I) \to \Ran (P \otimes \I) )$ with $U_a^{(d)}$ the Dirac unitary which generalizes the flux insertion operators to arbitrary even dimensions. See for example \cite{prodan2016bulk, katsura2018noncommutative}. The proof strategy of Theorem \ref{thm:rapidly local Wannier basis implies Ch(P) = 0} can be easily adapted to prove that if a rapidly local projector $P$ on $\H{d}{n}$ admits a rapidly localized Wannier basis, then $\Ch_d(P) = 0$. Note that the no clustering Lemma \ref{lem:no clustering} holds in arbitrary dimension.
\end{remark}


\bibliographystyle{unsrturl}
\bibliography{bib}

\end{document}